\documentclass[11pt,a4paper]{article}
\usepackage{amsmath, amssymb, amsthm, amsfonts, latexsym, graphicx, color}
\usepackage{authblk,url,multirow,mathtools,ulem}


\newlength{\actualtopmargin}
\newlength{\actualsidemargin}
\setlength{\actualtopmargin}{2cm}
\setlength{\actualsidemargin}{2.0cm}
\setlength{\topmargin}{-1.0in}
  \addtolength{\topmargin}{-\headsep}
  \addtolength{\topmargin}{-\headheight}
  \addtolength{\topmargin}{\actualtopmargin}
\addtolength{\oddsidemargin}{-\evensidemargin}
  \setlength{\oddsidemargin}{0.35\oddsidemargin}
  \addtolength{\oddsidemargin}{\actualsidemargin}
  \addtolength{\oddsidemargin}{-1.0in}
\setlength{\evensidemargin}{-\oddsidemargin}
  \addtolength{\evensidemargin}{2\actualsidemargin}
  \addtolength{\evensidemargin}{-2.0in}
\setlength{\textheight}{\paperheight}
  \addtolength{\textheight}{-2\actualtopmargin}
\setlength{\textwidth}{\paperwidth}
  \addtolength{\textwidth}{-2\actualsidemargin}
  \theoremstyle{plain}
  \newtheorem{theorem}{Theorem}
  
  \newtheorem{lemma}[theorem]{Lemma}

  \theoremstyle{definition}
  \newtheorem{definition}[theorem]{Definition}

  \theoremstyle{remark}
  
  \theoremstyle{plain}
  \newtheorem*{theorem*}{Theorem}
  \newtheorem*{lemma*}{Lemma}
  \newtheorem*{corollary*}{Corollary}
  \newtheorem*{proposition*}{Proposition}
  \newtheorem*{claim*}{Claim}


\newcommand{\red}[1]{{\color{red}#1}}

\newcommand{\ii}{\mathbb{I}}

\newcommand{\poly}{\mathrm{poly}}

\newcommand{\oo}[1]{\Theta\left(#1\right)} 
\newcommand{\bigO}{O}
\newcommand{\bra}[1]{\langle #1 \vert}

\newcommand{\ket}[1]{\vert #1 \rangle}

\newcommand{\ketL}[1]{\left\vert #1 \right\rangle}
\newcommand{\ketbra}[1]{\vert #1 \rangle \langle #1 \vert}
\newcommand{\braket}[2]{\langle #1 \vert #2 \rangle}

\newcommand{\half}{\frac{1}{2}}

\newcommand{\Var}{\textrm{Var}}

\newcommand{\R}{\mathbb{R}}

\newcommand{\D}{\textrm{D}}
\newcommand{\M}{\textrm{M}}
\newcommand{\PF}{\textrm{PF}}
\newcommand{\PD}{\textrm{PD}}

\newcommand{\DirichletForm}{\mathcal{E}}
\newcommand{\II}{\mathbb{I}}
\newcommand{\Hir}{H_\textrm{ir}}
\newcommand{\SEQ}{\text{SEQ}}


\begin{document}
\title{\Large \textbf{
The pair-flip model: a very entangled translationally invariant spin chain
}}
\author[1]{Libor Caha\footnote{libor.caha@savba.sk}}
\author[1]{Daniel Nagaj\footnote{daniel.nagaj@savba.sk}}
\affil[1]{Research Center for Quantum Information, Institute of Physics, Slovak Academy of Sciences, D\'ubravsk\'a cesta 9, 845 11 Bratislava, Slovakia}

\maketitle
\vspace{-5mm}

 
\begin{abstract}
Investigating translationally invariant qudit spin chains with a low local dimension,
we ask what is the best possible tradeoff between 
the scaling of the entanglement entropy of a large block
and the inverse-polynomial scaling of the spectral gap.
Restricting ourselves to Hamiltonians with a ``rewriting'' interaction,
we find the {\em pair-flip model}, a family of spin chains
with nearest neighbor, translationally invariant, 
frustration-free interactions, with a very entangled ground state and an inverse-polynomial spectral gap. 
For a ground state in a particular invariant subspace, the entanglement entropy 
across a middle cut 
scales as log $n$ for qubits (it is equivalent to the XXX model), 
while for qutrits and higher, it scales as $\sqrt{n}$. 
Moreover, we conjecture that this particular ground state 
can be made unique by adding a small translationally-invariant 
perturbation that favors neighboring letter pairs, 
adding a small amount of frustration, while retaining the entropy scaling.

\end{abstract}

\section{Introduction}

Quantum many body systems described by simple, geometrically local quantum interactions in 1D, can display a wide array of behavior. One could use them to run universal quantum computation \cite{AGIK, Terhal}, or for simpler communication and transport tasks \cite{DanielBurgarth, Kay}, depending on the amount of control and design choices we have available. Sometimes, even though their description is simple and utility questionable, their classical simulation is very likely computationally difficult \cite{supremacyMontanaro, SchwarzSpeedup}.
On the other hand, the eigenstates or thermal states of quantum many body systems can exhibit interesting behavior -- error-correcting properties \cite{2017arXiv171004631B} or scaling of correlations \cite{scalingEisert} with respect to parameter changes.
Meanwhile, the difficulty of predicting and calculating their properties can range from classically easy \cite{2006quant.ph..2108B, 2016arXiv160208828A} through hard even with the use of quantum computers \cite{gscon, Quantum3SAT}, to uncomputable \cite{CPGWundecidable}, see for example the reviews \cite{QMAReviewBookatz, QHCReview}.

In this work, we focus on a simple subclass of this rich family of models -- translationally invariant spin chains with low local dimension and short-range interactions. Our goal is to discover and elucidate their properties, in particular the relationship of the scaling of the spectral gap with the size of the system and the possible behavior of ground state correlations (entanglement). 

The correlations of gapped systems in any dimension fall off exponentially \cite{expfalloff}. Moreover, gapped 1D systems always have ground states with only a little entanglement, obeying the {\em area law} \cite{1742-5468-2007-08-P08024}, which constrains the entropy of entanglement between two subsystems to a constant. This property makes them tractable on classical computers with heuristics like DMRG \cite{White92, White93}, or with the provably polynomial time algorithm of Landau et al. \cite{2013arXiv1307.5143L} or the recent rigorous renormalization group algorithms \cite{2016arXiv160208828A,2017arXiv170301994R}.
However, what happens when the area law does not hold, because the gap is not constant and closes as the system size grows? We are interested in the possible trade-off for the possibility of finding large quantum correlations in the ground state vs. having a large gap in the spectrum.
If the entanglement entropy of a block grows with the system size,
it might indicate the ground state is complex, interesting to analyze, and possibly hard to prepare.

On one hand, we know that systems with a very small (inverse exponential) gap can have lots of entanglement in their ground states. Simple examples (e.g. Latorre et al. \cite{1367-2630-12-11-113049}) have ground states that manifestly obey an entanglement {\em volume law}. However, we decide to investigate the more interesting, intermediate case, when the gap closes with growing system size only as an inverse polynomial. 

A system with a gap that closes as an inverse polynomial in the length of the chain reminds us of critical, frustrated systems. There, in 1D, we expect log-scaling of the entanglement entropy, as we have seen in the Ising and Heisenberg models \cite{LatorreRiera09, CalabreseCardy09}. Interestingly, we have found this type of behavior even for frustration-free systems \cite{CriticalityWithoutFrustration, MovaShor}. Even more surprisingly, when considering a larger local particle dimension, there can be exponentially more entanglement in the ground state: $\sqrt{N}$ vs. $\log{N}$. 
This motivates us to search for the best possible trade-off in the scaling of the entanglement entropy as a function of the inverse gap (which is itself inverse polynomial in $N$). In the process, we want to identify interesting qudit spin chains with surprisingly entangled ground states and no simple states with similarly low energy, reminiscent of the NLTS conjecture \cite{FreedmanHastingsNLTS,2013arXiv1309.7495A}.

Systems in 1D with an inverse polynomial spectral gap and highly entangled ground states naturally arise in the constructions of QMA-complete problems such as \cite{AGIK} and \cite{2013arXiv1312.1469H}. 
Aiming to encode universal quantum computation, 
these models have highly-tuned, position-dependent interactions. On the other hand, two models were designed specifically in order to exhibit high entanglement in the ground state. The construction of Irani \cite{2010JMP....51b2101I}
involves a 1D local Hamiltonian for particles with dimensions $d=21$ \cite{2010JMP....51b2101I} whose ground states correspond to sequentially creating EPR pairs and distributing them along a block of chain, with linear entanglement entropy. This entropy is related to the gap of the chain as $S\propto \Delta^{-1/12}$. The second model 
by Gottesman and Hastings 
\cite{1367-2630-12-2-025002} uses qudits with local dimension $d=9$, and ``distributes'' EPR pairs throughout blocks of qudits using a ``synchronized wheels'' construction. This construction reaches the scaling $S\propto \Delta^{-1/4}$. The drawbacks for these two models are mainly a large particle (qudit) dimension and a block-like structure that is not naturally translationally invariant -- fixing this requires a large increase in the local dimension. We thus ask: can one demonstrate a similar behavior of correlations in models that involve low-dimensional qudits? 

We restrict ourselves to models described by Hamiltonians $H=\sum_j H_{j,j+1}$, made from local terms $H_{j,j+1}$ acting nontrivially only on nearest neighbor particles on a line. We decide to focus on systems with a small local dimension $(d = 2,3,\dots)$. How much entanglement could the ground states of such chains possess, and how do their spectral gaps scale? We have seen that random models for $d=4$ can have very entangled ground states, but we don't know much about their gaps \cite{MovassaghRandom, gilyen_preparing_2017}. Our approach is then to construct particular models that we can understand.

For qubits ($d=2$), frustration-free Local Hamiltonians always have a ground state in the form of a product state of one qubit and two qubit states \cite{2006quant.ph..2108B, 2011PhRvA..84d2338J, 2011PhRvA..83e0301C}.
For qutrits ($d=3$), the AKLT model with a 2-dimensional MPS description of the ground state is well known. However, much more entanglement can hide in such frustration-free chains. In Bravyi et al. \cite{CriticalityWithoutFrustration}, we invented the translationally invariant, qutrit {\em Motzkin spin chain}, discussed in more detail in Section~\ref{sec:motzkinreview}. Its ground state is a uniform superposition over all Motzkin paths corresponding to well bracketed words with spaces (e.g. $\ket{\psi}=\frac{1}{\sqrt{9}}(\ket{0000}+\ket{0012}+\ket{0102}+\ket{1002}+\ket{1020}+ \ket{1200}+\ket{0120}+\ket{1212}+\ket{1122})$ for $N=4$, interpreting each ``$1$'' as a left bracket, ``$2$'' as a right bracket, and ``$0$'' as an empty space).
This model has Schmidt rank $\chi=\frac{N}{2}+1$, entanglement entropy $\Theta(\log N)$, and an inverse polynomial spectral gap $\Delta= N^{-\Theta(1)}$ (numerically, we find it is actually close to $N^{-3}$). 
Note that this ground state becomes unique only after we add boundary terms which raise the energy of badly-bracketed states. 

\subsection{Review of related work}
Since the Motzkin spin chain, a rich collection of new, related results about translationally invariant spin chains with very entangled ground states has appeared. First, Movassagh and Shor \cite{MovaShor} have extended and analyzed the Motzkin chain (bracket model) for particles with dimensions $d=2s+1$, viewing it as having $s$ different species/colors of brackets. For $d\geq 5$, this model has a ground state with an exponential Schmidt rank and entanglement entropy which grows as $S=\Theta(\sqrt{N})$. This model has an inverse polynomial spectral gap $\Delta= N^{-\Theta(1)}$. Furthermore, they found a way to get rid of the boundary terms and utilize a translationally invariant external field instead. However, this model is then no longer frustration-free.

Second, Salberger and Korepin \cite{FredkinSpinChain} 
discovered the {\em Fredkin spin chain} family of half-integer spin models, which we discuss in more detail in Section~\ref{sec:fredkinreview}.
Each of them is a next-nearest neighbor model (3-local interaction) whose ground state is a uniform superposition of all possible Dyck paths -- well bracketed words without spaces (e.g. the ground state for $N=6$ is $\ket{\psi}=\frac{1}{\sqrt{5}}(\ket{121212}+\ket{112212}+\ket{121122}+\ket{112122}+\ket{111222})$, interpreting the state ``$1$'' as a left bracket and the state ``$2$'' as a right bracket). This model has $\Theta(\log N)$ entanglement entropy for uncolored case (spin-$\half$) and $\Theta(\sqrt{N})$ for the case with several bracket colors (spin-$\frac{3}{2}$ and higher). Movassagh \cite{MovFSCGap} has then shown that the spectral gap scales as $\Delta=N^{-\Theta(1)}$.
Note that this gap scaling rules out the description of these models by relativistic conformal field theory
\cite{CFT}.

Another question about these models is the ground state magnetization and two point correlation functions, as well as von Neumann and R\'enyi entropies for any partition. 
These calculations were done for the Motzkin chain by 
Movassagh \cite{MovassaghCorrelations} and later extended for the colored case and the Fredkin chain in \cite{ViolationOfCD}.
Next, Dell'Anna et al. \cite{ViolationOfCD} showed violation of the cluster decomposition property (CDP) for colored Motzkin and Fredkin spin chains, in contrast with the uncolored case with a light-cone-like propagation.
Meanwhile, Brand\~{a}o et al. \cite{2017arXiv171004631B} observed that the eigenstates of translationally invariant spin chains can have approximate quantum error correcting code properties -- with the Motzkin spin chain as one of their examples.

Finally, there is a collection of recent results about deformed variants of the Motzkin and Fredkin chains. Zhang, Ahmadain and Klich \cite{2016arXiv160607795Z} provided a parametrized version of the Motzkin spin chain Hamiltonian with an area weighted deformation. The introduced parameter $t$ leads to a ground state which is a superposition of Motzkin paths weighted by $t$ to the area under the particular path. There is a phase transition in entanglement entropy scaling going from bounded to logarithmic and back to bounded, for spin-1 chains, and from bounded through square root to extensive scaling for chains with spin $\geq 2$. A similar result was found for Fredkin spin chains by Salberger et al. \cite{DeformedFredkinSpinChain}. Levin and Movassagh \cite{LevineMovassagh} showed an inverse exponential upper bound on the energy gap of the area weighted Motzkin spin chain with extensive entropy (similar result also appeared as an example in \cite{2017arXiv170310133C}). The same result was achieved also for weighted Fredkin spin chain by Udagawa and Katasura \cite{UdagawaKatasuraGapMagnDefFSC}. Herein, the authors also studied magnetization and von Neumann and R\'enyi entropy and entanglement spectrum for any bipartition. Zhang and Klich \cite{2017arXiv170203581Z} provided a multi-parameter deformation of Fredkin spin chain with the deformation depending on the position in the chain (non translationally invariant) and a condition under which its ground state remains frustration free. When one takes equal parameters along the chain, this deformation collapses to the weighted Fredkin spin chain \cite{DeformedFredkinSpinChain}. They also provided a different calculation of the entanglement entropy and an inverse exponential upper-bound on the energy gap for $t>1$. Sugino and Padmanabhan then \cite{2017arXiv171010426S} invented a different parametrization of the uncolored Motzkin chain based on decorating the Motzkin paths with elements of symmetric inverse semigroups and demonstrated systems with phase transitions of the entanglement entropy scaling from logarithmic to bounded, as well as from logarithmic to square root. Later Sugino, Padmanabhan, and Korepin \cite{2018arXiv180400978P} used an analogous parametrization and demonstrated a similar behavior for spin-$\half$ Fredkin chain, and promised the parametrization also for the colored case \cite{PadmanabhanSuginoKorepinInPreparation}.
Barbiero, et al. \cite{PhysRevB.96.180404} found that the area weighted Motzkin spin chains with deformation parameter $t<1$ exhibit gapped Haldane topological order ($SU(2)$ for the uncolored and $SO(5)$ for the colored case, occurring in the $s=1$ and $s=2$ AKLT model, repsectively). Their numerical calculations signal a Berezinskii-Kosterlitz-Thouless phase transition at $t=1$.
Chen, Fradkin and Witczak-Krempa  \cite{1751-8121-50-46-464002, PhysRevB.96.180402}, used DMRG to find multiple dynamical exponents (the $1/\poly$ degree of the energy gaps) of low lying excitations for the uncolored Motzkin and Fredkin spin chains, indicating that these models have multiple dynamics. They express the ground state in the continuum limit and show that the mutual information between two disjoint intervals deep inside the bulk tends to zero, which is in contrast with $1+1$ dimensional CFT systems and found an emerging $1+0$ dimensional conformal-type symmetry, in the sense of conformal quantum mechanics. However, these models are not described by relativistic CFT as the dynamical exponent is lower-bounded by 2 \cite{MovaShor,MovFSCGap}. Very recently, Adhikari \& Beach \cite{AdhikariBeachDeformingAwayFromFF} proposed and studied a generalized spin chain model that interpolates between the ferro and antiferromagnetic quantum Heisenberg models and includes the uncolored Fredkin spin chain as a special tuning point. They investigated its phase diagram numerically and semi-analytically and found multiple phase transitions, also observing that the uncolored Fredkin spin chain turns out to be unstable with respect to antiferromagnetic frustration.

\subsection{Our contribution and organization of the paper}
In this paper, we present a family of translationally invariant {\em pair-flip spin chains} that goes beyond the Motzkin and Fredkin spin chains in terms of the entanglement entropy vs. gap trade-off, with only nearest-neighbor interactions and smaller local dimension. There is one drawback -- to break the inherent degeneracy of the ground state, we need to add a small term that effectively counts the average number of adjacent identical particle pairs on the chain, which means the model is no longer frustration free.
However, as an upshot of this, we conjecture that our model retains its properties also for periodic boundary conditions. 

First, we present an introduction to rewriting Hamiltonians in Section~\ref{sec:results},
and review our inspiration -- the Motzkin chain (Section~\ref{sec:motzkinreview}) and the Fredkin chain (Section~\ref{sec:fredkinreview}), providing our own closed-form combinatorial results and precise asymptotic scaling, with proofs in Appendix~\ref{sec:technical}.
Then in Section~\ref{sec:pf} we present our new pair-flip (PF) model family and its entanglement and gap properties, with proofs for the $d=2$ qubit model in Section~\ref{sec:pf2} and the $d\geq 3$ family in Section~\ref{sec:pf3}.

\section{Our results: very entangled spin chains from rewriting Hamiltonians}
\label{sec:results}

In this paper, we propose and analyze the pair-flip (PF) model: a family of nearest-neighbor translationally invariant 1D models with particles of dimension $d\geq 2$. Each model has a unique, highly entangled ground state. For $d=2$, the entanglement entropy scales as $\log N$, reminiscent of critical systems in 1D, and similar to the $d=3$ Motzkin spin chain.
Moreover, this model has a mapping to the XXZ model and provides another viewpoint for its analysis. However, things get much more interesting for local dimension $d\geq 3$. There, we prove the ground states have an exponential Schmidt rank and square root entanglement entropy, while keeping an inverse polynomial spectral gap. 

Our main result is achieving this $\sqrt{N}$ entanglement entropy scaling in the unique ground state of a qutrit ($d=3$) translationally invariant Hamiltonian, with a better (numerical) gap scaling and local dimension smaller than the Motzkin spin chain ($d\geq 5$), and the next-nearest-neighbor Fredkin spin chain ($d\geq 4$). 
We prove the basic properties of the PF model using a combination of analytic combinatorics, random walks, and graph theory. We also investigate some of the model's properties numerically. 

The interactions in our models come from the class of {\em rewriting Hamiltonians}.
When we label the local basis states of a spin chain by letters $A,B, \dots$, we call a projector term acting nontrivially on two neighboring particles, with the form 
\begin{align}
	\ii\otimes \frac{1}{2} \left(\ket{AB}-\ket{CD}\right)\left(\bra{AB}-\bra{CD}\right)_{i,i+1} \otimes \ii,
\end{align}
a {\em rewriting interaction}.
We call a Hamiltonian built from such interactions a {\em rewriting Hamiltonian}. Its terms connect the computational basis states $\ket{\dots AB \dots}$ and $\ket{\dots CD \dots}$ by a transition.
We can also view this as a {\em rewriting rule} $AB \leftrightarrow CD$ connecting the strings $\dots AB \dots$ and $\dots CD \dots$.
These interactions divide the Hilbert space of the whole chain into easily identifiable invariant subspaces with a beautiful structure. 
Each subspace is spanned by computational basis states labeled by words connected by the rewriting rules. 
Because the rewriting projectors energetically prefer uniform superpositions of the states $\ket{\dots AB \dots}$ and $\ket{\dots CD \dots}$,
it turns out there is a unique, zero-energy ground state in each invariant subspace: the uniform-superposition of computational basis states (words) from that subspace. For example, for the rewriting rule $01 \leftrightarrow 10$, one of the invariant subspaces is spanned by strings with a single 1, and the uniform superposition of such states (the W-state) is a zero-energy ground state. Moreover, the set of words that span a subspace can often be easily identifiable -- in our cases by a pushdown automaton, as we will see in Section~\ref{sec:pf}.
However, this is not easy in general, since the {\em word problem} of Thue systems \cite{ThomasThue}, with rules that do not change word length, is PSPACE-complete and undecidable for arbitrary rules. 

Rewriting Hamiltonians are also a tool for quantum complexity, used widely in QMA-hardness constructions. To see how they are related to the so-called quantum Thue systems, tiling models and translationally invariant Hamiltonians whose ground states encode answers to QMA$_{\textrm{EXP}}$-complete problems, see e.g. \cite{2016arXiv160501718B}.

Our pair-flip (PF) model Hamiltonian starts as a translationally invariant rewriting Hamiltonian with the structure described above.
As such, it has a degenerate ground state, with one for each invariant subspace. Previous work (Motzkin, Fredkin spin chains) has dealt with this degeneracy by adding non-translationaly invariant projector terms at the chain boundaries.
In contrast, our models do not have/need boundary conditions. 
However, as we also desire a model with a unique ground state, we have to ensure only one of the ground states -- the uniform superposition of words from one subspace -- is energetically preferred. For this, we add a local perturbation term, energetically preferring the states with a highest average number of neighboring identical letter pairs. Furthermore, we can do this with only a negligible disturbance to the original eigenstate, keeping its high entanglement. The price we pay for this is that our systems are no longer frustration free, and that the gap becomes smaller, governed by the perturbation-induced energy splitting between the subspaces.

\subsection{The Motzkin spin chain: a review}
\label{sec:motzkinreview}

Let us now summarize the previous results on translationally invariant, low-local dimension (qudit) chains with rewriting Hamiltonians,
so that we can compare them to our new pair-flip (PF) model in Section~\ref{sec:pf}.
We also sketch the techniques that will be useful for the analysis of the PF model.
The reader can find a detailed derivation of some the results discussed below in Appendix~\ref{sec:technical}, as well as in the papers \cite{CriticalityWithoutFrustration,MovaShor,FredkinSpinChain,ViolationOfCD}.

We start with the basic properties of the Motzkin spin chain (also called the bracket model), 
an example of a low local dimension ($d=5$) translationally invariant, frustration-free spin chain with a unique, very entangled ground state and an inverse polynomial gap.
We also present some of the combinatorial, graph-theoretic, and perturbation techniques used to find its properties. These will be useful for the analysis of our pair-flip model family in Section \ref{sec:pf}.

The original spin-1 ($d=3$) Motzkin chain was introduced in \cite{CriticalityWithoutFrustration}. 
It is a {\em rewriting Hamiltonian} with the 3-letter alphabet $L,R,0$ (left bracket, right bracket, empty space), and three rewriting rules:
\begin{align}
	00 \leftrightarrow LR, 
	\qquad 0L \leftrightarrow L0, 
	\qquad 0R \leftrightarrow R0.
\end{align}
Observe that applying these rewriting rules to the all-0 string, one can obtain all well-bracketed words (with empty spaces), e.g. $LLRR$ or $0L0R$. 
As described in Section~\ref{sec:results}, the rewriting rules translate to projector terms in the Hamiltonian
for a qutrit ($d=3$) spin chain, with basis states $\ket{0},\ket{L},\ket{R}$:
\begin{align}
	H_{\textrm{create}} &= \frac{1}{2} \sum_{i=1}^{N-1}\ket{00-LR}\bra{00-LR}_{i,i+1}, \label{Hcre}\\
	H_{\textrm{move}} &= \frac{1}{2} \sum_{i=1}^{N-1} \left( \ket{0L-L0}\bra{0L-L0} + \ket{0R-R0}\bra{0R-R0}\right)_{i,i+1}. \label{Hmov}
\end{align}
We can interpret $\ket{L}, \ket{R}$ as particles that hop on this chain and can appear and disapper in pairs.
Note that the ``empty chain'' state $\ket{0\cdots 0}$ is connected by this Hamiltonian only to other well-bracketed words (e.g. $\ket{LR0L0R}$). In fact, the Hilbert space splits into invariant subspaces classified by an {\em irreducible word} -- the string of $L$'s and $R$'s remaining after we erase all matching bracket pairs from a computational basis state.

The whole Motzkin chain Hamiltonian for a qutrit chain of length $N$ is
\begin{align}
	H^{\textrm{Motzkin}}_{\textrm{+ends}} &= H_{\textrm{create}} + H_{\textrm{move}} + H_{\textrm{ends}}, \label{HLRend}
	\end{align}
and includes an additional non-translationally invariant endpoint term
\begin{align}
	H_{\textrm{ends}} &= \ket{R}\bra{R}_1 + \ket{L}\bra{L}_N,
\label{Hend}
\end{align}
energetically penalizing states $\ket{R\dots}$ that begin with a closing bracket or $\ket{\dots L}$ that end with an open bracket. In combination with the terms $H_{\textrm{move}}$ and $H_{\textrm{create}}$, it also penalizes states that are connected by the transitions in the Hamiltonian to states that begin/end with a bad bracket. Altogether, there remains only a single, unique, frustration-free ground state: 
the uniform superposition of all computational basis states labeled by well-bracketed words.

This unique ground state exhibits a logarithmic scaling of half-chain entanglement entropy with system size, typical for critical systems in 1D.
Moreover, this Hamiltonian built from projectors is frustration-free (all its terms kill the ground state). Later, Movassagh \cite{MovassaghCorrelations} also presented a spin-matrix language formulation of this model and calculated its various correlation functions. 

The model's straightforward generalization introduces several bracket species/colors \cite{MovaShor} (colored bracket model/ colored Motzkin spin chain), and an alphabet $\{0,L_1,R_1,\dots,L_s,R_s\}$. For $s$ bracket colors, the spin chain now has local particle dimension $d=1+2s$. The possibility of different bracket species greatly increase the complexity of the ground state, so that its Schmidt rank (cutting the chain in half) grows exponentially, while the entropy grows as $\oo{\sqrt{N}}$ \eqref{Smotzkin}.

Let us look at the properties of the qutrit Motzkin spin chain \cite{CriticalityWithoutFrustration} in more detail, as the techniques we use in analyzing our PF model are quite related.

\vskip3pt\noindent \textbf{Schmidt decomposition and entanglement entropy scaling.}

The computational basis states for the Motzkin spin chain are Motzkin paths, illustrated in Figure~\ref{fig:dyckmotzkinpf} and discussed in more detail in Appendix~\ref{sec:technical}.

\begin{figure}[h]
	\begin{center}
		\includegraphics[width=10cm]{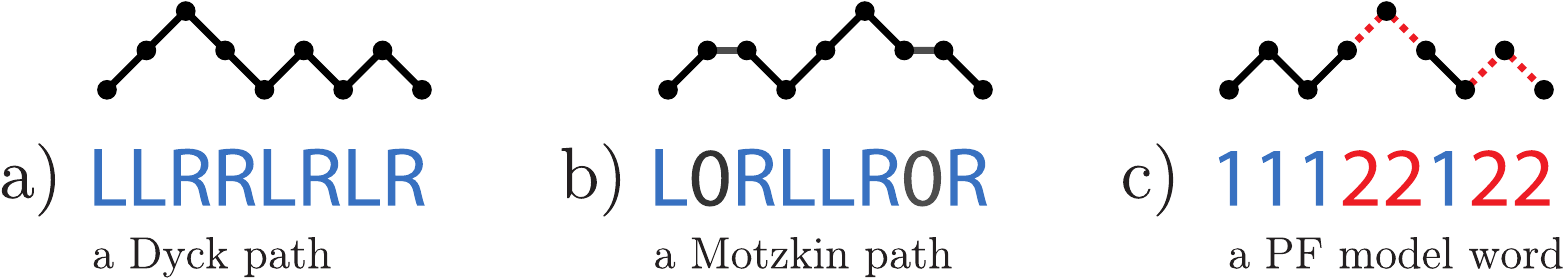}
		\caption{Examples of a) a Dyck path, also seen as a well-bracketed word ``[[]][][]'', b) a Motzkin path, also a well-bracketed word with spaces ``[$\cdot$][[]$\cdot$]'', and c) a pair-flip model fully reducible word ``11122122''.
		}\label{fig:dyckmotzkinpf}
	\end{center}
\end{figure}

The Motzkin chain ground state $\ket{\psi}$ is the uniform superposition of well bracketed words (with spaces). For a chain of length $N=2n$, any well bracketed word with spaces (Motzkin path) $w$ can be split in half as $uv$ with a word $u$ of length $n$ that has some number $r$ of extra left brackets, and a word $v$ of length $n$ that has $r$ extra right brackets. 
Thus, we can write the ground state in a Schmidt decomposition form as
\begin{align}
	\ket{\M_{n:n}} = \frac{1}{\sqrt{M_N}}\sum_{w \in \M_n} \ket{w}_{1,\dots,N}
	= 
	\sum_{r=0}^n
	\underbrace{\frac{M_{n}^{r}}{\sqrt{M_N}}}_{\lambda_r}
	\underbrace{\frac{1}{\sqrt{M_{n}^{r}}}\left(\sum_{u \,\rightarrow \,r \textrm{ ('s}} \ket{u}_{1,\dots,n}
	\right)}_{\ket{\psi_{r,0}}}
	\underbrace{\frac{1}{\sqrt{M_{n}^{r}}}\left(\sum_{v \,\rightarrow \,r\textrm{ )'s}} \ket{v}_{n+1,\dots,2n}
	\right)}_{\ket{\psi_{0,r}}}, \label{schmidtdecomp}
\end{align}
where $M_N$ is the number of Motzkin paths of length $N$ from \eqref{Mn},
$M_n^{r}$ the number of Motzkin paths of length $n$ ending at level $r$ from \eqref{Mnk},
the first sum is over all well-bracketed Motzkin paths $w$ of length $N=2n$,
and the second term involves sums over Motzkin paths $u$ of length $n$ with $r$ extra left brackets and Motzkin paths $v$ with $r$ extra right brackets.
The state $\ket{\psi_{r,0}}$ is the normalized uniform superposition of all Motzkin paths of length $n$ with $r$ extra left brackets and 0 extra right brackets.
The Schmidt coefficients across the half-chain decomposition for this state are thus 
\begin{align}
	\lambda_r = \frac{M_{n}^{r}}{\sqrt{M_N}},
\end{align}
for $r=0,\dots,n$. The Schmidt rank is polynomial in $N$, but what is more interesting is how the $\lambda_r$'s fall with growing $r$. 
It turns out $\oo{\sqrt{N}}$ of them are relevant, and that when we calculate the entanglement entropy for this decomposition, we obtain
\begin{align}
	S\left(\ket{\M_{n:n}}\right) = - \sum_r \lambda_r^2 \log_2 \lambda_r^2 = \frac{1}{2}\log_2 N + \bigO(1)
	\quad\textrm{bits}, 
\end{align}
a logarithmic scaling with respect to the chain length $N$,
similar to critical systems in 1D. Surprisingly, here it happens for a frustration-free system.

\vskip3pt\noindent \textbf{Colored brackets.}
This model is easily generalizable to multiple types/colors of brackets $L_1,R_1, \dots, L_s,R_s$. Movassagh and Shor \cite{MovaShor} have calculated the properties of this model, and showed that already for $d=5$ with two bracket types: $L_1,R_1$ and $L_2,R_2$, which we can interpret as (, ), and \red{[}, \red{]}, plus the empty symbol, the 
Schmidt rank grows exponentially, as there are many ways to have extra brackets in the left part of the chain -- e.g. ((, (\red{[}, \red{[}(, \red{[[}, etc. -- and these have to be matched by extra brackets in the right part of the chain.
This means each Schmidt coefficient $\lambda_r$ in \eqref{schmidtdecomp}
is repeated $s^r$ times, as we have that many possibilities of coloring the extra brackets.
An analysis of the Schmidt coefficients results in entanglement entropy scaling as
\begin{align}
	S\left(\ket{\M^{(s)}_{n:n}}\right) = 2\log_2(s)\,\sqrt{\frac{\sigma N}{\pi}}+\frac{1}{2}\log_2(\sigma N) + \bigO(1) 
	\quad\textrm{bits},
	\label{Smotzkin}
\end{align}
where $\sigma=\frac{\sqrt{s}}{2\sqrt{s}+1}$.
For the system to have a unique ground state, another set of projector terms is required, penalizing mismatched bracket pairs, e.g. $L_1 R_2$, i.e. $(\,\red{]}$. If such brackets appear in a word on a neighboring pair, it is obvious that this word is not well-bracketed. On the other hand, words that are not well bracketed are connected by the rewriting rules to other words with either bad brackets at the endpoints, or badly matched pairs of brackets somewhere inside the chain.
This way, we again find a unique ground state, the uniform superposition of well bracketed words.

\vskip3pt\noindent \textbf{The boundary and translational invariance.}
One might wish to remove the requirement for endpoint projectors like $H_{\textrm{ends}}$ in \eqref{Hend} which introduce an energy split between the degenerate ground states of the bracket Hamiltonian. Such terms break translational invariance. However, without them, there is a ground state with 0 energy in each subspace -- the uniform superposition of all well-bracketed words, the uniform superposition of all words with 1 extra left bracket, etc., up to the ``boring'' product state $\ket{LL\cdots L}$. 
However, there is a way to break this degeneracy in a translationally invariant way by introducing a little bit of frustration -- a small cost per bracket. 

Movassagh and Shor \cite{MovaShor} claim\footnote{
Relying on our exact calculation and extensive numerical investigation, we believe the resulting scaling in \cite{MovaShor} is correct. However, we believe one needs much more care in the approximations of binomials by exponentials as some of the disregarded error terms might become relevant and ruin the calculation.
In particular, even for small $m$, we find that the prefactors as well as constant terms in \cite{MovaShor} are off.
}
 that the average number of brackets in a uniform superposition with $m$ extra brackets on a chain of length $N$ grows with $m$ proportionally to $m^2/N$.
Thus, it is the lowest in the uniform superposition of states from the subspace connected to the ``empty'' state with no extra brackets. Thus, we can look at a modified Hamiltonian, where a bracket costs something:
\begin{align}
	H^{\textrm{Motzkin}}_{\delta} &= H_{\textrm{create}} + H_{\textrm{move}} + \delta \underbrace{\sum_{i=1}^{N} \left( \ket{L}\bra{L} + \ket{R}\bra{R}\right)_i}_{H_{\textrm{cost}}},
	\label{Hcost}
\end{align}
with $H_{\textrm{create}}$ and $H_{\textrm{move}}$ from \eqref{Hcre} and \eqref{Hmov}. The additional term $H_{\textrm{cost}}$ can be viewed as a global field that energetically penalizes the bracket states $\ket{L}$, $\ket{R}$, and prefers the ``empty'' state $\ket{0}$.
Choosing a small, well tuned cost $\delta = 1/\textrm{poly}(N)$ then allows one to select a unique ground state -- a state close to the uniform superposition of the well-bracketed Motzkin paths. It requires treating the term $\delta H_{\textrm{cost}}$ as a perturbation, for which the gap $\Delta_k$ in each invariant subspace must be at least an inverse polynomial in $N$, and $\delta \|H_{\textrm{cost}}\| \ll \Delta_{k}$ for all $k$.
A small $\delta$ naturally results in the new ground state very much resembling the uniform superposition, retaining its entropy properties. However, the degree of the inverse polynomial in the gap of the new Hamiltonian increases. 

Yet another way to remove the boundary terms exists, requiring an increase of the local dimension by 1, due to Gottesman and Irani \cite{v009a002} and mentioned e.g. in Bausch et al. \cite{2016arXiv160501718B}.
The trick is to introduce a special particle type $X$, and terms that are tuned so that exactly two $X$ particles appear, while the most favorable place for them is on the ends of the chain. We can then use an interaction of a ``bad'' bracket with the $X$as a substitute for endpoint terms. Of course, the extra price for this construction, besides the local dimension increase, is a decrease in the gap (when keeping the norm of the Hamiltonian), while the model also becomes frustrated.

We will now continue with the review of rewriting Hamiltonians and look at the Fredkin spin chain with next-nearest interactions, whose states can be viewed as Dyck paths.
At the end of the following Section, we present an exact calculation underlying another way to remove the boundary-terms for the Fredkin spin chain. We will break the energy degeneracy between ground states from different subspaces by counting the average number of peaks (see Fig.~\ref{fig:dyckmotzkinpf}a), i.e. 
$L_iR_i$ pairs on neighboring particles.


\subsection{The Fredkin spin chain and other related models}
\label{sec:fredkinreview}

Afer the Motzkin chain, the {\em Fredkin spin chain} family for all half integer spins was introduced by \cite{FredkinSpinChain,ViolationOfCD}. The authors remove the need for the extra local dimension corresponding to background empty spaces ($0$'s) by extending the range of the interactions to next nearest neighbors. They devise clever rewriting rules involving three neighboring sites:
\begin{align}
	LLR \leftrightarrow LRL, \qquad LRR \leftrightarrow RLR, \label{Fredkinrules}
\end{align}
reminiscent of the reversible Fredkin (controlled-swap) gate \cite[eq. 4]{FredkinSpinChain}: $110 \leftrightarrow 101$.

\begin{figure}[!h]
	\begin{center}
		\includegraphics{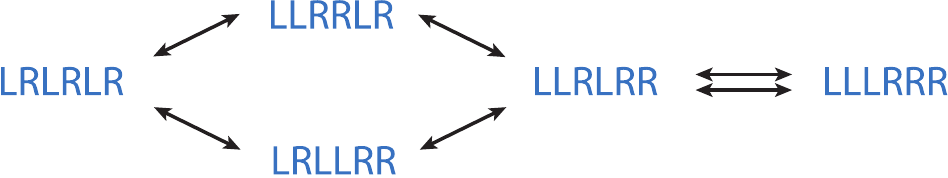}
		\caption{The Fredkin spin chain: the transition (multi)graph of well bracketed words induced by the rewriting rules  
		\eqref{Fredkinrules}, for $N=6$.
		}\label{fig:FredkinTransitionGraph}
	\end{center}
\end{figure}

It turns out that these rewriting rules connect all well bracketed words,
as illustrated in Figure~\ref{fig:FredkinTransitionGraph}.
The model also includes endpoint terms punishing wrongly open brackets, i.e. states $\ket{R\cdots}$ and $\ket{\cdots L}$.
The corresponding next-nearest-neighbor rewriting Hamiltonian with endpoint projectors thus has a unique ground state: the uniform superposition of all well bracketed words, also known as Dyck paths (see Fig.\ref{fig:dyckmotzkinpf}a). 
This model also has a colored version with several bracket types, with rewriting rules moving any bracket through a matching bracket pair (a peak): $L_jR_jL_k\leftrightarrow L_kL_jR_j$ and $L_jR_jR_k\leftrightarrow R_kL_jR_j$, changing the type of a matched bracket pair $L_jR_j\leftrightarrow L_kR_k$, together with boundary conditions \eqref{Hend} and a penalizing term for ``matching'' wrong bracket types as in Section~\ref{sec:motzkinreview}.
The transitions in the Hamiltonian keep the number of extra brackets intact, and thus the Hilbert space again splits into invariant subspaces.
The local dimension is $d=2s$ for $s$ types of brackets.
For qubits (one bracket type) the Fredkin chain Hamiltonian 
is $H^{\textrm{Fredkin}}+H_{\textrm{ends}}$, with
\begin{align}
	H^{\textrm{Fredkin}}=&\frac{1}{2}\sum_{i=1}^{n-2}\left(\ketbra{LRL-LLR}+\ketbra{LRR-RLR}\right)_{i,i+1,i+2},\\
	H_{\textrm{ends}} &= \ket{R}\bra{R}_1 + \ket{L}\bra{L}_n.
\end{align}

Similarly as for the $d=3$ Motzkin chain, the entanglement entropy of the uncolored Fredkin chain (a single type of brackets means $s=1$ and local dimension $d=2$) scales as $S\left(\ket{\D_{n:n}}\right)=\frac{1}{2}\log_2 N + \bigO(1)$. Adding more colors ($s\geq 2$), the entanglement entropy starts to grow as $S(\ket{\D^{(s)}_{n:n}})=S\left(\ket{\D_{n:n}}\right)+2\log_2(s)\, \sqrt{\frac{N}{2\pi}}+\bigO(1)$.
For this, we need local dimension $d=2s \geq 4$. Later, Movassagh \cite{MovFSCGap} proved that this model also has an inverse polynomial spectral gap.

The uniqueness of the ground state is again ensured by the non-translationally invariant boundary terms. Nevertheless, we can remove this requirement. We have mentioned at the end of the last Section that counting particles can break the ground state degeneracy. Here, we invent a different approach: preferring peaks, i.e. bracket pairs $LR$ on neighboring spins (the peaks in Figure~\ref{fig:dyckmotzkinpf}a).
This is one of the original contributions of this paper.

\begin{figure}[h]
	\begin{center}
		\includegraphics[width=14cm]{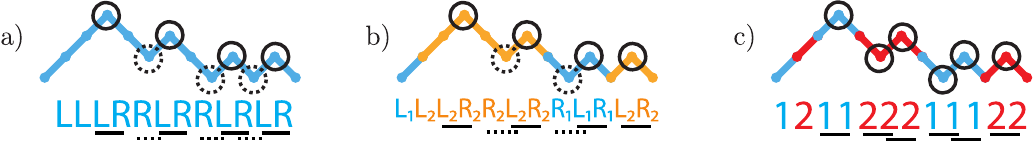}
		\caption{a) The peaks (full circles) and valleys (dotted circles) in Dyck paths/bracket words. a) The peaks and valleys in multicolored Dyck paths. To count as a valley, both its sides must have the same color. c) The pairs in PF model words.}\label{fig:peakcount}
	\end{center}
\end{figure}

We claim that the Hamiltonian
\begin{align}
	H^{\textrm{Fredkin}} - \delta \sum_{i=1}^{N-1} \sum_{j=1}^{s} \ket{L_j R_j}\bra{L_j R_j}
	\label{fredkinperturb}
\end{align}
for a carefully chosen $\delta = 1/\textrm{poly}(N)$ has a unique ground state well approximated by the uniform superposition of well-bracketed words.
The same holds for the colored Fredkin spin chain.
To prove this, we exactly calculate the average number of peaks 
(letters $LR$ on neighboring sites, see Figure~\ref{fig:peakcount}a)
in uniform superpositions of Fredkin spin chain words (Dyck paths) of length $N$, in each invariant subspace with $k$ extra brackets.
In Appendix~\ref{sec:peakcount} \eqref{pnkfullcolor}, we find that this number is  
\begin{align}
	p_{n,k} = \frac{N+2}{4} - \frac{k(k+2)}{4N},
\end{align}
for even $N$, decreasing with $k$, i.e. it is the largest for $k=0$.
Let us then treat the peak-counting term as a perturbation. Note that it does not affect the invariant subspace structure.
With a bit of work on top of \cite{MovFSCGap} we can show that the Fredkin Hamiltonian has an inverse-polynomial gap in each of its invariant subspaces. 
Let us choose a small $\delta= 1/\textrm{poly}(N)$ in \eqref{fredkinperturb}.
The first order in $\delta$ energy shift of 
each former ground state 
with $k$ extra brackets 
depends on the expectation value 
\begin{align}
 - \delta \bra{\psi_k} \left(\sum_{i=1}^{N-1}  \sum_{j=1}^{s} \ket{L_j R_j}\bra{L_jR_j}\right)\ket{\psi_k} = 
\frac{\textrm{total peaks}}{\textrm{all words}} = - \delta p_{n,k}, \label{pertE}
\end{align}
which is the lowest (negative) for $k=0$, and then increases with increasing $k$ \eqref{pnkfullcolor}.
Thus, we can use this perturbation to remove the need for boundary terms in the Fredkin spin chain.
Furthermore, the new ground state is the old uniform superposition of well-bracketed words, with a first-order correction in $\delta$. 
The difference from the uniform amplitudes must be small, otherwise the energy of the state would be far from \eqref{pertE}.
The amplitude of a particular word depends on the number of its peaks. However, the distribution of the number of peaks for words with a reasonable number of extra brackets is tightly centered about the average $p_{n,k}$. Therefore, for a small inverse polynomial $\delta$, the correction to the ground state can be so small, that the corrections to the Schmidt coefficients do not change the resulting entropy scaling. However, as of now, we do not yet have an exact statement of this and leave the robustness of the entropy scaling as an open direction for further research. 

Now we can finally introduce our the pair-flip model, which has very similar properties, but already for a lower local dimension $d=3$, and with only nearest neigbor interactions.

\section{The pair-flip (PF) model}
\label{sec:pf}

Let us finally present the pair-flip (PF) model.
It is a rewriting Hamiltonian, whose rules can be seen as pairs of neighboring particles changing type together,
i.e. $11\leftrightarrow 22$. The local dimension $d$ counts the number of particle types. The computational basis states are thus strings of letters from the alphabet $\{1,\dots,d\}$, and the local single-particle basis states are $\ket{1},\dots,\ket{d}$. 
Unlike the Motzkin spin chain, and similarly to the Fredkin chain, there is no ``empty space'' state on the chain.

The PF model Hamiltonian has the form $H_{\textrm{PF}} = H_{\textrm{flip}}^{\textrm{PF}}+ \delta H_{\textrm{cost}}^{\textrm{PF}}$. First, we have projector terms corresponding to pair-flipping rewriting rules:
\begin{align}
	H_{\textrm{flip}}^{\textrm{PF}} &= \frac{1}{2} \sum_{i=1}^{N-1} \sum_{t=1}^{d} \sum_{t'\neq t} \ket{t't'-tt}\bra{t't'-tt}_{i,i+1}, \label{HXXflip}
\end{align}
and a Potts-model like cost term favoring neighboring particles of the same type:
\begin{align}
	H_{\textrm{cost}}^{\textrm{PF}} &= - \sum_{i=1}^{N-1} \sum_{t=1}^{d} \ket{tt}\bra{tt}_{i,i+1}. \label{HXXcost}
\end{align}
This model differs from the Motzkin chain in three main ways. 
First, the ``particles'' are their own ``antiparticles'' --  instead of matching pairs of brackets, the transitions here involve pairs of letters without a left/right orientation. Second, there is no ``movement'' term, as there is no explicit ``empty space'' state of the chain 
Particle ``movement'' is facilitated indirectly. For example, the letter $1$ in the word $122$ can ``move'' two spaces to the right by the sequence of pair-flip transitions $122 \rightarrow 111 \rightarrow 221$.
In this way, the PF model is similar to the Fredkin chain, which also does not include empty spaces, yet it connects its basis states in nontrivial ways.
Third, instead of boundary terms or particle counting, we will use the pair-counting cost term to break the energy degeneracy between ground states from different invariant subspaces.

\vskip3pt\noindent\textbf{Irreducible strings.}
The transitions allowed by the PF Hamiltonian connect {\em words} (computational basis states) that are related by flipping the type of neighboring letter pairs. For example, the word $2222$ is connected to $1122$, $2112$ and $2211$, while two of those are connected to the word $1111$. However, none of these are connected to the words $1222$ or $1212$, as there is no way to obtain them from $1111$ by pair-flips.
The strings $1222$ or $1212$ can't be {\em reduced} to $1111$.
This intuition can be formalized by into a unique {\em reduction} procedure, classifying words of the PF-model by their {\em irreducible strings}.
Informally, we read the word from left to right, match letter pairs and erase them. Whatever is left after all matched pairs are erased is the irreducible string (see Figure~\ref{fig:reduction}).

\begin{figure}
\begin{center}
\includegraphics[width=16cm]{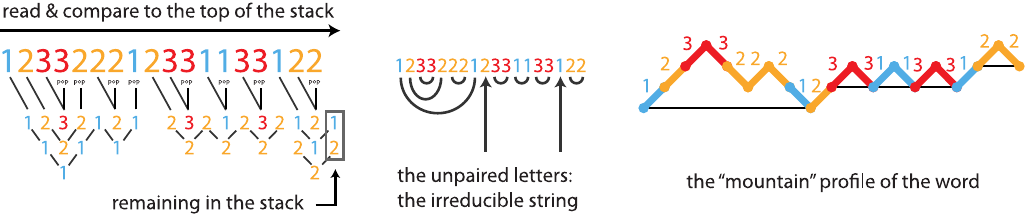}%
\caption{
The {\em reduction procedure} for the pair-flip (PF) model produces an irreducible string from a word: Read the word from left to right, push (store) the letters into a stack, or pop them out when equal to the currently read letter. The resulting letter pairings are depicted on the right. Here, the irreducible string is ``21''.
The last picture shows also the ``mountain'' profile of the word 12332221\,\underline{2}\,331133\,\underline{1}\,22.
}
\label{fig:reduction}%
\end{center}
\end{figure}

\begin{definition}[the irreducible string of a PF model word]
\label{def:reduction}
Consider a word $w=A_1 \cdots A_n$ with $N$ letters coming from an alphabet with $d$ different letters (colors).
Let a pushdown automaton read the word from left to right, starting with $k=0$, and an empty stack string $\sigma$. 
While the word is not fully read yet ($k<N$), keep moving to the right ($k \leftarrow k+1$), reading the letter $A_k$, and continue as
\begin{enumerate}
\item If $A_k\neq S_1$ (the first letter of the stack string $\sigma$), push $A_k$ into the stack: $\sigma \, \,  \leftarrow\, \,  A_k \sigma$, or
\item if $A_k=S_1$, pop the stack: $\sigma \, \, \leftarrow \, \, S_2S_3S_4\cdots$, pairing the new letter $A_k$ with the former top letter $S_1$ of the stack and erasing them. 
\end{enumerate}
After reaching the end of the word, read what's left in the stack from bottom to top ($\cdots S_3 S_2 S_1$). This is the {\em irreducible string} of the word $w$. If the irreducible string is empty, we call the word {\em fully reducible}. 
\end{definition}

The transitions in the Hamiltonian can not modify the {\em irreducible string} of a word. Thus, there is an invariant subspace of the Hamiltonian for each different irreducible string. Each irreducible string is simply a sequence of letters that can not immediately repeat. Observe that a chain of length $2n$ has $1+\sum_{i=1}^{n} d(d-1)^{2i-1}$ possible irreducible strings, which grows exponentially in $n$ for $d\geq 3$. In the simpler case $d=2$, the number of invariant subspaces is only $1+2n$.
We will discuss this Hilbert space structure in more detail below in Sections~\ref{sec:pf2} and \ref{sec:schmidt}. 

\vskip3pt\noindent\textbf{Schmidt decomposition and entropy.}
Just as we did for the uniform superposition of well-bracketed words (Motzkin paths) in \eqref{schmidtdecomp}, it is straightforward to write down a Schmidt decomposition of a uniform superposition of fully reducible words. For a chain of length $N=4m$, cut across the middle, we get

\begin{align}
	\ketL{\textrm{PF}_{2n:2n}^{(d)}}
	&= \frac{1}{\sqrt{W_{2m}^{(d)}}}\sum_{t\in \PF^{(d)}_n} \ket{t}_{1,\dots,N}
	\label{schmidtcut}\\
	&= 
	\sum_{r=0}^{m}
	\sum_{|u|=2r}
	\underbrace{\frac{W_{m-r,2r}^{(d)}}{\sqrt{W_{2m}^{(d)}}}}_{\lambda_r}
	\underbrace{\frac{1}{\sqrt{W_{m-r,2r}^{(d)}}}\sum_{v \rightarrow u} \ket{v}_{1,\dots,n}
	}_{\ket{\psi_{u}}}
	\underbrace{\frac{1}{\sqrt{W_{m-r,2r}^{(d)}}}\sum_{w \rightarrow u^R} \ket{w}_{n+1,\dots,2n}
	}_{\ket{\psi_{u^R}}}, 
	\nonumber
\end{align} 
where we are summing over all irreducible strings $u$ of length $|u|=2r$ for $r=0,\dots,m$, PF model words $v$ of length $2m$ that reduce to $u$ and PF model words $w$ of length 2m that reduce to $u^R$, the mirror image of $u$, so that $vw$ as well as $uu^R$ are fully reducible words.

We can look at the quantum correlations in this state by analyzing the Schmidt coefficients $\lambda_r$. There are $I_{2r} = d(d-1)^{2r-1}$ different irreducible words of length $2r$, so the Schmidt coefficient $\lambda_r$ appears $d(d-1)^{2r-1}$ times in the decomposition. Therefore, the Schmidt rank grows exponentially with the chain length $N$ for $d\geq 3$, as $r$ goes from $0$ to $m=\frac{N}{4}$.

To get a more detailed understanding of the entanglement entropy, we need to look at the asymptotic scaling of $W_{m-r,2r}^{(d)}$ and $W_{m}^{(d)}$. In Section~\ref{sec:pf2}, we perform the calculation exactly for the $d=2$ model. For the $d\geq 3$ models, in Section~\ref{sec:pf3} we obtain upper and lower bounds on the 
word counts, and the entanglement entropy. We show that it scales like that of the $d-1$ colored Fredkin spin chain,
which has local dimension $2(d-1)$.

\vskip3pt\noindent\textbf{An inverse-polynomial gap.}
A nice feature of the PF model is that it has an inverse-polynomial, and not exponentially small gap.
To prove this, we first look at how the PF Hamiltonian splits the Hilbert space into invariant subspaces. The {\em fully reducible} subspace is spanned by the fully reducible words. We call the other invariant subspaces {\em irreducible}, and lower bound their energy gaps separately.

{\em In the fully reducible subspace}, we relate the gap of the Hamiltonian to the gap of a Markov chain induced by the transition rules. This is easy to do, as the Hamiltonian is composed of rewriting rules and its ground state is a uniform superposition. Notice that this mapping is possible for all frustration free Stoquastic Hamiltonians \cite{Bravyi:2009:CSF:1898209.1898222}, due to the Perron-Frobenius theorem. To lower bound the gap of this Markov chain, we relate it to a high-level pair-displacing Markov chain, by the comparison theorem. This Markov chain, instead of recoloring adjacent pairs, removes a random pair and inserts a randomly colored pair to a random position. We use a generalized form of the approach from \cite{CriticalityWithoutFrustration} (see also \cite{MovaShor}, where a constructive proof was given) to lower bound its gap. We can assign canonical paths thanks to fractional matching and linear programming methods.
Here, instead of giving a closed-form solution for the fractional matching, we just show that a solution indeed exists.
As the last step, the canonical paths give us a lower-bound on the conductance and therefore a lower-bound on the gap. A generalization of our method may be of independent interest also in other contexts. 

{\em The gap lower-bound in the irreducible subspaces:} Similarly as we did in \cite{CriticalityWithoutFrustration}, we view the ``movement'' of the irreducible particles (the irreducible string), as a small perturbation and rely on the projection lemma. In our case we cannot ``catch'' an irreducible particle by an endpoint projector assigning a high expectation value to unbalanced (irreducible) strings. This was enough for Motzkin and Fredkin spin chains, where one could consider just one particle with weighted hopping on a line. Instead, we need to consider all the irreducible particles with weighted hopping or even long-range jumping. By a series of comparisons, we arrive to an exclusion process with Glauber dynamics, and Metropolis transitions \cite{MartinRandallDecompositionAdsorbingStaircaseWalk2006}, for which we know an inverse polynomial gap lower bound.

Finally, we also prove an {\em upper bound on the spectral gap}: $\bigO(n^{-2})$ 
thanks to the properties of PF words and their relation to colored Dyck words. This helps us to specify a ``twisted'' ground state -- a state with a small overlap with the ground state and a small expectation value, relying on the relation of PF words to Dyck words, the universality of Brownian motion and the convergence of Dyck random walks to Brownian excursions \cite{MovaShor,MovFSCGap}.

\vskip3pt\noindent\textbf{A unique ground state.}
When we do not use the cost term \eqref{HXXcost}, the Hamiltonian is made only from the rewriting projectors, is frustration free, and has a degenerate ground state. There is one zero-energy ground state in each invariant subspace with irreducible strings of length $k$ -- the uniform superposition over all words within the invariant subspace. Let us then include the pair-counting term \eqref{HXXcost} as a negative perturbation. Analytically (for constant-$k$) and numerically (for high $k$ growing with $n$), we show that the former ground state from the fully-reducible subspace gets a larger energy shift. The pair-counting perturbation thus selects a unique ground state.
Furthermore, we conjecture that the new, perturbed ground state can be so close to the unique superposition over all fully reducible words, that the scaling of its entanglement entropy across a middle cut remains unchanged.

In our analysis below, we will rely on \cite{CriticalityWithoutFrustration}, \cite{MovaShor} and \cite{FredkinSpinChain,ViolationOfCD}, but also on new specific tools.  The combinatorics are more difficult than before (Motzkin or Fredkin chains), but the surprising results about the entanglement of the ground state and other properties of the PF model make up for the extra effort. In particular, the PF model with local dimension $d$ has the same asymptotic behavior (entropy scaling with $\sqrt{N}$) as the colored Motzkin chain with $d-1$ colors, and thus with local dimension $1+2(d-1)$. The Motzkin chain with two types of brackets ($s=2$) has a square-root scaling of the entanglement entropy for local dimension $5$, while this behavior appears in the PF-model already for $d=3$. On the other hand, 
its behavior is also comparable to that of the Fredkin spin chain \cite{FredkinSpinChain,ViolationOfCD} 
with $d-1$ colors and local dimension $2(d-1)$.
However, the PF model's interactions only involve nearest-neighbors, instead of next-nearest-neighbors for the Fredkin chain.

We prove a lower bound on the gap (see Section~\ref{sec:gap}) in the fully reducible subspace -- an inverse polynomial of a large constant degree. However, our numerics indicate that it is on the order of $N^{-2.38}$ (without the pair-counting cost term, from numerics up to $N=14$). This indicates a $S \propto \Delta^{-\frac{1}{4.76}}$ tradeoff between the gap and the entropy in the fully reducible subspace.

\section{The qubit PF-model ($d=2$)}
\label{sec:pf2}

Before delving into the more interesting $d\geq 3$ PF-models in Section~\ref{sec:pf3}, let us investigate the simplest, $d=2$ (qubit) PF model, with easier combinatorial calculations and analytical answers.

Is this a model that we might have seen elsewhere before?
Observe that after relabeling the alphabet from $\{1,2\}$ to $\{0,1\}$,
our $d=2$ pair-flip Hamiltonian maps to an   
instance of the spin-$\half$ XXZ model (with an energy shift, and with one of the three Pauli terms with a different prefactor): 
\begin{align}
	H^{\textrm{PF}}_{\textrm{flip}} = \frac{1}{2} \sum_{i=1}^{N-1} \left(\ket{00}-\ket{11}\right)\left(\bra{00}-\bra{11}\right)_{i,i+1} 
	&= \frac{1}{4} \sum_{i=1}^{N-1} \left(\ii - X_i X_{i+1} + Y_i Y_{i+1} + Z_i Z_{i+1}\right).
	\label{PFasXXZ}
\end{align}
Moreover, when we flip every other site, we obtian the Heisenberg XXX model \cite{SamajIntegrableMBSII} (with an energy shift):
\begin{align}
	\frac{1}{2} \sum_{i=1}^{N-1} \left(\ket{01}-\ket{10}\right)\left(\bra{01}-\bra{10}\right)_{i,i+1} 
	&= \frac{1}{4} \sum_{i=1}^{N-1} \left(\ii - X_i X_{i+1} - Y_i Y_{i+1} - Z_i Z_{i+1}\right)
  \label{PFflip2}
\end{align}
Of course, much is already known about this model -- e.g. the gap \cite{XXZKomaNachtergaele} or the ground state entanglement entropy \cite{PhysRevA.71.012301}.
However, we hope interpreting the transitions as pair-flipping
will let us understand its structure from another viewpoint.

Furthermore, the additional pair-counting term \eqref{HXXcost}, can be expressed in terms of Pauli matrices as
\begin{align}
	-\delta \sum_{i=1}^{N-1} 
			\left(\ket{01}\bra{01}+\ket{10}\bra{10}\right)_{i,i+1} 
	= -\delta \sum_{i=1}^{N-1} 
		\frac{1}{2}\left(\ii-Z_i Z_{i+1}\right),
\end{align}
when we flip every other site as in \eqref{PFflip2}. This add weight to the $ZZ$ term in \eqref{PFflip2}, so the full PF model for $d=2$ is an instance of the XXZ model in the paramagnetic phase near the ferromagnetic point (for large $N$).

Let us go back to the $\{1,2\}$ alphabet of the PF model. The {\em fully-reducible} subspace of this PF model is spanned by all states one can get to from the all-1 state by pair flips. Each invariant subspace is labeled by its irreducible strings: 
\begin{align}
	\varnothing, &\qquad 12, \qquad 1212, \qquad 121212, \quad \dots \\
								&\qquad 21, \qquad 2121, \qquad 212121, \quad \dots \nonumber
\end{align}
listed here for a chain with even chain length, $N=2n$.

One of our tasks is to identify and count the basis states in each of these subspaces. We show in Appendix~\ref{sec:PFtools} how to map the PF model words to walks on a $d$-regular tree that return to the origin (the fully reducible words), or end at a distance $k$ from the origin (words with an irreducible string of length $k$), as illustrated in Figure~\ref{fig:stree}. 
For $d=2$, this means simply walks on a line, which are easy to count. 
There are $W_n^{(2)} = \binom{2n}{n}$ fully reducible words -- returning walks on a line.
More generally, for 
words of length $2n+k$ that reduce to a particular string of length $k$, we have
\begin{align}
	W^{(2)}_{n,k} = W_{n,k}= \binom{2n+k}{n}.
	\label{d2count2}
\end{align}

Just as for the general PF model, the ground state in each invariant subspace is the uniform superposition of all words that reduce to the same irreducible string. 
We choose to look in more detail at the uniform superposition in the fully reducible subspace \eqref{schmidtcut}. Later at the end of this Section, we show how to break the degeneracy and select a unique ground state, close to this superposition.

Ultimately, we want to understand how entangled this state is, so we will analyze its Schmidt decomposition and entanglement entropy across a cut in the middle of the chain.
Of course, we could also divide the chain into smaller blocks, but the computation would get a bit more complex. For convenience, we also choose to look at spin chains of length $N=4n$, so that the cut across the middle
divides the system into two blocks of length $2n$.
Let us calculate the Schmidt number and Schmidt coefficients over this cut.
We start with \eqref{schmidtcut} and recall that $W^{(2)}_{2n}$ is the number of all fully reducible words on a chain of length $4n$ (with $2n$ letter pairs), and $W^{(2)}_{n-k,2k}$ is the number of words on the half-chain of length $2n$ (with $n-k$ letter pairs) with irreducible strings of length $2k$.
For the ground state of the $d=2$ PF model, there are $2(2n)+1$ terms in the Schmidt decomposition of the uniform superposition of fully reducible words across the middle cut. Thanks to \eqref{d2count2}, we find the Schmidt coefficients are
\begin{align}
	\lambda_{2k}
	= \frac{W_{n,k}}{\sqrt{W_{2n}}}
	= \frac{\binom{2n}{n-k}}{\sqrt{\binom{4n}{2n}}},
\end{align}
for $k=0,1,\dots,n$, noting
that for $k \geq 1$ the $\lambda_k$'s come in pairs,
as there are two irreducible words of length $2k$: $1212\cdots$ and $2121\cdots$.

Squaring the Schmidt coefficients, we get a probability distribution $\{p_k = \lambda^2_{2k}\}$.

To account for the double degeneracy of the $\lambda_{2k}$'s for $k>0$, we can simply
view it as a distribution 
for $k=-n,\dots,n$.
Using the normal approximation of the binomial coefficients \cite{sedgewick2013introduction}, $\binom{2n}{n}=2^{2n}/\sqrt{\pi n}\left(1+\bigO(1/n)\right)$ and $\binom{2n}{n+k}=2^{2n}\left(e^{-k^2/n}/\sqrt{\pi n}+\bigO\left(1/n^{3/2}\right)\right)$, we receive

\begin{align}
	p_k & = \frac{e^{-\frac{2 k^2}{n}}}{\sqrt{n \pi/2}}\left(1+\bigO\left(\frac{1}{n}\right)\right).
\end{align}
The entanglement entropy across the middle cut is thus
the entropy of this distribution.
\begin{align}
	S(\{p_k\}) = -\sum_{k=-n}^n p_k \log_2 \frac{e^{-\frac{2 k^2}{n}}}{\sqrt{n \pi/2}}+\bigO(1)
	& =\log_2 \sqrt{n \pi/2}+\frac{2}{n\ln 2}\sum_{k=-n}^n k^2  p_k+\bigO(1) \nonumber\\
	& =\log_2 \sqrt{n \pi/2}+\frac{2}{n \ln 2}\Var(\{p_k\})  p_k+\bigO(1)\
	\label{eq:QubitEntropymeanAt0},
\end{align}
because the distribution is symmetric, so the expected value of $k$ is zero, and the expected value of $k^2$ is equal to the variance. 
In Appendix~\ref{sec:d2entropyapp}, we prove that 
$\Var(\{p_k\})=\frac{n^2}{4n-1}$. 
Therefore, 
\begin{align}
	S(\{p_k\})
	& = \half\log_2 \frac{n\pi}{2}+\bigO(1)\quad \textrm{bits},\label{eq:QubitEntropyVar}
\end{align}
for a chain of length $N=4n$.
This could be also alternatively derived by approximating the sum in \eqref{eq:QubitEntropymeanAt0} by integrals.

This basically gives us the behavior of the frustration-free qutrit Motzkin spin chain, or the qubit Fredkin spin chain with next-nearest-neighbor interaction, in the fully reducible subspace. A numerical investigation (not our 1/poly proof) shows a better scaling of the gap within this subspace. The qubit Fredkin chain gap scales approximately as $n^{-3.02}$ for $n \leq 22$ (our own new numerics), the qutrit Motzkin spin chain gap scales approximately like $n^{-2.91}$ for $n \leq 13$ \cite{CriticalityWithoutFrustration}, and the qubit PF model gap (without the pair counting term) in the fully reducible subspace is just the Heisenberg XXX model gap $1-\cos(\pi/n)=\Theta(n^{-2})$.

\subsection{Breaking the degeneracy by pair-counting}
\label{sec:PF2breakMAIN}

To break the ground state degeneracy between subspaces, we propose to add frustration to our system similarly to what we propose for the Fredkin chain at the end of Section~\ref{sec:fredkinreview}. This time, we add a pair-counting term \eqref{HXXcost}, 
\begin{align}
	H_{\textrm{cost}}^{\textrm{PF}} &= - \sum_{i=1}^{N-1} \sum_{t=1}^{2} \ket{tt}\bra{tt}_{i,i+1}. 
\end{align}
that counts the number of subsequent letter pairs (see Figure~\ref{fig:peakcount}c). In Appendix~\ref{sec:countPFpairs}, we exactly calculate the average number of pairs of identical letters for groundstates from different invariant subspaces, and show that it is the largest for the fully-reducible subspace, with a $\Theta\left(n^{-1}\right)$ gap to the next one. As discussed in Section~\ref{sec:PF2break}, we can then use the term \eqref{HXXcost} as a perturbation to select a unique ground state, close to the uniform superposition of fully reducible words.
We have numerical evidence for the robustness of its entanglement entropy properties,
but the full analysis of this remains open for future work.
However, it opens up interesting possibilities of investigating this model with periodic boundary conditions.

\section{A family of very entangled models: PF with $d\geq 3$}
\label{sec:pf3}

In this Section, we turn to the more complex family of PF models for $d\geq 3$. 
First, we discuss the recursive and asymptotic counting results proven in Appendix~\ref{sec:PFtools}, and their relationship to expressions appearing in the Fredkin chain and the bracket model. Second, we look at the Schmidt decomposition of the uniform superposition of fully reducible words \eqref{schmidtcut} for $d\geq 3$, and present upper and lower bounds on the entanglement entropy. We show that it is comparable to that of the Fredkin and Motzkin spin chains with $d-1$ colors and local dimensions $2d-2$ and $2d-1$, respectively. Therefore, we get a very entangled ground state in the fully reducible space already for the qutrit ($d=3$) model.
Third, we analyze the gap of the Hamiltonian.
Fourth, we discuss breaking the degeneracy by pair-counting terms,
presenting analytic results for a range of parameters and general recursive formulas for numerical investigation.

\subsection{Counting the words}

Counting the number of words in the PF model is equivalent to counting walks on $d$-regular trees (see Figure~\ref{fig:stree}).
In Appendix~\ref{sec:PFtools}, we derive a recursive formula \eqref{recursionPCcolored}
for $W_{n}^{(d)}$, the number of words with $n$ letter pairs that reduce to the empty string in the $d$-color PF-model. For $d=3,4,5$, these are known as the Online Encyclopedia of Integer Series (OEIS) sequences A089022, A035610, A130976 \cite{OEIS}. 
In Appendix~\ref{sec:PFtools}, we find the generating function \eqref{W0generate} of the series $\sum_n W_{n}^{(d)} z^n$. It highlights the  relationship \eqref{Vexact}
\begin{align}
	W_{n}^{(d)} = \frac{(d-1)R^{(d)}_n}{d} \,  C^{(d-1)}_n
\end{align}
between the number of fully reducible PF model words with $n$ letter pairs, and the number of $d-1$ colored Dyck paths with $n$ pairs \eqref{countcoloredCn}. 
The proportionality coefficients
$R_n^{(d)} = \sum_{j=0}^{\infty} \left(\frac{d-1}{d^2}\right)^j \frac{C_{n+j}}{C_n}$
monotonously grow with $n$ towards $\left(\frac{d}{d-2}\right)^2$, as shown in \eqref{Rnfirst}-\eqref{Rngrow}.

Counting words of length $2n+k$ that reduce to an irreducible string of length $k$, we
find upper \eqref{Wnkupper} and lower bounds \eqref{Wnkupper} in terms of the number 
of $d-1$ colored Dyck paths with $k$ extra steps:
\begin{align}	
	C_{n,k}^{(d-1)} 
	\leq W_{n,k}^{(d)} 
	\leq \frac{d(d-1)}{(d-2)^2} C_{n,k}^{(d-1)}.
	\label{Vnkbounds}
\end{align}
Note that this also works for $k=0$, i.e. the fully reducible words.
Assuming large $n$, we asymptotically get
\begin{align}
	W_{n}^{(d)} 
	\sim \frac{d(d-1)}{(d-2)^2}\, C_{n}^{(d-1)}\sim\frac{(4(d-1))^n}{\sqrt{\pi}n^{3/2}} \frac{d(d-1)}{(d-2)^2},
	\label{W0asym}
\end{align}
where when we say $f(n)\sim g(n)$ we mean $f(n)=g(n)+o(g(n))$.
These expressions will allow us to find good bounds on the entanglement entropy of the uniform superposition of fully reducible words \eqref{schmidtcut}.

\subsection{The Schmidt decomposition and the entropy}
\label{sec:schmidt}

Similarly to what we did for the $d=2$ model in Section~\ref{sec:pf2}, we now express the Schmidt coefficients for the uniform superposition of fully-reducible words \eqref{schmidtcut}, over a half-chain division, and use them to calculate the entanglement entropy. 
Note that we could do this for any division, but the counting would become a bit more tedious. Furthermore, for simplicity, let us again assume the chain length is $N=4n$, so the irreducible strings for halves of the chain also have even length. 

Let us start with the state \eqref{schmidtcut} in its Schmidt decomposed form.
This time, we have $d\geq 3$, and $I_{2k}=d(d-1)^{2k-1}$ different irreducible strings of length $2k$. Thus, we have $I_{2k}$ 
coefficients $\lambda_k$ of the same magnitude for each $k=0,\dots,m$, and Schmidt rank scaling exponentially with the system size.
The squares of the $\lambda_k$'s form a probability distribution.
The entanglement entropy of a $2n:2n$ cut is then 
\begin{align}
	S\left(\ketL{\textrm{PF}_{2n:2n}^{(d)}}\right) 
	= -\sum_i \lambda^2_i \log \lambda^2_i 
	= - \sum_{k=0}^{n} I_{2k} 
	\underbrace{\frac{\left(W^{(d)}_{n-k,2k}\right)^2}{W^{(d)}_{2n}}}_{\lambda_{2k}^2} \log \underbrace{\frac{\left(W^{(d)}_{n-k,2k}\right)^2}{W^{(d)}_{2n}}}_{\lambda_{2k}^2}. \label{lam2PF}
\end{align}
We will now obtain the scaling of the entanglement entropy by relating it to the results found in \cite{FredkinSpinChain,ViolationOfCD}. 

Let us prove
\begin{align}
S\left(\ketL{\textrm{PF}_{2n:2n}^{(d)}}\right)
 = \oo{S\left(\ketL{\D^{(d-1)}_{2n:2n}}\right)}=\oo{\sqrt{n}}. \label{entropybounds}
\end{align}
The first step is to relate the $\lambda^2_{2k}$ of the $d$ color PF model ground state \eqref{lam2PF} to the analogous coefficients $\tilde{\lambda}_{2k}^2$ of the $d-1$ colored Dyck ground state (the ground state of the corresponding Fredkin spin chain), 
\begin{align}
	\tilde{\lambda}^2_{2k}
		= \frac{\left(C_{n-k,2k}^{(d-1)}\right)^2}{C^{(d-1)}_{2n}}.
\end{align}
For this, we need to use the bounds \eqref{Vnkbounds} on 
$W_{n-k,2k}^{(d)}$ as well as the asymptotic scaling \eqref{W0asym} of $W_{2n}^{(d)}$, in terms of the corresponding colored Dyck path (Catalan) numbers $C_{n-k,2k}^{(d-1)}$ and $C_{2n}^{(d-1)}$.
We obtain (for $n\rightarrow\infty$)
\begin{align}
	c^{-1} \tilde{\lambda}^2_{2k} &\leq\lambda_{2k}^2\leq c \tilde{\lambda}^2_{2k}.
\end{align}
for a constant
$c=\frac{d(d-1)}{(d-2)^2}$.
Note that for $k=0$, the Schmidt coefficient appears just once in both models, while for 
$k\geq 1$ the degeneracy of $\lambda_{2k}^2$ is 
$I_{2k}=d(d-1)^{2k-1}$ for $d$ colored PF words,
and $\tilde{I}_{2k}=(d-1)^{2k} = \frac{d-1}{d}I_{2k}$ for $d-1$ colored Dyck paths.
We can thus lower bound the entanglement entropy of the uniform superposition of fully reducible words in the PF model as
\begin{align}
	S\left(\ketL{\PF_{2n:2n}^{(d)}}\right)
	& =-\sum_{k=0}^n I_{2k} \lambda^2_{2k} \log_2 \lambda^2_{2k}
	  \geq-\frac{d}{d-1}
	\sum_{k=0}^n 
	\tilde{I}_{2k}\, c^{-1}\tilde{\lambda}^2_{2k} \log_2 c\tilde{\lambda}^2_{2k}
	+ \frac{1}{d-1} \lambda^2_{0} \log_2 \lambda^2_{0}
	\nonumber\\
	& = \left(\frac{d-2}{d-1}\right)^2 S\left(\ketL{\D^{(d-1)}_{2n:2n}}\right) + \bigO(1)\quad\textrm{bits},
	\label{entropydown}
\end{align}
as the $k=0$ term contributes not more than a constant, which can be shown using \eqref{Vnkbounds} and the asymptotic expansion of $C_n$ \eqref{asymcatalan}. 

Similarly, we also get an upper bound on the entropy:
\begin{align}
S\left(\ketL{\PF_{2n:2n}^{(d)}}\right)
  &= -\sum_{k=0}^\chi I_{2k} \lambda^2_{2k} \log_2 \lambda^2_{2k} 
	\leq -\frac{d}{d-1} \sum_{k=0}^\chi \tilde{I}_{2k}\, c\tilde{\lambda}^2_{2k}
		\log_2 c^{-1}\tilde{\lambda}^2_{2k} \nonumber\\ 
	&= \left(\frac{d}{d-2}\right)^2 S\left(\ketL{\D^{(d-1)}_{2n:2n}}\right) + \bigO(1)\quad\textrm{bits}.
	\label{entropyup}
\end{align}
Together with \eqref{entropydown} and the entropy scaling for the colored Fredkin spin chain and colored Dyck paths \cite{MovaShor}, this implies \eqref{entropybounds}, what we set out to prove.

Finally, numerical investigations 
for $d=3$ show that the ratio of the PF-model and $d-1$ colored Fredkin chain ground state entropies approaches 1 ($0.9825$ for $N=8000$).
Meanwhile, our loose theoretical bounds give us a lower bound of $0.25$
and an upper bound of $9$.


\subsection{The energy gap}
\label{sec:gap}

In this Section, we analytically bound the energy gap of the PF model, without the cost term
$H_{\textrm{cost}}^{\textrm{PF}}$ \eqref{HXXcost}. Without it, the Hamiltonian of PF model is frustration free, but it has degenerate ground states -- uniform superpositions of all words sharing the same irreducible string. We will do this here for local particle dimensions $d\geq 3$ (spin-1 and higher). The qubit (spin-$\half$) PF model without $H_{\textrm{cost}}^{\textrm{PF}}$ is just the Heisenberg XXX model in disguise, and we already know its spectral gap exactly: $1-\cos(\pi/n)=\Theta(n^{-2})$ \cite{XXZKomaNachtergaele}.

It is usually difficult to determine the energy gap of a Hamiltonian exactly; here we will prove an inverse polynomial scaling bound. We split our proof into three parts. First, we show an inverse polynomial lower-bound in the fully reducible subspace (Theorem~\ref{thm:PFmodelGapLowerbound}). Second, we find a lower-bound also in the irreducible subspaces (Theorem~\ref{thm:GapLowerBoundIrreducible}). Finally, we show an $\bigO(n^{-2})$
upper-bound on the gap (Theorem~\ref{thm:PFmodelGapUpperBound}). 
This upper-bound rules out the possibility of describing the PF model by a relativistic CFT, whose gap must vanish with $\Theta (n^{-1})$ \cite[p. 412]{CFT}.

\subsubsection{The gap bound strategy and main tools}
Our strategy for bounding the gap is to relate the rewriting Hamiltonian to a stochastic matrix of a Markov chain and use the techniques for analysis of gap (or mixing time) of Markov chains \cite{LevinPeresWilmer2006}\footnote{An alternative option would be to see the rewriting Hamiltonian as a rescaled version of the unnormalized Laplacian of the graph induced by transition rules and use techniques from spectral graph theory \cite{chungspectral}.} -- 
the Canonical path method \cite{DiaconisStroockGeomBounds,Sinclair92improvedbounds} and the Comparison theorem \cite{diaconis1993}. 
In several steps, we will also split our Hamiltonian into two terms and rely on the projection lemma \cite{doi:10.1137/S0097539704445226}, which bounds the smallest eigenvalue of a sum of two Hermitian matrices when one of them is a small perturbation of the other.
Let us review these techniques before using them in the proofs of gap bound Theorems~\ref{thm:PFmodelGapLowerbound}, \ref{thm:GapLowerBoundIrreducible}, and \ref{thm:PFmodelGapUpperBound}.

We will deal with the gap bounds for our rewriting Hamiltonian in each of its invariant subspaces. We will take the restriction of the Hamiltonian and relate it to a particular Markov chain. The state space of the Markov chain $\Omega$ is the set of the basis states (words) spanning the invariant subspace. The chain's transition probabilities are given by a stochastic matrix
$P=\II-\beta H$, with a real positive parameter $\beta$ chosen sufficiently small to make the entries of $P$ nonnegative. This chain has a unique stationary state with a uniform distribution $\pi(x)=1/|\Omega|$. The gap of the Hamiltonian in the restricted subspace $\mathcal{S}$ can then be expressed as
$\Delta(H|_{\mathcal{S}})=(1-\lambda_2(P))/\beta$. The largest eigenvalue of $P$ equals 1 and corresponds to the stationary state, while the difference between the largest and the second largest eigenvalue of $P$ is the spectral gap $\Delta(P)=1-\lambda_2(P)$ of the Markov chain.
Such a ``quantum-to-classical'' mapping holds for all frustration-free stoquastic Hamiltonians \cite{Bravyi:2009:CSF:1898209.1898222}, thanks to the Perron-Frobenius theorem, which guarantees that a ground state can be chosen with real and nonnegative amplitudes in the basis in which the Hamiltonian is {\em stoquastic}. 
\begin{definition}[Stoquastic Hamiltonian \cite{BravyiEtAlStoquastic}]
	A local Hamiltonian is called stoquastic with respect to some basis (usually the standard basis), if and only if its terms have only real and non-positive off-diagonal matrix elements in this basis. 
\end{definition}
\noindent
We can relate such a Hamiltonian 
to a Markov chain with a stationary distribution $\pi(x)=\braket{x}{\psi}^2$ and transition probabilities:
\begin{align}
P(s,t)
=\delta_{s,t}-\beta\sqrt{\frac{\pi(t)}{\pi(s)}}\bra{s}H\ket{t},
\end{align}
and analyze the gap of this Markov chain instead. 
The underlying graph
has vertices corresponding to the states and edges corresponding to possible transitions between states, i.e. $E=\{(a,b):P(a,b)>0\}$.

Our first main technique is the Canonical path theorem. It provides a lower-bound on the gap of a Markov chain. To utilize it, we must design a family of paths between any two different states in the state space along the edges in the underlying graph, and then find a bound on the maximal congestion of any edge.

\begin{theorem}[Canonical path theorem \cite{DiaconisStroockGeomBounds,Sinclair92improvedbounds}] 
\label{thm:canonical}
Let $M$ be a reversible Markov chain over a finite state space $\Omega$ with a stochastic matrix $P$, and the stationary distribution $\pi$. For any choice of canonical paths $\Gamma=\{\gamma_{s,t}\}$ associating a path between every distinct states $s,t\in \Omega$ along the edges of the underlying graph of the Markov chain, the second largest eigenvalue $\lambda_2$ satisfies:
	\begin{align}
	1-\lambda_2(P) \geq \frac{1}{\rho l}, \label{Pgap}
	\end{align}
	where $l=\max_{s,t}|\gamma_{s,t}|$ is the maximum length of a canonical path and $\rho$ is the maximum congestion of an edge, given by
	\begin{align}
	\rho& = \max_{(a,b)\in E}\frac{1}{\pi(a)P(a,b)}\sum_{(a,b)\in\gamma_{s,t}}\pi(s)\pi(t).\label{congestion}
	\end{align}
\end{theorem}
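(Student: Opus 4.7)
The plan is to invoke the variational characterization of the spectral gap of a reversible Markov chain,
$$1 - \lambda_2(P) = \min_{f\colon \Var_\pi(f)>0} \frac{\DirichletForm(f,f)}{\Var_\pi(f)},$$
where $\DirichletForm(f,f) = \frac{1}{2}\sum_{x,y\in\Omega}\pi(x)P(x,y)(f(x)-f(y))^2$ is the Dirichlet form associated to $P$, and then reduce the lower bound \eqref{Pgap} to a Poincar\'e inequality $\Var_\pi(f) \leq \rho\ell\cdot \DirichletForm(f,f)$ valid for every $f\colon\Omega\to\R$. The canonical paths are the device that lets one trade an ``all-pairs'' quantity (the variance) for an ``edge-local'' one (the Dirichlet form).

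The first step is to rewrite the variance in its symmetric form $\Var_\pi(f) = \frac{1}{2}\sum_{s,t\in\Omega}\pi(s)\pi(t)(f(s)-f(t))^2$. For each ordered pair $(s,t)$ I would then telescope along the canonical path $\gamma_{s,t}$: writing $\nabla f(e) = f(y)-f(x)$ for an edge $e=(x,y)$ oriented along the path, one has $f(t)-f(s) = \sum_{e\in\gamma_{s,t}} \nabla f(e)$. Cauchy--Schwarz gives
$$(f(s)-f(t))^2 \;\leq\; |\gamma_{s,t}|\sum_{e\in\gamma_{s,t}} \nabla f(e)^2 \;\leq\; \ell\sum_{e\in\gamma_{s,t}}\nabla f(e)^2,$$
using the uniform bound $|\gamma_{s,t}|\leq \ell$ on path length.

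The next step is to plug this into the variance sum and swap the order of summation so that each edge $e=(a,b)$ carries the coefficient $\sum_{(s,t)\colon e\in\gamma_{s,t}} \pi(s)\pi(t)$. By the definition of congestion \eqref{congestion}, this coefficient is at most $\rho\,\pi(a)P(a,b)$. Substituting yields
$$\Var_\pi(f) \;\leq\; \frac{\rho\ell}{2}\sum_{e=(a,b)\in E} \pi(a)P(a,b)\,\nabla f(e)^2 \;=\; \rho\ell\cdot \DirichletForm(f,f),$$
which together with the variational formula gives \eqref{Pgap}.

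The only non-routine ingredient is the Cauchy--Schwarz step that converts a telescoping sum of length $|\gamma_{s,t}|$ into a single sum of squares at the cost of a factor $\ell$; everything else is bookkeeping and an application of the definition of $\rho$. The genuine ``hard part'' in our later use of the theorem is not the inequality itself but the design, for the pair-flip Markov chain, of a family $\Gamma$ of canonical paths whose length $\ell$ and congestion $\rho$ are both only polynomial in the chain length $n$, and this is where the bulk of the effort in Section~\ref{sec:gap} will go.
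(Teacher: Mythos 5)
Your proposal is correct: it is the standard Poincar\'e-inequality argument for the canonical-path bound (telescope along $\gamma_{s,t}$, Cauchy--Schwarz with the length factor $\ell$, swap sums, and bound the per-edge weight by the congestion \eqref{congestion}), exactly as in the cited sources \cite{DiaconisStroockGeomBounds,Sinclair92improvedbounds}; the paper itself states Theorem~\ref{thm:canonical} as an imported result without proof, and your variational starting point is the same characterization the paper later invokes in \eqref{Diriform}. The only bookkeeping point is to treat the edge sum as over directed edges (or carry the factor $\frac{1}{2}$ consistently, using reversibility $\pi(a)P(a,b)=\pi(b)P(b,a)$) so that the final expression is exactly $\rho\,\ell\,\DirichletForm(f,f)$, which your sketch does correctly.
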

Note that for a uniform distribution $1/|\Omega|$ and a constant probability of transition $p$, the congestion becomes
\begin{align}
\rho=\frac{1}{|\Omega|p}\max_{(a,b)\in E}|\{\gamma_{s,t}:\mbox{ canonical path }\gamma_{s,t}\mbox{ uses the edge } (a,b)\}|.\label{eq:rhoUniformConstProb}
\end{align}

The second importanant theorem in our toolkit for the gap lower-bound proof is the Comparison theorem of Diaconis and Saloff-Coste \cite{diaconis1993}. It compares Dirichlet forms of two reversible Markov chains acting on the same state space, but with a different set of transitions. It can be used to compare their spectral gaps. 
In our case, we will use it on a simple-to-analyze Markov chain $\tilde{M}$, whose ``big-step'' transitions can be decomposed into many ``small-step'' transitions of the Markov chain $M$, whose gap we want to bound. 
\begin{theorem}[Comparison theorem]
\label{thm:comparison} 
Let $P, \pi$ and $\tilde{P},\tilde{\pi}$ be the transition matrices and stationary distributions of two reversible Markov Chains over the same finite state space $\Omega$. Then their gaps are related as
	\begin{align}
	1-\lambda_2 (P)&
	\geq \left(\max_{s\in\Omega}\frac{\tilde{\pi}(s)}{\pi(s)}\right)\frac{1}{\mathcal{A}}(1-\lambda_2(\tilde{P})),
	\end{align}
	where the congestion ratio $\mathcal{A}$ is defined as follows. For any choice of paths $\Gamma=\{\gamma_{s,t}\}$ connecting any two different states $s,t$ such that $\tilde{P}(s,t)>0$ along the edges $E$ of the underlying graph of $P$:	
	\begin{align}\label{ComparisonA}
	\mathcal{A}
	&=\max_{(a,b)\in E}\frac{1}{\pi(a)P(a,b)} \sum_{(a,b)\in\gamma_{s,t}}|\gamma_{s,t}|\tilde{\pi}(s)\tilde{P}(s,t).
	\end{align}
\end{theorem}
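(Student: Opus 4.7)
My plan is to invoke the variational (Rayleigh) characterization of the spectral gap for reversible Markov chains, namely
\begin{align}
1-\lambda_2(P) \;=\; \min_{f\colon \Var_\pi(f)>0} \frac{\DirichletForm_P(f,f)}{\Var_\pi(f)},
\end{align}
where $\DirichletForm_P(f,f) = \tfrac{1}{2}\sum_{s,t\in\Omega}\pi(s)P(s,t)(f(s)-f(t))^2$ is the Dirichlet form, and analogously for $\tilde P,\tilde\pi$. This reduces the theorem to two separate comparisons: bounding $\DirichletForm_{\tilde P}(f,f)$ by a multiple of $\DirichletForm_P(f,f)$, and comparing $\Var_\pi(f)$ with $\Var_{\tilde\pi}(f)$.

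The central step is to show $\DirichletForm_{\tilde P}(f,f)\le \mathcal A\,\DirichletForm_P(f,f)$ for every test function $f$. For each pair $(s,t)$ with $\tilde P(s,t)>0$, the canonical path $\gamma_{s,t}=(s=x_0,x_1,\dots,x_{|\gamma_{s,t}|}=t)$ lies along edges of $P$, so I would write $f(s)-f(t)$ as a telescoping sum of single-edge differences and apply Cauchy--Schwarz to obtain
\begin{align}
(f(s)-f(t))^2 \;\le\; |\gamma_{s,t}|\sum_{(a,b)\in\gamma_{s,t}} (f(a)-f(b))^2.
\end{align}
Substituting into $\DirichletForm_{\tilde P}(f,f)$ and exchanging the order of summation to group contributions by each underlying $P$-edge yields
\begin{align}
\DirichletForm_{\tilde P}(f,f) \;\le\; \tfrac{1}{2}\sum_{(a,b)\in E}(f(a)-f(b))^2 \sum_{\gamma_{s,t}\ni (a,b)} |\gamma_{s,t}|\,\tilde\pi(s)\tilde P(s,t);
\end{align}
by the definition~\eqref{ComparisonA} of $\mathcal A$, the inner sum is at most $\mathcal A\,\pi(a)P(a,b)$, and the Dirichlet-form comparison drops out.

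For the variance side, the identity $\Var_\mu(f) = \min_{c\in\R}\sum_s \mu(s)(f(s)-c)^2$ evaluated at $c=\Exp_{\tilde\pi}[f]$, with the density ratio $\pi(s)/\tilde\pi(s)$ pulled out, gives $\Var_\pi(f) \le (\max_s \pi(s)/\tilde\pi(s))\,\Var_{\tilde\pi}(f)$. Dividing the Dirichlet-form bound by this variance comparison and minimizing over non-constant $f$ produces the claimed lower bound on $1-\lambda_2(P)$. The delicate point will be the order-swap in the Dirichlet-form step: one must keep the Cauchy--Schwarz factor $|\gamma_{s,t}|$ \emph{inside} the inner sum so that it combines with $\tilde\pi(s)\tilde P(s,t)$ exactly as required by the definition of $\mathcal A$; everything else is essentially bookkeeping.
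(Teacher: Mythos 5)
Your route is the standard Diaconis--Saloff-Coste argument: combine the Rayleigh/Dirichlet-form characterization of the spectral gap with a per-path Cauchy--Schwarz bound (keeping the path-length factor $|\gamma_{s,t}|$ inside the inner sum), reverse the order of summation to collect contributions edge by edge of the underlying $P$-graph, and compare the two variances via the density ratio. The paper does not reprove this theorem (it simply cites Diaconis and Saloff-Coste \cite{diaconis1993}), so there is no in-paper proof to diverge from; your sketch is the canonical proof, and the step you flag as delicate — where $|\gamma_{s,t}|$ must stay inside the inner sum so that the definition of $\mathcal{A}$ emerges after exchanging the summation order — is indeed exactly the point of care.

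One point, however, must be corrected. Carrying your argument through gives $\Var_\pi(f)\leq\bigl(\max_s \pi(s)/\tilde\pi(s)\bigr)\Var_{\tilde\pi}(f)$, equivalently $\Var_{\tilde\pi}(f)\geq\bigl(\min_s \tilde\pi(s)/\pi(s)\bigr)\Var_\pi(f)$, and after dividing by the Dirichlet-form bound $\DirichletForm_{\tilde{P}}(f,f)\leq\mathcal{A}\,\DirichletForm_{P}(f,f)$ the conclusion is
\begin{align*}
1-\lambda_2(P) \;\geq\; \Bigl(\min_{s\in\Omega}\frac{\tilde\pi(s)}{\pi(s)}\Bigr)\frac{1}{\mathcal{A}}\bigl(1-\lambda_2(\tilde{P})\bigr),
\end{align*}
with a $\min$, not the $\max$ that appears in the theorem as stated. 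The $\min$ form is the correct one (it is what Diaconis and Saloff-Coste prove, and the $\max$ version would be a strictly stronger claim that fails in general); the $\max$ in the paper's statement is a typographical slip. It is harmless in the paper's only application, Lemma~\ref{thm:PFtoPDMarkovChain}, because there $\pi=\tilde\pi$ so the prefactor equals $1$ either way. But your phrase that the argument ``produces the claimed lower bound'' overstates the match: your proof, completed, proves the corrected statement, not the literal one, and you should note the discrepancy rather than inherit it.
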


Finally, we will utilize a variant of the Projection lemma \cite{doi:10.1137/S0097539704445226}, which bounds the lowest eigenvalue of a sum of two Hamiltonians if one of them has a spectral gap comparatively larger than the norm of the other Hamiltonian. 
It is useful to view the sum of the two terms as a main term and 
a small perturbation.
\begin{theorem}[Projection lemma \cite{doi:10.1137/S0097539704445226}] Consider a Hamiltonian $H=H_1+H_2$. Let $H_2$ have a non-empty zero eigenspace $\mathcal{S}$, and all its other eigenvalues be at least $J>2\|H_1\|$. Then,
	\begin{align}
	\lambda\left(H_1|_\mathcal{S}\right)
	\geq\lambda(H)
	\geq\lambda\left(H_1|_\mathcal{S}\right)-\frac{\|H_1\|^2}{J-2\|H_1\|}.
	\end{align}
\end{theorem}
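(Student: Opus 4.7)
The plan is to prove the upper bound by the variational principle (which is essentially immediate) and the lower bound by a subspace-decomposition argument combined with a one-variable quadratic optimization.

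For the upper bound, I would let $\ket{\psi}\in\mathcal{S}$ be a unit vector attaining $\bra{\psi}H_1\ket{\psi} = \lambda(H_1|_\mathcal{S})$. Since $\ket{\psi}\in\mathcal{S}$, we have $H_2\ket{\psi}=0$, so the Rayleigh--Ritz principle immediately gives $\lambda(H) \leq \bra{\psi}H\ket{\psi} = \lambda(H_1|_\mathcal{S})$. All the real work lies in the lower bound.

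For the lower bound I would take an arbitrary unit vector $\ket{\phi}$ and split it according to $\mathcal{S}$ as $\ket{\phi} = \alpha\ket{\phi_1} + \beta\ket{\phi_2}$ with $\ket{\phi_1}\in\mathcal{S}$, $\ket{\phi_2}\in\mathcal{S}^\perp$ unit vectors and $|\alpha|^2+|\beta|^2=1$. The $H_2$ contribution is at least $J|\beta|^2$ by the spectral hypothesis, and expanding $\bra{\phi}H_1\ket{\phi}$ produces a diagonal piece bounded below by $|\alpha|^2\lambda(H_1|_\mathcal{S})$, a cross term whose magnitude is at most $2|\alpha||\beta|\|H_1\|$ (by Cauchy--Schwarz together with the operator-norm bound), and an $\mathcal{S}^\perp$ diagonal piece bounded below by $-|\beta|^2\|H_1\|$. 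Using $|\lambda(H_1|_\mathcal{S})| \leq \|H_1\|$ to absorb the $|\beta|^2\lambda(H_1|_\mathcal{S})$ shift, these estimates combine into
\begin{align}
\bra{\phi}H\ket{\phi} \geq \lambda(H_1|_\mathcal{S}) + |\beta|^2\bigl(J - 2\|H_1\|\bigr) - 2|\alpha||\beta|\,\|H_1\|.
\end{align}

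The closing step is a one-parameter optimization in $|\beta|$. The hypothesis $J > 2\|H_1\|$ makes the coefficient of $|\beta|^2$ strictly positive, so completing the square shows that the $|\beta|$-dependent remainder is minimized (over all real $|\beta|$, hence a fortiori over $[0,1]$) at $-|\alpha|^2\|H_1\|^2/(J-2\|H_1\|) \geq -\|H_1\|^2/(J-2\|H_1\|)$, which is the claimed correction. Since $\ket{\phi}$ was arbitrary, this yields the stated lower bound on $\lambda(H)$. The step that I expect needs the most care is the bookkeeping of the cross term: one must either use absolute values throughout or, equivalently, choose the phases of $\alpha$ and $\beta$ so that the cross term is maximally negative, since only that worst case needs to be controlled. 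The inequality $J > 2\|H_1\|$ is precisely what prevents the quadratic in $|\beta|$ from being unbounded below, so without it the perturbative correction would be vacuous.
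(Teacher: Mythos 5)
Your proof is correct, and since the paper states this as a cited result of Kempe, Kitaev and Regev \cite{doi:10.1137/S0097539704445226} without reproducing a proof, there is no paper-internal argument to compare against. Your variational argument --- the trivial Rayleigh--Ritz upper bound on $\mathcal{S}$, then decomposing an arbitrary trial vector across $\mathcal{S}\oplus\mathcal{S}^\perp$, bounding the diagonal and cross terms of $H_1$ by $\|H_1\|$, using the spectral gap $J$ on the $\mathcal{S}^\perp$ weight, and completing the square in $|\beta|$ --- is exactly the standard proof from that reference, and the bookkeeping (including absorbing $-|\beta|^2\lambda(H_1|_\mathcal{S})$ via $|\lambda(H_1|_\mathcal{S})|\leq\|H_1\|$ and then relaxing $|\alpha|^2\leq 1$ after the unconstrained minimization) is carried out correctly.
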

Note that when we know that $H$ has a zero-energy eigenvector, 
we can use the Projection lemma to lower-bound the second smallest eigenvalue of a sum $H=H_1+H_2$ by considering $H$ restricted to the subspace orthogonal to the ground-space.

\subsubsection{A lower bound on the gap in the fully reducible subspace}
\label{sec:gap1proof}

Our first gap lower bound result comes from the fully reducible subspace. The whole Section~\ref{sec:gap1proof} is the proof of Theorem~\ref{thm:PFmodelGapLowerbound}.

\begin{theorem}[PF-model spectral gap lower-bound in the fully reducible subspace]\label{thm:PFmodelGapLowerbound} The gap of the PF model Hamiltonian for a chain of size $2n$, restricted to the fully reducible subspace (subset of all well nested/fully reducible PF words) is lower bounded by an inverse polynomial in $n$, for local particle dimension $d\geq 3$:
	\begin{align}
	\Delta_{H|_{\varnothing}}=\Omega\left(\frac{1}{n^{15/2}d}\right).
	\label{Hnothingexp}
	\end{align}
\end{theorem}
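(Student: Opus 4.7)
The plan is to exploit the standard ``quantum-to-classical'' map: since the restriction of $H_{\textrm{flip}}^{\textrm{PF}}$ to the fully reducible subspace is frustration-free and stoquastic, and its zero-energy ground state is the uniform superposition over fully reducible words, the Perron--Frobenius correspondence turns $\Delta(H|_\varnothing)$ into the spectral gap (up to a constant $\beta$) of a reversible Markov chain $M$ on $\Omega=\PF_n^{(d)}$ whose moves are single nearest-neighbor pair-flips $tt\leftrightarrow t't'$, each with probability $\Theta(1/n)$ and stationary distribution uniform on $\Omega$.

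Bounding $\Delta(M)$ directly by canonical paths looks painful, because a pair-flip affects the ``mountain profile'' (Figure~\ref{fig:reduction}) in a highly nonlocal way. Instead I will introduce a higher-level chain $\tilde M$ on the same $\Omega$, whose single move is a \emph{pair-displacement}: pick a matched pair of the word via the reduction procedure of Definition~\ref{def:reduction}, erase it, then insert a fresh matched pair of uniformly chosen color at a uniformly chosen position. Because $\tilde M$ acts transitively on matched-pair structures of fully reducible words, I expect to lower-bound $\Delta(\tilde M)$ by $\Omega(1/\poly(n))$ via canonical paths that transform any $s$ into any $t$ one matched pair at a time; here the bookkeeping is essentially combinatorial and the congestion from \eqref{congestion} is easy to control.

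Then Theorem~\ref{thm:comparison} reduces bounding $\Delta(M)$ to simulating each $\tilde M$-transition by a short sequence of nearest-neighbor pair-flips of $M$. A matched pair at positions $(i,j)$ in $s$ is ``walked'' to its target position in $t$ by recoloring the intervening substring (using pair-flips to locally synchronize colors), migrating the pair stepwise, and then undoing the auxiliary recolorings; each such simulation has length $O(n)$. The delicate technical ingredient, as advertised in Section~\ref{sec:pf}, is bounding the congestion \eqref{ComparisonA} of this simulation uniformly over $M$-edges $(a,b)$. Rather than exhibiting an explicit optimal matching between realizations of a given pair-displacement and edges of $M$, I set up a fractional-matching linear program whose feasibility guarantees a balanced load across $M$-edges, and prove that a feasible solution exists by a Hall-type/flow argument exploiting the symmetry of $\Omega$ under permutations of colors and positions.

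Chaining these estimates yields $\Delta(H|_\varnothing)=\Omega(1/(n^{15/2}d))$: the $1/d$ and several $1/n$ factors come from the move probabilities in $M$ and $\tilde M$, additional $\poly(n)$ factors come from the $O(n)$ canonical path length, the congestion ratio $\mathcal A$, and $\Delta(\tilde M)^{-1}$, and the $d$-dependence enters through the color-counts in $\tilde M$'s insertion step. The main obstacle I foresee is the fractional-matching step: showing that the LP is feasible with the claimed $\poly(n)$ bound on $\mathcal A$, uniformly in the edge $(a,b)$ and the color being moved, and tracking the exponents tightly enough to land at $n^{-15/2}$ rather than a worse polynomial. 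All other steps are either standard (Theorems~\ref{thm:canonical} and~\ref{thm:comparison}) or amount to careful combinatorial bookkeeping on top of the recursive word counts $W_{n,k}^{(d)}$ of Section~\ref{sec:schmidt}.
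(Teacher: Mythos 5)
Your high-level skeleton matches the paper: map $H|_\varnothing$ to a pair-flip Markov chain, compare it to a coarser pair-displacing chain with the comparison theorem, and bound that coarser chain by canonical paths. But two of your choices create gaps that the paper's argument does not have, and you have misplaced where the hard work actually lives.

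First, your $\tilde M$ erases an arbitrary matched pair produced by the stack-reduction of Definition~\ref{def:reduction}, which in general is a pair of letters at far-apart positions $(i,j)$. The paper's pair-displacing chain (Definition~\ref{PDMarkovchain}) only ever removes an \emph{adjacent} identical pair. This is not cosmetic: the entire simulation of a PD-move by nearest-neighbor pair-flips (Lemma~\ref{thm:PFtoPDMarkovChain}, the ``bubbling'' algorithm in Figure~\ref{fig:PFtoPDCanonicalPaths}) relies on the moving pair being adjacent at every intermediate step, so that each step is a single rewriting move. Your sketch --- ``recolor the intervening substring, migrate the pair stepwise, undo the auxiliary recolorings'' --- does not define a valid sequence of pair-flips for a separated matched pair, and it is not clear that the net effect is the intended word $t$ rather than $s$ again; the congestion of such a simulation is also uncontrolled. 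The paper sidesteps all of this by working with adjacent pairs, for which the bubbling path has length $\leq 2n-1$ and a clean congestion bound $\mathcal{A}\leq 4(2n-1)^4 d^2$ by direct counting.

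Second, you have the technical center of gravity in the wrong place. You write that the canonical-path bound on $\Delta(\tilde M)$ is ``easy to control'' and that the ``delicate ingredient'' is the comparison congestion $\mathcal{A}$, to be handled by a fractional-matching LP and a ``Hall-type/flow argument exploiting symmetry.'' In the paper it is the other way around: the comparison step is a routine direct count, while the canonical paths for the pair-displacing chain are the technically demanding part. They require a supertree in which every word in $\PF^{(d)}_{n-1}$ has at least one and at most $4(d-1)$ pre-images in $\PF^{(d)}_n$ under pair-removal (Lemma~\ref{lem:PFnandnminus1}), and showing this supertree exists is exactly the fractional-matching argument, built recursively on the block decomposition $A\alpha A\beta$ of PF words and on the colored-Dyck parent map of \cite{CriticalityWithoutFrustration}. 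That construction does not simply fall out of color/position symmetry --- the word counts $W_{n}^{(d)}$ and the ratios $Y_n^{(d)}$ obey nontrivial monotonicity and upper bounds \eqref{Ybounds}--\eqref{Yngrow} that must be proved to make the probabilities $q_{i,n}$ land in $[0,1]$. Dismissing this step as ``easy'' and instead inventing an LP for the comparison congestion leaves the real obstruction unaddressed. As written, the exponent $n^{-15/2}$ also cannot be recovered, since it is pinned down by the specific path lengths and congestion factors of the paper's two-stage argument, neither of which your variant reproduces.
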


\begin{proof}
	In this proof, we will map our Hamiltonian to the pair-flip Markov chain, compare it to another, pair-displacing Markov chain, bound its gap, and analyze the relationships between the respective gaps. 
	
	We start by mapping the PF-model Hamiltonian restricted to the good subspace to the {\em pair flip} (PF) Markov chain induced by the transition rules of the PF model, with transition matrix
	\begin{align}
	P_{\textrm{PF}}=\mathbb{I}-\frac{1}{(2n-1)(d-1)}H|_{\varnothing}\label{IdlingChainP}.
	\end{align}
	Let us prove that this matrix indeed describes a reversible Markov chain with a unique stationary distribution. Let $\pi(s)=|\langle s|\psi \rangle|^2=1/W^{(d)}_n$, where $|\psi\rangle$ is the (unique) zero eigenvector of the fully reducible subspace of the PF model and $W^{(d)}_n$ is the number of fully reducible PF-model words with $n$ letter pairs. Then for any $s$,
	\begin{align}
	\sum_{t}P_{\textrm{PF}}(s,t)=1-\frac{1}{(2n-1)(d-1)}\langle s|(H|_{\varnothing})\sum_t |t\rangle=1-0=1,
	\end{align}
	therefore its columns sum to 1. When words $s$ and $t$ are connected by a pair-flip transition rule, there is only one possibility how to apply it, and the transition probability is (the 2 in the denominator comes from the definition of $H$)
	\begin{align}
		P_{\PF}(s,t)=\frac{1}{2(2n-1)(d-1)}.
		\label{PFtransitionexact}
	\end{align}
	Meanwhile, we have $P_{\PF}(s,s)\geq \frac{1}{2}$
	for self loops, since each word is connected to at most $(2n-1)(d-1)$ other words by transition rules and $\bra{s}\left(H|_{\varnothing}\right)\ket{s}\leq (2n-1)(d-1)$.
	Therefore, $P_{\textrm{PF}}$ is a stochastic matrix. Moreover, the uniform distribution $\pi(\cdot)$ is its unique stationary distribution:
	\begin{align}
		\sum_{t}\pi(s)P_{\textrm{PF}}(s,t)
		= \pi(t)
		- \sum\pi(s)\frac{\langle s|(H|_{\varnothing})|t\rangle}{(2n-1)(d-1)}
		= \pi(t).
	\end{align}
	
	The PF Markov chain is just a shift and rescaling of the PF Hamiltonian (restricted to the fully reducible subspace). Thus, we can use the gap $\Delta_{P_{\textrm{PF}}} =(1-\Lambda_2(P_{\textrm{PF}}))$

	calculate the gap of the Hamiltonian:
\begin{align}
	\Delta_{H|_{\varnothing}}=(2n-1)(d-1)\Delta_{P_{\textrm{PF}}}.
	\label{H0gap}
\end{align}

We do not analyze the PF Markov chain directly\footnote{It may turn out that it is possible to use an approach similar to card shuffling or Lozenge tailings.}. Its transitions involve locally recoloring a pair of letters to another pair of letters. Instead, we will turn to a high-level, {\em pair-displacing} (PD) Markov chain (see Definition \ref{PDMarkovchain} and Figure~\ref{fig:PFtoPDCanonicalPaths}), which allows removing an adjacent identical letter pair, inserting an arbitrarily colored pair on an arbitrary location.

\begin{definition}[The pair-displacing (PD) Markov chain] \label{PDMarkovchain}
It is a Markov chain defined over the PF model words $\PF^{(d)}_{n}$, with the following transition procedure:
	\begin{enumerate}
		\item Pick a position $i$ at random from $1$ to $2n-1$.
		If the particles at positions $i,i+1$ don't have the same color, do not change the state.
		\item Otherwise, if there is a pair at positions $i,i+1$, remove it. Then insert a randomly colored pair at a random position between $0$ and $2n-2$.
	\end{enumerate}
\end{definition}

First, let us find some properties of the PD Markov chain. Whenever two distinct words $s$ and $t$ can be obtained from one another by removing a pair and inserting a pair of any color at some position, their probability of transition is at least $P_{\PD}(s,t)\geq\frac{1}{(2n-1)^2d}$,  
as there are at most $2n-1$ positions where a pair could be removed, as well as inserted afterwards, with $d$ possible colors of the inserted pair.
On the other hand, the probability of transition is surely at most $\frac{1}{d}$ (the words differ in at least two letters, so the color of the inserted letters must be correctly chosen).
Thus, 
\begin{align}
	\frac{1}{d}\geq P_{\PD}(s,t)\geq\frac{1}{(2n-1)^2d}\label{eq:PPDbounds}.
\end{align}
These bounds will be useful for calculating the congestion and other properties of the PD chain. 
For completeness, the probability of self loops is $P(s,s)=1-\sum_{s'\not= s}P(s,s')$.

We need to understand the relationship between the {\em pair displacing} (PD) (Definition \ref{PDMarkovchain}) and the pair-flipping (PF) Markov chains. It turns out each transition of the PD chain can be built from a sequence of the PF chain transitions -- simple pair-flips. 
We show this in Lemma \ref{thm:PFtoPDMarkovChain}, as well as that their gaps are related as
	\begin{align}
	\Delta(P_{\PF})\geq\frac{1}{4 (2n-1)^4d^2} \Delta(P_{\PD}),\label{eq:PFgapfromPDgap}
	\end{align}
thanks to the Comparison theorem \cite{diaconis1993} (Theorem \ref{thm:comparison}).
This is similar to how Movassagh \cite{MovFSCGap} compared the gaps of the Fredkin and peak-displacing Markov chains.

	What remains is to analyze the (reversible) PD Markov chain and find a lower bound on its gap via canonical paths. 
	Two distinct words $s,t\in \PF^{(d)}_{n}$ have a transition between them when $s$ can be obtained from $t$ by removing a single adjacent pair from $t$ and inserting another pair of any color somewhere in $s$. 
	We get our inspiration for the lower-bound of the PD Markov chain gap from Section {\em Spectral gap: lower bound} of 
	\cite{CriticalityWithoutFrustration}, and \cite{MovaShor} (the extension to the multicolored case).
	There, a canonical path technique was used to lower bound the gap of a Markov chain on Dyck paths. The canonical paths were constructed by organizing the Dyck paths into a {\em supertree}: the root is the empty path, and the descendants of a vertex are paths that can be obtained by inserting a peak into the parent Dyck path. Moreover, the supertree is built so that no vertex has more than 4 descendants (or $4s$ for the $s$-colored case). The existence of a supertree was shown using a fractional matching theorem: converting a stochastic mapping of descendants to parents
	(a fractional matching between words of length $n$ and $n-1$) into an actual tree.
	
	Here, we also use the canonical path technique \cite{DiaconisStroockGeomBounds,Sinclair92improvedbounds} from Theorem~\ref{thm:canonical}. 
	To use it, we need to design a family 
	of canonical paths $\Gamma=\{\gamma_{s,t}\}$ associating a path along the edges of the underlying graph of the PD chain for every pair $s,t\in \Omega$.
	If we can calculate the congestion \eqref{congestion}, Theorem~\ref{thm:canonical} will give us a lower bound \eqref{Pgap} on the gap of the PD chain. 
	
	\begin{figure}[h]
		\begin{center}
			\includegraphics{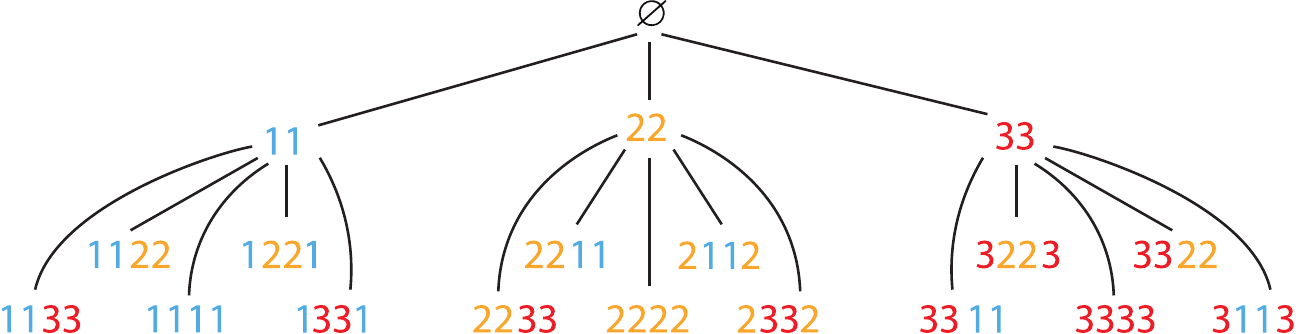}
			\caption{The first 3 levels of the 3-color pair-displacing supertree.
				The 0-th layer (empty) word has 3 children, while all words in layer 1 have 5 children.}\label{fig:PDSupertree}
		\end{center}
	\end{figure}
	
	Let us then construct the family of canonical paths for the $d$-colored \textit{PD Markov chain}. We start by arranging the words in a supertree. The PD supertree is a rooted ordered tree with vertices labeled by $d$-colored $\PF$ words. The $n$-th level of the tree contains all words from $\PF^{(d)}_n$. The root of the PD supertree is the empty word $\varnothing$ -- the only word belonging to $\PF^{(d)}_{0}$. Each descendant of a vertex is a word obtained by inserting a single adjacent pair to its parent word. Moreover, each vertex will have at most $4(d-1)$ descendants. Notice that this is sufficient as we know the upper bound \eqref{Ybounds} on the ratio between $\PF^{(d)}_{n+1}$ and $\PF^{(d)}_{n}$. The first 3 levels of the 3-color $\PD$ supertree are depicted in Figure \ref{fig:PDSupertree}.
	
	We show below in Theorem \ref{lem:PFnandnminus1} that this PD supertree exists: there is a function $g$ from $\PF^{(d)}_{n}$ to $\PF^{(d)}_{n-1}$, such that each element from $\PF^{(d)}_{n-1}$ has at least one and at most $4(d-1)$ pre-images. We assign each vertex representing word $v$ children labeled with words $g^{-1}(v)$.

	Using this supertree, we can then construct a set $\Gamma=\{\gamma_{s,t}\}$  of canonical paths between the vertices of the PD chain. 
	Our goal is to connect any pair of distinct PF Words $s,t\in \PF^{(d)}_{n}$ by a path $\gamma_{s,t}$ going through vertices $a_1,\ldots,a_l$, where $a_1=s$ and $a_l=t$, such that $(a_i,a_{i+1})$ forms an edge in the underlying graph.
	Moreover, we need to be careful so that no edge becomes ``overused'', so that Theorem~\ref{thm:canonical} implies an inverse-polynomial lower bound on the gap.

	Each step of the path $\gamma_{s,t}$ is associated with a word in $\PF^{(d)}_n$, which can be decomposed as $uv$, starting with $u=s$ and $v=\varnothing$ (empty word). We then repeat the following: remove one adjacent pair from $u$ and insert one adjacent pair to $v$, both according to the PD supertree, until we finish with $u=\varnothing$ and $v=t$. More formally, with the help of the PD supertree we generate two sequences
	\begin{enumerate}
		\item  {\em shrinking} the starting word $s\leadsto \varnothing$ with $S=s_{n},s_{n-1},\ldots ,s_{0}$, with $s_i\in\PF^{(d)}_{i}$ generated by going from $s$ towards the root in the supertree, and 
		\item {\em growing} the desired word $\varnothing\leadsto t$ with $T=t_{0},t_{1},\ldots ,t_{n}$, with $t_i\in\PF^{(d)}_{i}$, generated by going from the root to $t$.
	\end{enumerate}
	The steps of the canonical path are concatenations of words from sequences $S$ and $T$: step $i$ is the word 
	$s_{n-i}t_{i}\in \PF^{(d)}_n$.
	This process is illustrated in Figure \ref{fig:PDSupertreeCanonicalPath}.
	
	\begin{figure}[!h]
		\begin{center}
			\includegraphics{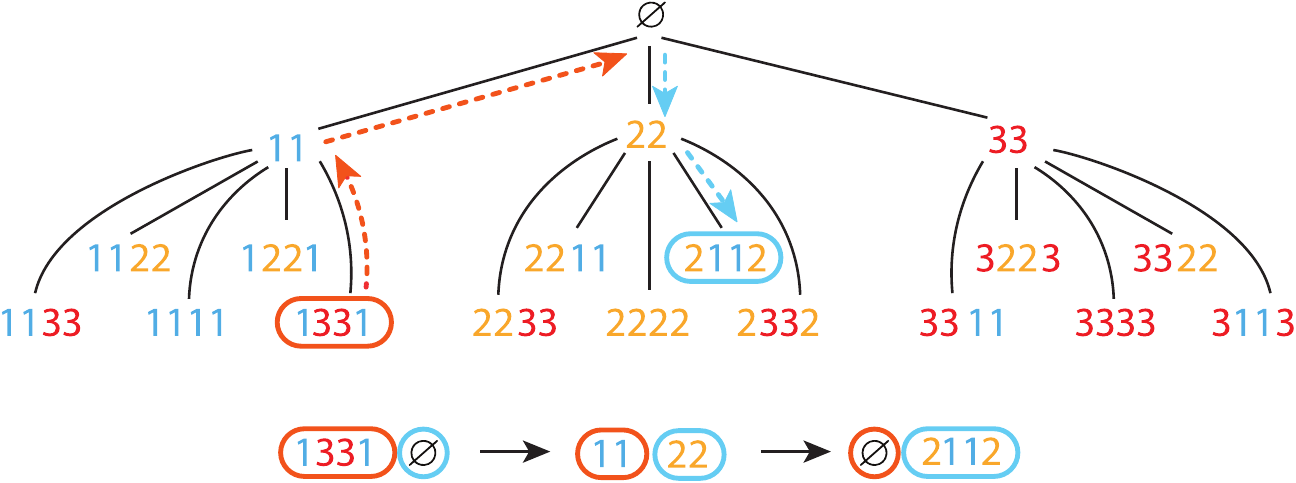}
			\caption{The PD supertree lets us assign canonical paths.
			Here, we depict the path between the words $1331$ and $2112$.
				According to the supertree, it takes 3 steps, to shrink $1331$ to $\varnothing$, and grow $2112$ from $\varnothing$. Concatenating the intermediate words we get the canonical path between the words $1331$ and $2112$. The transformations are valid transitions of the pair-displacing Markov chain -- they erase one pair and insert another pair at another location.
			}\label{fig:PDSupertreeCanonicalPath}
		\end{center}
	\end{figure}
	
	Let us prove a congestion bound for these cannonical paths, i.e. that they don't overuse any edge. Consider an edge on a canonical path $(s_{n-m}t_{m}, s_{n-m-1}t_{m+1})$, for a fixed $m\in\{0,\ldots,n-1\}$. How many canonical paths can use it?
This edge is used for all paths between the descendants of $s_{n-m}$ and $t_{m+1}$. The supertree up to level $n$ is built so that there are at most $(4(d-1))^{m}$ descendants of $s_{n-m}$ and at most $(4(d-1))^{n-m-1}$ descendants of $t_{m+1}$. Therefore, there are at most $(4(d-1))^{n-1}$ canonical paths using this edge. This gives us the following upper bound on the maximal congestion $\rho$:
	\begin{align}
	\rho &\leq \frac{(4(d-1))^{n-1}}{W^{(d)}_{n}P_{\PD}(a,b)}\leq\frac{(4(d-1))^{n-1}d (2n)^2}{W^{(d)}_{n}},
	\end{align}
	where we utilize the lower bound \eqref{eq:PPDbounds} on the transition probability $P_{\PD}(a,b)$.
	The asymptotic scaling of the number of fully-reducible PF model words \eqref{W0asym} of $W^{(d)}_{n}\sim \frac{(4(d-1))^n}{\sqrt{\pi}n^{3/2}} \frac{d(d-1)}{(d-2)^2}$ for $d\geq 3$ then gives us an asymptotic upper bound:
	\begin{align}
		\rho & =\bigO\left(\frac{n^{7/2}(d-2)^2}{(d-1)^2}\right).
	\end{align}
	The length of any path is at most $n$. Thus, \eqref{Pgap} implies an asymptotic lower bound on the gap of the PD chain:
	\begin{align}
		1-\Lambda_2(P_{\PD})&=\Omega\left(\frac{(d-1)^2}{n^{9/2}(d-2)^2}\right).
	\end{align}
	Now, thanks to the relationship between the gaps of the PF and the PD Markov chains \eqref{eq:PFgapfromPDgap} proven below in Lemma~\ref{thm:PFtoPDMarkovChain}, we have:
	\begin{align}
		\Delta(P_\PF)&=\Omega\left( \frac{(d-1)^2}{n^{17/2}d^2(d-2)^2}\right)
		=\Omega\left(n^{-17/2}d^{-2}\right).
	\end{align}
Recalling \eqref{H0gap}, this in turn gives us the asymptotic lower bound on the gap of the $\PF$ Hamiltonian in the fully reducible subspace and completes the proof of Theorem~\ref{thm:PFmodelGapLowerbound}:
	\begin{align}
		\Delta_{H|_{\varnothing}}=\Omega\left(\frac{1}{n^{15/2}d}\right)
		= \frac{1}{\textrm{poly}(n)}.
	\end{align}
\end{proof}

We will now fill the two gaps in the above proof of Theorem~\ref{thm:PFmodelGapLowerbound}. First, in Lemma~\ref{thm:PFtoPDMarkovChain} we relate the gaps of 
the PD Markov chain (with pair-displacing moves) and the gap of the PF Markov chain (with pair-flip moves). Second, in Lemma~\ref{lem:PFnandnminus1} we prove the existence of the desired supertree.

\begin{lemma}[Comparison theorem for the PF and PD Markov chains]
\label{thm:PFtoPDMarkovChain}
The gap of the PF Markov chain is lower bounded by the gap of the PD Markov chain: 
	\begin{align}
		\Delta(P_{\PF})\geq\frac{1}{4 (2n-1)^4d^2} \Delta(P_{\PD}).
		\label{PFtoPD}
	\end{align}
\end{lemma}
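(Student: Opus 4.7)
The plan is to apply the Comparison theorem (Theorem~\ref{thm:comparison}) with $P=P_{\PF}$ and $\tilde{P}=P_{\PD}$. Both chains act on the same state space $\PF^{(d)}_n$ and share the uniform stationary distribution $\pi=\tilde{\pi}=1/W^{(d)}_n$; uniformity of $\tilde{\pi}$ is immediate from Definition~\ref{PDMarkovchain}, since each PD move picks its insertion position and colour uniformly. Hence $\max_s \tilde{\pi}(s)/\pi(s)=1$, so the theorem reduces to $\Delta(P_{\PF})\geq \mathcal{A}^{-1}\Delta(P_{\PD})$. Inserting the uniform measure, the exact probability $P_{\PF}(a,b)=\frac{1}{2(2n-1)(d-1)}$ from \eqref{PFtransitionexact}, and the bound $\tilde{P}_{\PD}(s,t)\leq 1/d$ from \eqref{eq:PPDbounds} into \eqref{ComparisonA} gives
\begin{align*}
\mathcal{A} \;\leq\; \frac{2(2n-1)(d-1)}{d}\, L\, N_e,
\end{align*}
where $L$ is the maximum length of a canonical PF path and $N_e$ the maximum number of such paths through any fixed PF edge; so it is enough to prove $LN_e\leq 2(2n-1)^3d^3/(d-1)$.

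First I would design the canonical paths. For every PD edge $(s,t)$ I would fix the canonical decomposition of the move: remove the $aa$ pair at the leftmost admissible position $p_s$ of $s$, and insert a $bb$ pair at the leftmost producing position $p_t$ of $t$. The path $\gamma_{s,t}$ is then built by iterating the two-move sliding identity
\begin{align*}
\ldots X\,aa\,Y\ldots \;\longrightarrow\; \ldots X\,YY\,Y\ldots \;\longrightarrow\; \ldots X\,Y\,aa\ldots
\end{align*}
whenever $Y\neq a$ (and skipping the slide when $Y=a$, as the triple $\ldots aaa\ldots$ already carries a pair one site over), followed by a single closing PF move $aa\to bb$. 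Each actual slide costs two PF moves and advances the pair by one site, so $L\leq 2|p_t-p_s|+1\leq 4n-1$, and all intermediate words stay in the fully reducible subspace automatically because PF transitions preserve invariant subspaces.

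Second I would bound the congestion $N_e$ by injecting $(s,t)$ into the PD parameter space. Given an edge $(u,v)\in\gamma_{s,t}$, the PF move from $u$ to $v$ pinpoints the single pair inside $u$ being flipped at that step together with the colours before and after, and hence identifies the pair in motion at some position $p_u$ with travelling colour $a$. Adding the triple $(p_s,p_t,b)$, reverse-running the deterministic sliding subroutine from $p_u$ back to $p_s$ reconstructs $s$, and forward-running it to $p_t$ followed by the closing recolour reconstructs $t$. Thus $(s,t)$ is determined by $(u,v,p_s,p_t,b)$, and since $(p_s,p_t,b)\in[1,2n-1]^2\times[1,d]$ we get $N_e\leq(2n-1)^2 d$. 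Combining, $LN_e\leq(4n-1)(2n-1)^2 d\leq 4n(2n-1)^2 d$, and therefore
\begin{align*}
\mathcal{A} \;\leq\; \frac{2(2n-1)(d-1)}{d}\cdot 4n(2n-1)^2 d \;=\; 8n(2n-1)^3(d-1) \;\leq\; 4(2n-1)^4 d^2,
\end{align*}
where the final step uses the elementary inequality $2n(d-1)\leq(2n-1)d^2$ valid for all $n\geq 1$, $d\geq 2$. This is exactly the bound required to conclude \eqref{PFtoPD} via the Comparison theorem.

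The main obstacle will be making the sliding subroutine, and hence the encoding, genuinely deterministic and injective when $u$ contains several adjacent same-colour pairs, so that the pair currently in motion is recovered unambiguously from $(u,v)$ alone. This should be resolvable with careful tagging conventions --- for instance, always identifying the moving pair as the unique pair acted on by the current PF flip, and labelling each two-move slide unit as ``first half'' or ``second half'' so that reverse-running the subroutine produces a unique predecessor. Once this bookkeeping is pinned down, the remaining congestion computation above is routine.
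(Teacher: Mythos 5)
Your strategy is the same as the paper's: invoke the Comparison theorem with $P=P_{\PF}$, $\tilde P=P_{\PD}$, note the uniform stationary distribution, build canonical PF-paths implementing each PD move, and bound the congestion $\mathcal{A}$. What differs is the path construction. The paper ``bubbles'' the pair with one PF move per position, recolouring the travelling pair to match each obstacle and doing a final recolour; its encoding therefore needs both the initial and final pair colours, giving $N_e\leq 2(2n-1)^2d^2$ and $L\leq 2n-1$. You instead use a two-move ``slide'' per obstacle that preserves the travelling colour, so the colour can in principle be read off the current edge, giving the smaller $N_e$ at the cost of roughly doubling $L$. Both give $\mathcal{A}=O(n^3d^2)$ (they are comparable at $d=3$), so this is a legitimate variant rather than a fundamentally different argument.

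There is, however, a gap you yourself flag, and it affects the arithmetic. An arbitrary PF edge $(u,v)$ does \emph{not} by itself identify the travelling colour $a$ and position $p_u$: the flip $cc\to c'c'$ at $(k,k+1)$ can be a first-half slide ($a=c$), a second-half slide ($a=c'$), or the closing recolour ($a=c$, $p_u=p_t$), and configurations such as $u_{k-1}=c$ together with $u_{k+2}=c'$ make these ambiguous. Your proposed fix --- carrying a step-type tag --- does work, but it multiplies $N_e$ by a constant (at most $3$) that is absent from the displayed line $N_e\leq(2n-1)^2d$. Plugging $N_e\leq 3(2n-1)^2d$ into your own chain of inequalities with the loose $L\leq 4n-1$ gives $\mathcal{A}\leq 6(4n-1)(2n-1)^3(d-1)$, which exceeds $4(2n-1)^4d^2$ for $d=2$ and is only marginal at $d=3$, $n=1$. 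The proof survives if you instead use the sharp length bound $L\leq 2(2n-2)+1=4n-3$ (since $|p_t-p_s|\leq 2n-2$): then $\mathcal{A}\leq 6(4n-3)(2n-1)^3(d-1)$ and the required inequality reduces to $3(d-1)\leq d^2$, which holds for all $d\geq 2$. So the argument is salvageable, but as written the congestion count is off by the very constant your ``careful tagging'' is meant to supply, and you should propagate that constant through the final estimate using the tighter $L$.
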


\begin{proof}We will use the Comparison theorem of \cite{diaconis1993} (Theorem \ref{thm:comparison})
	to relate the gaps of the two Markov chains. 
	In Theorem \ref{thm:comparison}, let $P$ be the PF chain and the $\tilde{P}$ be the PD chain.
	We want to find an upper bound on the congestion ratio $\mathcal{A}$ \eqref{ComparisonA} for the underlying graph of the PF Markov chain, and use it in the Comparison theorem. 

	We starb ty designing the canonical paths that build the pair-displacing moves of the PD chain from pair-flip moves of the PF chain.
	Let $i,i+1$ be the position of the pair we want to move and $j,j+1$ the position of the new pair. W.l.o.g., we consider $i\leq j$, and deal with the case $i>j$ symmetrically. We now show how to ``bubble'' the original pair from its starting position to the final position and recolor it to the final desired color, only using pair-flips. The path is generated by the following algorithm (illustrated in Figure \ref{fig:PFtoPDCanonicalPaths}): 
	\begin{enumerate}
		\item Start with $k=i$. There is a {\em current pair} at the current position $k,k+1$. Do the following with it:
		\item For any $k<j$, if the color of the {\em current pair} differs from the letter at position $k+2$, recolor the pair to this color.
		This ensures the positions $k+1,k+2$ hold a pair. Increase $k \leftarrow k+1$. If $k<j$, repeat step 2, otherwise go to step 3.
		\item Now we are at $k=j$ (the {\em current pair} is at the end position $j,j+1$). Flip the {\em current pair} to the desired color of the ``inserted'' pair and end the process.
	\end{enumerate}
	
	\begin{figure}[!h]
		\begin{center}
			\includegraphics{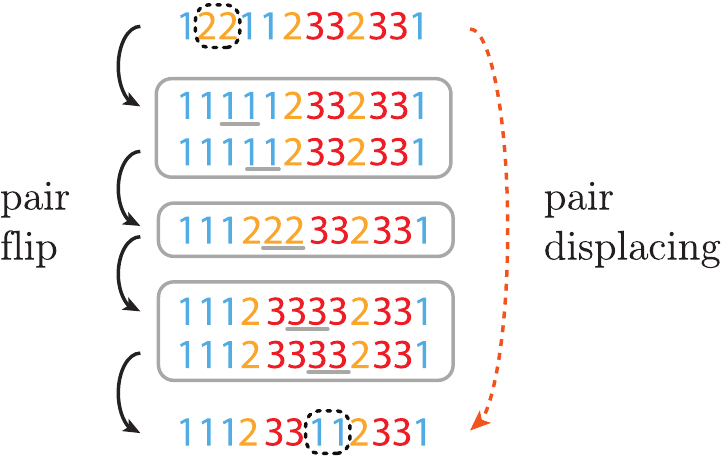}
			\caption{We can build a pair-displacing step (the big jump on the right) from a sequence of pair-flip moves (small arrows on the left). Here, we do this for the PD move $1\underline{22}112332331\, \rightarrow\, 111233\underline{11}2331$.
			} \label{fig:PFtoPDCanonicalPaths}
		\end{center}
	\end{figure}
	
		Let us consider our choice of canonical paths, and find bounds on the most ``overworked'' edge in the underlying graph of the PF Markov chain: $\alpha AAB \beta \leftrightarrow \alpha BBB \beta$ with letters $A \neq B$, and substrings $\alpha, \beta$ of length $|\alpha|=i$ and $|\beta|=2n-i-3$.
	Whenever this edge is part of a canonical path, this path involves a ``moving pair'' which had to start its journey in one of the $|\alpha|+1 \leq 2n-1$ initial positions (in the $\alpha$ substring)
	and has to end in one of the $|\beta|+1\leq 2n-1$ final positions (in the $\beta$ substring), and the initial and final pair could have $d$ possible colors. Therefore, there are at most $2(2n-1)^2 d^2$ canonical paths using any such edge, where the factor two comes from path symmetry. The length of the longest path is at most $2n-1$. 
	Recalling that the Markov chains are reversible, with a uniform unique stationary distribution $\pi(x)=\tilde{\pi}(x)=1/W^{(d)}_{n}$,
	and the bounds on the transition probabilities
	\eqref{PFtransitionexact} and \eqref{eq:PPDbounds},
	we obtain an upper bound on $\mathcal{A}$ in \eqref{ComparisonA}:
	\begin{align}
		\mathcal{A} 
		\leq & \max_{(a,b)\in E} 2(2n-1)(d-1) \sum_{(a,b)\in\gamma_{s,t}}\frac{2n-1}{d}  
		\leq 4 (2n-1)^4 d^2.
	\end{align}
	Theorem \ref{thm:comparison} then implies the claimed relationship\footnote{One might wonder why this bound is a factor of $n^{-1}$ worse than what Movassagh finds in his comparison of the Fredkin and peak-displacing chains \cite[in Lemma 1]{MovFSCGap}. We believe that this is a combination of factors: first, some of the estimates could be tighter, while, second, \cite{MovFSCGap} forgets an $n^{-2}$ factor from the estimation of the transition probability of their PD chain.} \eqref{PFtoPD} between the gaps of the PF an PD chains.
	
\end{proof}

We are ready for the last necessary ingredient for the proof of Theorem~\ref{thm:PFmodelGapLowerbound}, which we now state and prove.  
Lemma~\ref{lem:PFnandnminus1} implies the existence of the pair-displacing supertree with the desired properties (a limited number of children per parent).

First, we recall Lemma 5 of \cite{CriticalityWithoutFrustration}
and restate it as Lemma~\ref{lem:Lemma5Criticality},
as the proof of Lemma~\ref{lem:PFnandnminus1} involves an assignment between $\PF^{(d)}_{n}$ and $\PF^{(d)}_{n-1}$ from a stochastic map. Lemma~\ref{lem:Lemma5Criticality} is about a ``modified'' fractional matching in a bipartite graph $G=(Q\cup R, E)$ -- ``modified'' so that we allow vertices from $Q$ to have at most 4 matching pairs in $R$. We claim that there exists a ``modified'' matching without an exposed vertex if we can find a ``modified'' fractional matching.
\begin{lemma}\cite[Lemma 5.]{CriticalityWithoutFrustration}\label{lem:Lemma5Criticality}
	Let $G=(Q\cup R, E)$
	be a bipartite graph. Let $x=\{x_e\}_{e\in E}$ be a vector of real variables associated with edges; and $\delta(v)$ edges incident with a vertex $v$. Then for a nonempty matching polytope:
	\begin{align}
		\mathcal{P}=&
		\left\{x:x_e\geq 0\textrm{\, for all } e\in E,
		\sum_{e\in\delta(q)}x_e\leq 4, \sum_{e\in\delta(r)}x_e=1
		\textrm{,\, for all\, } q\in Q \textrm{ and } r\in R \right\}
	\end{align} 
	there exists a map $f: R\rightarrow Q$ such that (i) $f(r)=q$ implies $(q,r)\in E$, (ii) any vertex $q\in Q$ has at least one pre-image in $R$, and (iii) any vertex $q\in Q$ has at most four pre-images in $R$.
\end{lemma}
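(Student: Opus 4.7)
The plan is to reformulate the existence of $f$ as an integer bipartite $b$-matching problem on $G$ (each $r \in R$ covered exactly once, each $q \in Q$ covered between $1$ and $4$ times), establish feasibility by combining a replication trick with the integrality of the bipartite matching polytope, and then patch the solution to enforce the lower bound via augmenting-path redistribution.

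First I would replicate each $q \in Q$ by four copies $q_1, \dots, q_4$ sharing the same neighborhood, obtaining an auxiliary bipartite graph $G' = (Q' \cup R, E')$. The given $x \in \mathcal{P}$ lifts to the fractional matching $x'_{(q_i, r)} = x_{(q,r)}/4$ of $G'$, which saturates $R$ and places degree at most $1$ on every vertex of $Q'$. Since the ordinary bipartite matching polytope is integral (K\"onig--Egerv\'ary, or equivalently total unimodularity of the bipartite incidence matrix), there exists an integer matching of $G'$ saturating $R$; projecting it back to $G$ yields a function $f_0 : R \to Q$ with $|f_0^{-1}(q)| \leq 4$, which already satisfies (i) and (iii).

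Second, to upgrade $f_0$ to an $f$ that also satisfies (ii), I would iterate over uncovered vertices. Let $U = \{q \in Q : |f_0^{-1}(q)| = 0\}$. For each $q \in U$, pick a neighbour $r$ of $q$ (which exists because $q$ is fractionally covered in $x$) and launch an alternating search: move to $q' = f_0(r)$, then to another $r' \in f_0^{-1}(q')$, and so on, until reaching a surplus vertex $q^\star$ with $|f_0^{-1}(q^\star)| \geq 2$. Flipping the alternation redistributes one pre-image from $q^\star$ to $q$, strictly reducing $|U|$ while preserving (i) and (iii). Existence of the surplus vertex is guaranteed by the average-multiplicity identity $\sum_q |f_0^{-1}(q)| = |R|$ combined with the bound $|R| \geq |Q|$, which holds in the supertree setting where the lemma is invoked.

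The main obstacle is executing this second step cleanly: showing that the alternating search always terminates at a surplus vertex and that no flip ever violates the upper cap of $4$. The most painless way to handle all of this at once is to bypass augmenting paths altogether and phrase the problem as feasibility of an integer $s$--$t$ flow on the network with source-to-$r$ edges of capacity $1$, an edge from $r$ to $q$ of capacity $1$ whenever $(q,r) \in E$, and $q$-to-sink edges of lower bound $1$ and upper bound $4$. Hoffman's circulation theorem reduces LP feasibility to a cut condition that follows from $x \in \mathcal{P}$ together with the layer-ratio bounds of the supertree, and total unimodularity of the flow constraint matrix yields integrality for free, directly producing the desired matching and hence $f$.
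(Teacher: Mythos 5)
The paper never proves this lemma — it is quoted, with citation, from Bravyi et al., and used as a black box in the proof of Lemma~\ref{lem:PFnandnminus1} — so your attempt has to be judged on its own. Your first step is sound: replicating each $q\in Q$ into four copies, scaling the weights by $1/4$, and invoking integrality of the bipartite matching polytope (equivalently Hall/K\"onig) yields a map $f_0:R\to Q$ satisfying (i) and (iii). The genuine gap is in how you obtain (ii). From the polytope exactly as stated, (ii) simply does not follow: the constraints put only an \emph{upper} bound on the fractional coverage of each $q\in Q$, so a vertex $q$ carrying zero weight — even one with no incident edges — is consistent with $\mathcal{P}\neq\emptyset$ yet can never acquire a preimage. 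Concretely, take $Q=\{q_1,q_2\}$, $R=\{r_1,r_2\}$, $E=\{(q_1,r_1),(q_1,r_2)\}$ and $x\equiv 1$ on both edges: $\mathcal{P}$ is nonempty, $|R|\geq |Q|$, but $q_2$ cannot have a preimage. This also kills your alternating-path repair: the averaging identity $\sum_q|f_0^{-1}(q)|=|R|$ together with $|R|\geq|Q|$ only guarantees a surplus vertex exists \emph{somewhere}, not that one is reachable by an alternating walk from the deficient vertex, and in the example above no walk exists at all. Moreover, both $|R|\geq|Q|$ and the ``layer-ratio bounds of the supertree'' are facts about the application, not hypotheses of the lemma, so invoking them means you are no longer proving the stated statement. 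What is actually missing is a lower-bound hypothesis of the form $\sum_{e\in\delta(q)}x_e\geq 1$ for all $q\in Q$ (this is how the lemma is used here: in Lemma~\ref{lem:PFnandnminus1} every parent receives incoming weight exactly $Y^{(d)}_n\geq 1$); without it the statement is false, and with it your feasibility claim for the flow network would follow from $x$ alone.

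Once that hypothesis is in place, the final paragraph of your proposal is exactly the right (and complete) argument, and the middle section becomes unnecessary: with $1\leq\sum_{e\in\delta(q)}x_e\leq 4$, the vector $x$ itself is a feasible fractional flow in the network with unit edges source$\,\to r$, unit edges $r\to q$ for $(q,r)\in E$, and arcs $q\to$ sink with bounds $[1,4]$; Hoffman's circulation theorem (or total unimodularity of the network matrix) then gives an integral feasible flow, which routes each $r$'s unit along a single edge and is precisely the desired $f$ satisfying (i)--(iii). So: keep the flow/integrality formulation, state the lower-bound constraint as part of the polytope, and drop both the augmenting-path step and any appeal to supertree-specific quantities.
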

We can now prove the existence of the PD supertree of relationships between fully reducible PF-model words with $d$ colors -- the function $g$ which uniquely maps each descendant in the supertree to its parent word, with no parent having more than $4(d-1)$ children.
\begin{lemma}
\label{lem:PFnandnminus1} 
Let $\PF^{(d)}_{n}$ be the family of all $d$-colored, fully-reducible PF-model words of length $2n$. There exists a map $g: \PF^{(d)}_{n}\rightarrow \PF^{(d)}_{n-1}$ such that:
	\begin{enumerate}
		\item $g(a)=b$ implies we can obtain the word $b$ from the word $a$ by erasing a single (adjacent) pair of identical letters,
		\item any word from $\PF^{(d)}_{n-1}$ has at least 1 and at most $4(d-1)$ pre-images in $\PF^{(d)}_{n}$.
	\end{enumerate}
\end{lemma}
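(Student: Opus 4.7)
The plan is to invoke Lemma~\ref{lem:Lemma5Criticality} with the $Q$-side cap raised from $4$ to $4(d-1)$; the proof of that lemma rounds a fractional matching to an integral one using only integrality of the bipartite matching polytope, so it carries over verbatim with any positive integer cap.

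First, set up the bipartite graph $G=(R\cup Q,E)$ with $R=\PF^{(d)}_{n}$, $Q=\PF^{(d)}_{n-1}$, and an edge $(r,q)\in E$ whenever $q$ is obtained from $r$ by erasing a single adjacent identical letter pair. A preliminary observation is that every nonempty fully reducible word $r$ has at least one adjacent identical pair: running the pushdown reduction of Definition~\ref{def:reduction}, the first \emph{pop} must occur at some position $i$ with $r_{i-1}=r_i$ (the top of the stack at that moment is the immediately preceding letter, since no earlier pop has taken place), and the word obtained by erasing that pair is again fully reducible. Hence $\deg_G(r)\geq 1$ for all $r\in R$, and every $q\in Q$ likewise has at least one neighbor in $R$ (e.g.\ by prepending $cc$ for any color $c$), so the map $g$ produced by the lemma will automatically satisfy property~(i) and the lower half of~(ii).

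The substantive step is to exhibit a feasible fractional matching $x\in\mathcal{P}$. I propose the uniform weighting
\begin{equation*}
x_{(r,q)} \;=\; \frac{m(r,q)}{p(r)},
\end{equation*}
where $p(r)\geq 1$ is the total number of positions of adjacent identical pairs in $r$, and $m(r,q)\leq p(r)$ counts those positions whose erasure yields $q$. The $R$-side constraints $\sum_{q\in N(r)} x_{(r,q)}=1$ then hold by construction. Summing the $Q$-side contributions over all $q$ gives $\sum_{r\in R}\sum_{q} x_{(r,q)}=|R|$, so the $Q$-side average equals $|R|/|Q|\leq 4(d-1)$ by the growth bound \eqref{Ybounds} (which is tight asymptotically by \eqref{W0asym}); in other words the constraint $\sum_{r\in N(q)}x_{(r,q)}\leq 4(d-1)$ already holds on average.

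The main obstacle is to promote this average bound to a pointwise one at every $q$, especially for ``atypical'' words such as long monochromatic runs whose preimages in $R$ may concentrate on a few insertion positions. If the uniform weighting above fails at some exceptional $q$, the argument is to redistribute slack from $r$'s with many alternative neighbors via LP duality: $\mathcal{P}$ is nonempty iff the fractional Hall condition $|S|\leq 4(d-1)\,|N(S)|$ holds for every $S\subseteq R$, which is a localized form of the growth bound \eqref{Ybounds} and can be checked using the $d$-regular tree-walk description of PF words (Figure~\ref{fig:stree}) together with the pushdown reduction structure. Once the fractional matching is in hand, the modified Lemma~\ref{lem:Lemma5Criticality} delivers the required map $g$, and the bound $\leq 4(d-1)$ on the number of preimages of each $q$ is precisely the $Q$-side degree cap of the polytope.
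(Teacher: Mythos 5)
Your setup is correct and you correctly reduce the problem to exhibiting a feasible point of the fractional matching polytope $\mathcal{P}$ (with the cap raised to $4(d-1)$), after which the rounding step of Lemma~\ref{lem:Lemma5Criticality} works verbatim. The degree-$\geq 1$ observations on both sides are fine, and the observation that the first pop of the pushdown automaton occurs at an adjacent identical pair is a clean way to see $\deg_G(r)\geq 1$.

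However, the proof has a genuine gap precisely at the step that is the entire content of the lemma. You propose the weighting $x_{(r,q)} = m(r,q)/p(r)$, observe that the $Q$-side constraints hold \emph{on average} because $|R|/|Q| = Y^{(d)}_n \leq 4(d-1)$, acknowledge that this does not give the required pointwise bound, and then write that the fractional Hall condition ``is a localized form of the growth bound \eqref{Ybounds} and can be checked using the $d$-regular tree-walk description.'' This last sentence is an assertion, not an argument: the global ratio $|R|/|Q|$ does not imply $|S|\leq 4(d-1)\,|N(S)|$ for all $S\subseteq R$, and verifying that inequality (or, equivalently, constructing a feasible $x$) is exactly where the work is. Nothing in the proposal shows how the tree-walk or pushdown structure yields the needed local expansion. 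The paper's proof does this by an explicit inductive construction of a stochastic parent map $g$: decompose each word as $A\alpha A\beta$, use the correspondence $\D(\alpha,A)$ between the inner block and $(d-1)$-colored Dyck paths to invoke the known colored-Dyck parent map $f$ (lifted to $\overline{f}$), interpolate between ``shrink $\alpha$'' and ``shrink $\beta$'' with probabilities $q_{i,n}$, and then show by a recursion on \eqref{qinequality}--\eqref{qeqn}, together with the monotonicity facts $X^{(d-1)}_i\leq Y^{(d)}_i\leq Y^{(d)}_j\leq 4(d-1)$ for $i\leq j$, that the $q_{i,n}$ can be chosen in $[0,1]$ so that every parent's incoming mass equals exactly $Y^{(d)}_n$. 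Your proposal would need an equivalent construction or an actual verification of the local Hall condition; without one, the argument is incomplete.
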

\begin{proof} 
	We will follow the proof of Lemma 3 of \cite{CriticalityWithoutFrustration}, showing the existence of a fractional matching between two sets of vertices. Then we turn to Lemma \ref{lem:Lemma5Criticality}, 
	where one can easily generalize the constant 4, the upper bound on the pre-images to a constant $c\geq 1$, with the corresponding matching polytope. This means a stochastic map $g$ will imply the existence of a function $g$ with the desired properties. 

	We will build the modified fractional matching inductively. On level $n$, we will look at the sets of fully reducible words (vertices) $Q=\PF^{(d)}_{n-1}$ with $n-1$ pairs and $R=\PF^{(d)}_{n}$ with $n$ pairs, and define $g:\PF^{(d)}_{n}\rightarrow \PF^{(d)}_{n-1}$ with the help of previous levels. We define the stochastic map (up to the normalization) as a collection of discrete random variables $g(b)\in \PF^{(d)}_{n}$ for all $b\in \PF^{(d)}_{n-1}$ satisfying:
	\begin{align}
		\Pr[g(b)=a]>0,\label{eq:PrG0}
	\end{align}
	if and only if $a$ can be obtained from $b$ by removing a single adjacent pair and
	\begin{align}
		\sum_{b\in\PF^{(d)}_n} \Pr[g(b)=a] \leq Y^{(d)}_n\leq 4(d-1),\label{eq:PrGUp}
	\end{align}
	We label $Y^{(d)}_n$ the ratio between $|R|$ and $|Q|$, i.e. $Y^{(d)}_n=|\PF^{(d)}_{n}|/|\PF^{(d)}_{n-1}|=W^{(d)}_{n}/W^{(d)}_{n-1}$. 
	We have shown in Appendix~\ref{sec:PF3start} that the ratio of $Y_i^{(d)}$ grows with $i$ \eqref{Yngrow}, 
	and that it is larger than the ratio of the successive number of Dyck paths $X^{(d)}_i$ \eqref{Ybounds}. Thus, for all $i\leq j$,
	\begin{align}
		X_i^{(d-1)}\leq Y_i^{(d)} \leq Y_j^{(d)}\leq 4(d-1). 
	\end{align}

	Let us now choose a stochastic mapping based on the structure of $\PF$ words. Recall that any $d$ colored PF word with $n$ pairs can be split into two blocks as $A\alpha A\beta$.
	The second, possibly empty block $\beta$ 
	is a $d$-colored PF word $\beta\in \PF^{(d)}_{n-i-1}$
	with $i\in\{0,\ldots,n-1\}$.
	The first block $\alpha$
	with $i$ pairs is enclosed within the pair of letters $A$ (one of $d$ possible colors). 
	In Section~\ref{sec:correspond}, we describe in detail how $\alpha$ corresponds to a $d-1$ colored Dyck word $\D(\alpha,A)\in\D^{(d-1)}_i$ \eqref{Dcorrespond}. Basically, we can draw the ``mountain'' profile of $\alpha$, and there are the same number of ways to color this profile, as there are $d-1$ colored Dyck paths with the same ``mountain'' profile.

	For all $d$ colors of the $AA$ pair, we specify $g(b)=a$ inductively for different structures of $b\in\PF^{(d)}_n$ by the following rules
	illustrated in Figure~\ref{fig:stochastic}:
	\begin{enumerate}
		\item\label{enum:gbaseCase} The base case is $n=1$: if $b=AA$, then $g(b)=\varnothing$, with probability 1.
		Here we have $Y_1=d$, as $Q=\{\varnothing\}$ and $R=\{AA,BB,\dots\}$.
		Next, for $n=2$, we have $Y_2 = d+d-1=2d-1$. The assignment is illustrated in Figure~\ref{fig:PDSupertree}.
		We set $q_{0,2}=0$ and $q_{1,2}=1$, resulting in parent assignments $g(ABBA)=AA$ and $g(AABB)=AA$, and exactly $2d-1$ children per parent.
		\item\label{enum:AAbeta} Right-extreme case. If $b=AA\beta$, then $g(b)= AA g(\beta)$, with probability 1. 
		\item\label{enum:AalphaA} Left-extreme case. If $b=A\alpha A$, 
		there is only one block in the word. The substring $\alpha$ corresponds to the Dyck word $\D(\alpha,A)\in \D^{(d-1)}_{n-1}$ with base color $A$ \eqref{Dcorrespond}. As shown in \cite{CriticalityWithoutFrustration}, there exists a stochastic parent assignment function $f$ for Dyck paths, also easily extendable to the colored case.
		Thanks to the 1-to-1 mapping between $\alpha$ and Dyck paths, we can define the parent assignment function $\overline{f}(\alpha,A)=\PF(f(\D(\alpha,A),A)$, using the Dyck path parent assignment $f$, and the map \eqref{PFmap} back from Dyck paths to a PF model block. We thus choose the parent assignment as $g(b)=A\overline{f}(\alpha,A)A$, with probability 1. 
		\item General case. If $b=A\alpha A\beta$ with nonempty $\alpha$ and $\beta$,
		i.e. $|\alpha|=i> 0$ and $|\beta|=n-i-1 > 0$, then
		\begin{enumerate}
			\item $g(b)=A \overline{f}(\alpha,A)A\beta$, with probability $q_{i,n}$,		
			\item and $g(b)=A\alpha Ag(\beta)$, with probability $1-q_{i,n}$.
		\end{enumerate}
		This comes from a word $A\gamma A\delta$ with $|\gamma|=i$ and $\delta =n-i-1$ 
		being assigned a parent $A\overline{f}(\gamma,A)A\delta$ with probability $q_{i,n}$ and a parent $A\gamma A g(\delta)$ with probability $1-q_{i,n}$,
		so that the total probability for having a parent for any word $A\gamma A\delta$ is 1.
	\end{enumerate}

	Note the function $f(\cdot)$ here is the stochastic assignment for $(d-1)$ colored Dyck paths,
	a straightforward generalization of the stochastic mapping $f(\cdot)$ from \cite{CriticalityWithoutFrustration}. The upper bound on its ``incoming'' parent probability is $X^{(d-1)}_{n}=\frac{C^{(d-1)}_n}{C^{(d-1)}_{n-1}}=\frac{4(d-1)(n-1/2)}{n+1}$, with closed-form parentage probabilities $p_{i,n}=\frac{i(i+1)(3n-2i-1)}{n(n+1)(n-1)}$. 
	
	\begin{figure}
		\begin{center}
			\begin{tabular}{cc|c|c}
				\multicolumn{2}{c|}{input $b\in \PF^{(d)}_{n}$}&probability&output $ g(b)\in \PF^{(d)}_{n-1}$\\
				\hline
				$AA\beta$ & & 1& $AAg(\beta)$ \\
				\hline
				$A\alpha A$ & & 1& $A\overline{f}(\alpha,A)A$\\
				\hline
				\multirow{2}{*} {$A\alpha A\beta$,} & \multirow{2}{*} {$|\alpha|=i$ and $0<i<n-1$} & $q_{i,n}$& $A \overline{f}(\alpha,A)A\beta$ \\
				& & $1-q_{i,n}$ & $A\alpha Ag(\beta)$ 
			\end{tabular}
		\end{center}
		\caption{The stochastic map $g(b)=a$ assigning parents to words $b$ is defined probabilistically. 
		Thanks to the correspondence $D(\alpha,A)$ \eqref{Dcorrespond} between the the block $\alpha$ and $d-1$ colored Dyck paths, we can rely on the known assignment of parents for colored Dyck paths $f$ from \cite{CriticalityWithoutFrustration} (extended to the colored case), as $\overline{f}(\alpha,A)=\PF(f(\D(\alpha,A),A)$.}
		\label{fig:stochastic}
	\end{figure}

	Let us now sum the ``incoming'' probabilities for each parent word, and see if we can make these all be smaller than $Y_n^{(d)}$. In fact, we will choose to saturate the inequalities instead. 
	In \cite{CriticalityWithoutFrustration}, we found a closed form solution, while here we just prove the existence of a list of probabilities $\{q_{i,n}\}_{i=0}^{n-1}$ with the desired properties.
	
	The base case was described above. Next, we will show how to inductively choose $q_{1,n},\ldots, q_{n-1,n}\in[0,1]$ so that $g(\cdot)$ satisfies equations \eqref{eq:PrG0} and \eqref{eq:PrGUp}. Let $g(\cdot)$ be defined for all PF words with length $n-1$. 
	Let us take a general parent word $a\in \PF^{(d)}_{n-1}$, see what its children words $b$ could be, and count the total ``incoming'' parentage probabilities from them for the parent $a$.
	In general, any word $a$ with $n-1$ pairs is built from two blocks as $A\alpha A \beta$. The first block, $\alpha$, is made from $i$ pairs, is enclosed in a letter pair $A\cdots A$, and corresponds to a colored Dyck path (word) $\overline{\alpha}$ from $\D^{(d-1)}_i$. The second block, $\beta$, is a fully-reducible PF-model word $\beta\in \PF^{(d)}_{n-i-2}$,
	made from $n-i-2$ pairs. 
	According to the prescription illustrated in Figure~\ref{fig:stochastic}, we have $g(b)=a$ iff
	\begin{enumerate}
		\item $b=A\alpha' A\beta$, when $\alpha'$ is a child of $\alpha$,
		meaning the corresponding Dyck word $\overline{\alpha'}=\D(\alpha',A)\in\D^{(d-1)}_{i+1}$ 
		is a child of the corresponding Dyck word $\overline{\alpha}$, as assigned by $\overline{f}(\overline{\alpha'})=\overline{\alpha}$ from \cite{CriticalityWithoutFrustration} extended to the colored case.
		The probability of assigning $g(b)=a$ will be $q_{i+1,n}$,
		as the block $\alpha'$ is made from $i+1$ pairs.
		\item $b=A\alpha A\beta'$ for $\beta'\in \PF^{(d)}_{n-i-1}$ such that $g(\beta)=\beta'$,
		relying on the choice $g(\beta)=\beta'$ for shorter words ($\beta$ is built from $n-i-2$ pairs).
		The probability of assigning $g(b)=a$ will then be $1-q_{i,n}$, as the block $\alpha$ is made from $i$ pairs.
	\end{enumerate}
	Both events are mutually exclusive. In terms of sums of ``incoming'' probabilities for the parent word $a=A\alpha A\beta$, it means
	\begin{align}
		p_{A\alpha A \beta} 
		= \sum_{b\in PF_n^{(d)}} \Pr[g(b)=A\alpha A\beta]
		=q_{i+1,n}\sum_{\overline{\alpha'}} \Pr[\overline{f}(\overline{\alpha'})=\overline{\alpha}]
		\,+\,(1-q_{i,n})\sum_{\beta'} \Pr[g(\beta')=\beta]. \label{psum}
	\end{align}
	We require $p_{A\alpha A \beta} \leq Y^{(d)}_n$. Thus, we need to show (relying on the levels below as before) that
	our $q_i$'s obey
	\begin{align}
		q_{i+1,n} X^{(d-1)}_{i+1} + (1-q_{i,n})Y^{(d)}_{n-i-1} \leq Y^{(d)}_n.
		\label{qinequality}
	\end{align}
	There are two boundary conditions. First, we need $q_{0,n}=0$, as there is no way to assign a parent to the empty first block $\alpha$. Second, $q_{n-1,n}=1$, as there is only one way to assign a parent to a word of the form $A\alpha A$ with an empty second block.
	
	We will choose to saturate the inequality \eqref{qinequality},
	and show that this is consistent with the boundary conditions, producing a sequence of probabilities $q_{i,n}$.
	We claim that we can start with $q_{i,n}=0$, use
	\begin{align}
		q_{i+1,n} X^{(d-1)}_{i+1} + (1-q_{i,n})Y^{(d)}_{n-i-1} = Y^{(d)}_n, \label{qequality}
	\end{align}
	and calculate a sequence of probabilities $q_{i+1,n}$ for $i=0,\dots,n-2$, ending with $q_{n-1,n}=1$.
	Let us prove that this is indeed what happens,
	as here we do not present a closed-form solution as in \cite{CriticalityWithoutFrustration} for Dyck paths. 
	
	We want to show that $\{q_{i,n}\}_{i=1}^{n-1}$ can be chosen as a sequence of probabilities, i.e. $0\leq q_i \leq 1$, while satisfying the boundary conditions.
		First, we can rewrite
	\eqref{qequality} as
	\begin{align}
		q_{i+1,n} = \frac{1}{X^{(d-1)}_{i+1}}\left(
		Y^{(d)}_n - Y^{(d)}_{n-i-1}
		+ 
		q_{i,n} Y^{(d)}_{n-i-1}
		\right), 
		\label{qeqn}
	\end{align}
	because $Y^{(d)}_n - Y^{(d)}_{n-i-1}\geq 0$ and $q_{i,n}\geq 0$ implying $q_{i+1,n}\geq 0$,
	so the $q_i$'s are nonnegative.

	Second, we show that $q_{i,n}\leq 1$.
	Let us imagine that for some $j$, \eqref{qeqn} produces $q_{j+1,n} > 1$. 
	We could then simply set $q_{j+1,n}=q_{j+2,n} = \dots = 1$,
	which would uniquely assign parents to all children words whose first block has $j'+1 \geq j+1$ pairs, while the parent words with first block with $j' \geq j$ pairs would each have exactly $Y_{j'} \leq Y_n$ children, and thus would not be overworked. 
	However, because $Y_{j<n} < Y_n$, we can conclude that this situation can happen only for $j=n-2$. 
	
	Finally, we realize that $q_{j+1,n}=1$ must happen for $j=n-2$, as then we have $q_{n-1,n}=1$, satisfying the second boundary condition, and saturate all the inequalities, with incoming probability exactly $Y_n$ per parent.
	
	Therefore, a ``modified'' fractional matching of the layers $\PF^{(d)}_{n-1}$ and $\PF^{(d)}_{n}$ with the desired properties exists. Then Lemma~\ref{lem:Lemma5Criticality} implies that such a ``modified'' matching exists too.
	Thus, there is a way to assign the supertree for the pair-displacing chain in such a way that each word has a parent, and no word has more than $4(d-1)$ children.
\end{proof}
	
	For the interested reader, we now show how to calculate the probabilities $q_{i,3}$ at the first nontrivial level ($n-3$) for the modified fractional matching. We can count $W_2 = d(2d-1)$, and $W_3=d(4d^2-3d+2)$, which implies $Y_3=\frac{W_3}{W_2} = \frac{5d^2-6d+2}{2d-1}$.
	In the layer with $n-1=2$ pairs, words of the type $AABB$ must have $d+(d-1)$ children of the type $AABBCC$ and $AABCCB$, as we start with $q_{0,3}=0$. This gives each such word $Y_3 - (2d-1)$ incoming probability to assign from children words of the type $ACCABB$, with $d-1$ choices for the letter $C$. This results in $q_{1,3}=\frac{d-1}{2d-1}$.
	In turn, we have the parent words $ACCA$, which each receive $(1-q_{1,3})$ incoming probability from $d$ words of the type $ACCABB$, leaving them incoming probability $Y_3-d(1-q_{1,3}) = 2(d-1)$ to be assigned. This is exactly what is needed to assign parents to words $ACCDDA$ and $ACDDCA$, which means $q_{2,3}=1$ as desired.


\subsubsection{A lower bound on the gap in the irreducible subspaces.}

In the previous Section, we have shown a gap lower bound for the PF Hamiltonian restricted to the fully reducible subspace. Let us now proceed with the subspaces whose words reduce to a nonempty word and prove Theorem \ref{thm:GapLowerBoundIrreducible}: the PF model Hamiltonian has an inverse polynomial scaling of the energy gap also in these subspaces. 
This proof is inspired by the gap lower bound proof for the ``bad'' subspaces in Motzkin chains \cite{CriticalityWithoutFrustration}.

\begin{theorem}[PF-model gap lower-bound for irreducible subspaces]\label{thm:GapLowerBoundIrreducible} Let $x=X_1X_2\ldots X_k$ be the irreducible string for an irreducible subspace. The PF model Hamiltonian restricted to this subspace has at least an inverse polynomial spectral gap $\Delta(H|_{\textrm{ir:}x})=n^{-\bigO(1)}$.
\end{theorem}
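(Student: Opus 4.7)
My plan follows the approach sketched in the paragraph introducing the theorem: view the dynamics of the irreducible letters $X_1,\ldots,X_k$ (the ``irreducible particles'') as an effective hopping dynamics living on top of the fully-reducible bulk, and compare that effective dynamics to a known exclusion/Metropolis process whose gap is inverse polynomial.

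I would first fix convenient coordinates on the subspace. Any word reducing to $x$ is uniquely described by the ordered positions $1\le p_1<\cdots<p_k\le N$ of the surviving irreducible letters, together with PF subwords $w_0,w_1,\ldots,w_k$ filling the $k+1$ gaps between them. By the reduction procedure of Definition~\ref{def:reduction}, each $w_j$ must be a fully reducible PF word whose intermediate pushdown stack (above the underlying $X_j$) never empties at a step that would then pop $X_j$; in particular the first letter of $w_j$ must differ from $X_j$. This product decomposition lets me split $H|_{\textrm{ir:}x}=H_{\textrm{bulk}}+H_{\textrm{move}}$, where $H_{\textrm{bulk}}$ collects the pair-flip terms supported entirely inside a single gap (these preserve the $p_j$), and $H_{\textrm{move}}$ collects the pair-flip terms touching an irreducible letter, which can shift some $p_j$ by $\pm 2$ through cascades such as $X_j tt\leftrightarrow X_j X_j X_j$.

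For any fixed position configuration, $H_{\textrm{bulk}}$ is a direct sum of fully reducible PF Hamiltonians on the gap chains with the above mild first-letter restriction. Theorem~\ref{thm:PFmodelGapLowerbound}, together with a straightforward adaptation of the pair-displacing supertree and canonical-path argument (Lemma~\ref{lem:PFnandnminus1}) to the restricted alphabet, gives each summand an inverse polynomial gap and hence $H_{\textrm{bulk}}$ an inverse polynomial gap above its zero-energy ground space. That ground space is spanned by position-labeled product states whose gap factors are uniform superpositions over admissible $w_j$. I would then invoke a projection-lemma / Schrieffer--Wolff style reduction to produce an effective Hamiltonian $H_{\textrm{eff}}$ on this ground space. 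Because $H$ is stoquastic and frustration-free with a positive uniform ground state in each invariant subspace, $H_{\textrm{eff}}$ is, up to rescaling, equivalent to a reversible Markov chain on ordered configurations $(p_1,\ldots,p_k)$ with transition rates given by explicit ratios of the gap-word counts $W^{(d)}_{n,k}$, controlled by \eqref{Vnkbounds} and \eqref{W0asym}.

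Finally I would treat this effective chain as a weighted nearest-neighbor exclusion process for $k$ distinguishable hard-core particles hopping by $\pm 2$ on a line, and use the Comparison Theorem~\ref{thm:comparison} to relate it to a symmetric exclusion process with Metropolis/Glauber dynamics whose inverse polynomial gap is known from \cite{MartinRandallDecompositionAdsorbingStaircaseWalk2006}. The asymptotics in \eqref{W0asym} keep the weight ratios and edge congestion in the comparison polynomially bounded, giving a gap of $n^{-\bigO(1)}$ for $H_{\textrm{eff}}$ and hence, via the projection step combined with the bulk gap, for $H|_{\textrm{ir:}x}$ itself. The main obstacle will be this last step: making the reduction to a concrete Metropolis/Glauber exclusion process precise, since the irreducible particles here are labeled and ordered, and the weights inherited from $W^{(d)}_{n,k}$ have a richer combinatorial structure than in the Motzkin case. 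Keeping the congestion bound inverse polynomial uniformly in $k$, which itself can grow linearly with $n$, will require sharp control over the ratios $W^{(d)}_{n,k}/W^{(d)}_{n',k'}$ between neighboring configurations, together with a careful treatment of the boundary cases where two irreducible letters sit close together and the $\pm 2$ cascade shortens.
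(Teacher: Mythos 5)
Your high‑level strategy — freeze the irreducible‑letter positions, lower‑bound the gap of the ``bulk'' dynamics in the intervening segments via Theorem~\ref{thm:PFmodelGapLowerbound}, use a projection‑lemma reduction to an effective hopping chain for the $k$ particles, and then compare that chain to a Metropolis/Glauber exclusion process from \cite{MartinRandallDecompositionAdsorbingStaircaseWalk2006} — is exactly the route taken in the paper's proof, including the final chain of comparisons through a Catalan‑weighted hopping process.

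However, there is a concrete error in your decomposition step. You split $H|_{\textrm{ir:}x} = H_{\textrm{bulk}} + H_{\textrm{move}}$ according to which sites a local projector acts on, and assert that terms ``supported entirely inside a single gap'' preserve the positions $p_j$ while only terms ``touching an irreducible letter'' move them by $\pm 2$. Neither clause is true. First, a pair‑flip wholly inside a gap can relocate an irreducible letter arbitrarily far: in $\underline{2}\,1133$ the irreducible letter sits at position $1$, yet flipping $33\to 22$ at positions $4,5$ (strictly inside the gap) yields $2112\,\underline{2}$, where the surviving unmatched $2$ is now at position $5$ — a jump by $4$. Second, even the genuine $\pm 2$ hops $X_j tt\leftrightarrow X_j X_j X_j$ involve a pair‑flip at the two sites to the right of the \emph{old} irreducible position, which does not touch it; conversely the same local projector $\ketbra{tt-t't'}_{i,i+1}$ acts as a ``bulk'' move on some basis states and as a hop or long jump on others. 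So the classification cannot be done term‑by‑term on the local Hamiltonian. The paper avoids this by first rewriting $H|_{\textrm{ir:}x}$ as an equivalent sum of nonlocal projectors, one per individual transition, and then sorting the \emph{transitions} into three classes — those fixing all $p_j$, those moving some $p_j$ by exactly $2$ (``hopping''), and those moving some $p_j$ by more (``jumping''). The jumping part is then dropped (which can only decrease the gap since all terms are PSD projectors and the ground state is unchanged), and the projection lemma is applied with only the hopping term scaled by $\epsilon$. If you keep your locality‑based split, the ground space of $H_{\textrm{bulk}}$ is not the space of ``fixed positions with uniform gap superpositions'' as you assume, and the subsequent Schrieffer--Wolff / projection‑lemma step does not produce the hopping chain you want.

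A secondary point worth flagging: your $H_{\textrm{bulk}}$ gaps are not all fully reducible PF segments with a mild first‑letter restriction. Only the leftmost segment $\alpha$ (before $X_1$) is a PF word; the remaining segments $\beta_1,\ldots,\beta_k$ correspond, after the $\D(\cdot,X_j)$ map \eqref{Dcorrespond}, to $(d-1)$‑colored Dyck words, and the paper bounds their intra‑segment gap by the existing peak‑displacing Markov chain analysis rather than by re‑running Lemma~\ref{lem:PFnandnminus1} on a restricted alphabet. This is a minor discrepancy but matters when setting up the effective hopping amplitudes $a_m$ and $a'_m$, which are explicitly ratios of $C^{(d-1)}$'s and $W^{(d)}$'s, respectively.
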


\begin{proof}
	Let us start with a review of the structure of irreducible words. All PF words with the irreducible string $x=X_1X_2\ldots X_k$, made from irreducible letters $X_1,X_2,\ldots, X_k$, can be written as $\alpha X_1 \beta_1 X_2\ldots X_{k} \beta_{k}$, with a ``mountain'' profile as in Figure~\ref{fig:reduction}.
	Here, $\alpha$ is a PF word and each $\beta_1,\ldots \beta_k$ corresponds to a Dyck path (see Section~\ref{sec:correspond} for the precise notion of correspondence),
	as the base layer of the ``mountain'' profile of $\beta_i$ does not contain the letter $X_i$.
		
	Notice that by applying the PF model rewriting rules, the letters of the irreducible string can jump over a long distance, unlike in the Motzkin spin chain or the Fredkin spin chain. There, an irreducible letter can move only by one or two positions, respectively. To illustrate this, let us consider the word $\underline{2}\,1133$ with the irreducible letter $\underline{2}$. If we flip the $33$ to $22$, we obtain $2112\,\underline{2}$, with the irreducible letter jumping all the way to the end.
	
	Each term in the Hamiltonian is a local projector whose terms induce several transitions between states. Observe that the Hamiltonian as a matrix would be the same if it were made from nonlocal projector terms -- one for each specific transition. Let us then analyze this version of the Hamiltonian, and split the terms into three mutually exclusive groups.
	\begin{enumerate}
		\item The first part, $H_{\textrm{ir:\sout{move}}}$, contains all the projectors involving transitions that do not affect the position of the irreducible letters,
		\item the second part, $H_{\textrm{ir:hop}}$, involves transitions that change the position of an irreducible letter by two. We call this part {\em hopping}.
		\item the third part, $H_{\textrm{ir:jump}}$, relates the states that change the position of an irreducible letter by more than two. We call this part {\em jumping}.
	\end{enumerate}
	\begin{align}
	H=H_{\textrm{ir:\sout{move}}}+H_{\textrm{ir:hop}}+H_{\textrm{ir:jump}}
	\end{align}
	Each of the three parts of $H$ is a sum of projectors. 
	Let us remove the jumping term and denote the Hamiltonian without jumping
	\begin{align}
	\Hir=H_{\textrm{ir:\sout{move}}}+H_{\textrm{ir:hop}}.
	\end{align}
	Observe that $H$ and $\Hir$ have the same ground state (the uniform superposition of all states with the same irreducible string), as any big jump can be decomposed into small hops. 
		By removing the jumping part, we are not changing the ground state.
		Moreover, we can only lower the energy of the excited states.
		Effectively, this removes connections in the underlying graph of transitions (while keeping the whole graph in one connected component). 
	
	Next, let us bound the gap of $\Hir$ by treating the hopping term as a small perturbation. For all $0<\epsilon\leq1$ we define the perturbed Hamiltonian:
	\begin{align}	
		\Hir^\epsilon=H_{\textrm{ir:\sout{move}}}+\epsilon H_{\textrm{ir:hop}}.
	\end{align}
	Clearly, $H_{\textrm{ir}}$ and $H_{\textrm{ir}}^\epsilon$ have the same ground state, as both $H_{\textrm{ir}}$ and $H_{\textrm{ir}}^\epsilon$ are sums of the same projectors, just with different prefactors.
	Because $H_{\textrm{ir}}\geq H_{\textrm{ir}}^\epsilon$ (i.e. $H_{\textrm{ir}}- H_{\textrm{ir}}^\epsilon$ is positive),
	their gaps are related as $\Delta(H_{\textrm{ir}})\geq\Delta(H_{\textrm{ir}}^\epsilon)$.
	
	Let us now analyze the gap of $H^0_\textrm{ir}=H_{\textrm{ir:\sout{move}}}$, the Hamiltonian without the perturbation term. The positions of the irreducible letters are not affected by the unperturbed Hamiltonian, therefore the ground subspace of the $H_{\textrm{ir}}^0|_{\textrm{ir:}x}$ in the subspace of the states with the irreducible string $x=X_1\ldots X_k$ is spanned by the normalized states:
	\begin{align}
		\ket{\psi_{x_1,x_2,\ldots,x_k}}
		&=\ket{\PF^{(d)}_{x_1-1}}\otimes\ket{X_1}_{x_1}\otimes\ket{\PF(\D^{(d-1)}_{x_2-x_1-1},X_1)}\otimes\ket{X_2}_{x_2}\ldots\otimes\ket{X_k}_{x_k}\ket{\PF(\D^{(d-1)}_{n-x_k},X_k)},\label{eq:Hir0Gspace}
	\end{align}
	where $x_i\in\{1,\ldots,n\}$ is the position of the $i$th irreducible letter and $\ket{\PF^{(d)}_{m}}$ and $\ket{\PF(\D^{(d-1)}_{m},X)}$ are normalized states denoting uniform superpositions over words from $\PF^{(d)}_{m}$ and the corresponding Dyck words from $\PF(\D^{(d-1)}_{m},X)$, respectively. The spectral gap of $H_{\textrm{ir}}^0|_{\textrm{ir:}x}$ can be computed separately on the intervals between the irreducible letters. 
	We have shown in Theorem~\ref{thm:PFmodelGapLowerbound}
	that the spectral gap of the PF Hamiltonian in the fully reducible subspace is lower bounded by an inverse polynomial.
	The gap of the PF Hamiltonian acting on the subclass of fully reducible words corresponding to $d-1$ colored Dyck walks is also lower-bounded by an inverse polynomial, using a mapping to the peak-displacing Markov chain \cite{CriticalityWithoutFrustration, MovaShor}. Therefore we conclude that $\Delta(H_{\textrm{ir}}^0|_{\textrm{ir:}x})=n^{-\bigO(1)}$.
	
	Let us now turn on the perturbation term by setting $\epsilon>0$. The first order effective Hamiltonian acting on the ground subspace \eqref{eq:Hir0Gspace} describes a weighted non-crossing hopping of the $k$ irreducible letters. The blocks between the irreducible letters are balanced words (of even length) and the perturbation allows hopping of the irreducible letters by two, changing the number of pairs of the two neighboring balanced words by one. We can thus consider the effective Hamiltonian $H_\textrm{eff}$ acting as the hopping of $k$ irreducible letters (particles) on the chain of length $l=(2n-k)/2+k$.
	The hopping amplitudes depend on the distances to the neighboring irreducible letters.
	
	We define the effective Hamiltonian over basis states $\ket{j_1}\ket{j_2}\ldots\ket{j_k}$, where $j_i\in \{1,\ldots, l\}$, and $j_i<j_{i+1}$ for all $i\in\{1,\ldots,k-1\}$. The value $j_i$ represents the position $x_i$ of the irreducible letter $X_i$, and only the number of pairs in the balanced words between the irreducible letters need to be considered. To simplify the notation, let us label the empty space the particle $i$ has to the left by $s_i$ and $s_{k+1}$ the empty space the particle $k$ has to the right. Recalling the structure of the irreducible PF words, the substring before the first irreducible letter is just a fully reducible PF word and the subsequent substrings correspond to Dyck walks. Therefore, we can split the Hamiltonian into the hopping of the first and the remaining ($2$nd, \dots, $k$th) irreducible letters.
	\begin{align}\label{eq:Heff}
		H_{\textrm{eff}}&=\Gamma'+\sum^{k}_{i=2} \Gamma_{i},
	\end{align}
	where $\Gamma'=\ketbra{\gamma'}$ and $\Gamma_i=\ketbra{\gamma_i}$ are projectors onto the states:
	\begin{align}
		\ket{\gamma'}_{1,2}
		  &=a_{s_2}\ket{j_1\otimes j_2}-a_{s_1+1}'\ket{(j_{1}+1)\otimes j_{2}},
		\label{alphas}\\
		\ket{\gamma_i}_{i, i+1, i+2} 
		  &=a_{s_{i+2}}\ket{j_{i}\otimes j_{i+1}\otimes j_{i+2}}-a_{s_{i+1}+1}\ket{j_{i}\otimes (j_{i+1}+1)\otimes j_{i+2}},
		\quad \mbox{ for } 2\leq i\leq k-1, \label{alpha2}\\
		\ket{\gamma_k}_{k-1,k}
	  	 &=a_{s_{k+1}}\ket{j_{k-1}\otimes j_k}-a_{s_{k}+1}\ket{j_{k-1}\otimes (j_{k}+1)}, \label{alpha3}
	\end{align}
	where the amplitudes in \eqref{alpha2} and \eqref{alpha3} are the same $a$'s as in \eqref{alphas}, thanks to the structure $\alpha X_1 \beta_1 X_2\ldots X_{k} \beta_{k}$ of the PF words with irreducible strings.
	
	Let us calculate the amplitudes $a_m$ and $a'_m$ for the first projector $\Gamma'$, involving the hopping of the irreducible letter $X_1$. The projector $\Gamma'$ is the effective term coming from transitions connecting PF model words $\alpha X_1AA\beta\gamma$ and $\alpha X_1X_1X_1 \beta\gamma$, where $A\in\{1,\ldots,d\}\setminus X_1$, and $\gamma$ is a substring starting with the second irreducible letter $X_2$ (if there is one), or the empty word. The position of the letter $X_1$ is $x_1$ and the number of pairs in $\alpha$ and $\beta$ are $s_1$ and $s_2-1$, respectively. Let us denote the part of $H_{\textrm{ir:hop}}$ connecting hopping of the $i$th irreducible letter between positions $x_i$ and $x_i+2$ by $H_{\textrm{ir:hop}}^{i,x_i}$. 
	Recalling the normalized states \eqref{eq:Hir0Gspace},
	we can compute the amplitudes from the expectation values of the respective hopping terms:
	\begin{align}
		\bra{\psi_{x_1,x_2,\ldots,x_k}}H_{\textrm{ir:hop}}^{1,x_1}\ket{\psi_{x_1,x_2,\ldots,x_k}}&=(d-1)\frac{C^{(d-1)}_{s_2-1}}{2C_{s_2}^{(d-1)}},\\
		\bra{\psi_{x_1+2,x_2,\ldots,x_k}}H_{\textrm{ir:hop}}^{1,x_1}\ket{\psi_{x_1+2,x_2,\ldots,x_k}}&=(d-1)\frac{W^{(d)}_{s_1}}{2W^{(d)}_{s_1+1}},\\
		\bra{\psi_{x_1,x_2,\ldots,i_k}}H_{\textrm{ir:hop}}^{1,x_1}\ket{\psi_{x_1+2,x_2,\ldots,i_k}}&=-(d-1)\sqrt{\frac{W^{(d)}_{s_1}}{2W^{(d)}_{s_1+1}}\frac{C^{(d-1)}_{s_2-1}}{2C_{s_2}^{(d-1)}}},\\
		\bra{\psi_{x_1+2,x_2,\ldots,i_k}}H_{\textrm{ir:hop}}^{1,x_1}\ket{\psi_{x_1,x_2,\ldots,i_k}}&=\bra{\psi_{x_1,x_2,\ldots,i_k}}H_{\textrm{ir:hop}}^{1,x_1}\ket{\psi_{x_1+2,x_2,\ldots,i_k}}.
	\end{align}
	This means the amplitudes in \eqref{alphas}, as well as in \eqref{alpha2} and $\eqref{alpha3}$, are 
	\begin{align}
		a_{m}&=\sqrt{\frac{(d-1)C^{(d-1)}_{m-1}}{2C^{(d-1)}_{m}}} \qquad a_{m}'=\sqrt{\frac{(d-1)W^{(d)}_{m-1}}{2W^{(d)}_{m}}}.
	\end{align}
The unique, zero-energy ground state of the effective Hamiltonian can then be written as
\begin{align}
	\ket{g}=\frac{1}{\sqrt{W^{(d)}_{l,k}}}\sum_{j_1=1}^{l-k}\sum_{j_2=j_1+1}^{l-k+1}\sum_{j_3=j_2+1}^{l-k+2}\ldots\sum_{j_k=j_{k-1}+1}^{l}\sqrt{W^{(d)}_{s_1}C^{(d-1)}_{s_2}\ldots C^{(d-1)}_{s_{k+1}}}\ket{j_1}\otimes\ket{j_2}\otimes\ldots\otimes\ket{j_k}.
\end{align}
	
	We apply the projection lemma \cite{doi:10.1137/S0097539704445226} to the orthogonal complement of the PF state with the irreducible string $x$ in the irreducible subspace $\textrm{ir:}x$ and receive:
	\begin{align}
		\Delta(H^\epsilon_{\textrm{ir}}|_{\textrm{ir:}x})
		\geq\epsilon \Delta(H_{\textrm{eff}})-\frac{\bigO(\epsilon^2)\|H_{\textrm{ir:hop}}|_{\textrm{ir:}x}\|^2}{\Delta(H^0_{\textrm{ir}}|_{\textrm{ir:}x})-2\epsilon\|H_{\textrm{ir:hop}}|_{\textrm{ir:}x}\|}.
	\end{align}
	We can choose\footnote{Due to the large exponent in the inverse polynomial lower bound \eqref{Hnothingexp}, $\epsilon$ needs to be rather small.} $\epsilon$ as $n^{-\bigO(1)}$ so that $2\epsilon\|H_{\textrm{ir:hop}}|_{\textrm{ir:}x}\|\ll\Delta(H^0_{\textrm{ir}})$ to receive:
	\begin{align}
		\Delta(H^\epsilon_{\textrm{ir}}|_{\textrm{ir:}x})
		\geq\epsilon\Delta(H_{\textrm{eff}})-\bigO(\epsilon^2)n^{\bigO(1)}.
	\end{align}
	To finish the proof, we now need to find an inverse polynomial lower-bound on the gap of the hopping Hamiltonian $H_\textrm{eff}$.
	
	First we map the effective Hamiltonian to the stochastic matrix describing the following {\em PF weighted hopping (PFH)} Markov chain: 
	\begin{align}
		P_{\textrm{PFH}}((j_1,\ldots,j_k),(j_1',\ldots,j_k'))&=\delta_{(j_1,\ldots,j_k),(j_1',\ldots,j_k')}-\frac{1}{ 2k}\sqrt{\frac{\pi(j_1',\ldots,j_k')}{\pi(j_1,\ldots,j_k)}}\bra{j_1\otimes\ldots\otimes j_k}H_\textrm{eff}\ket{j_1'\otimes\ldots\otimes j_k'}.
	\end{align}
	Its stationary state is $\pi(j_{1},\ldots,j_k)=\braket{j_1,\ldots, j_k}{g}^2$. Since $\ket{g}$ is the unique ground state of $H_\textrm{eff}$, each row sums to 1 and $\pi(\cdot)$ is a unique stationary state. By simple calculations, we receive the following probabilities. It is easy to see that $0\leq P_{\textrm{PFH}}((j_1,\ldots,j_k),(j_1',\ldots,j_k'))\leq 1$ and $P_{\textrm{PFH}}((j_1,\ldots,j_k),(j_1,\ldots,j_k))\geq\frac{1}{2}$.
	
The transition probabilities for the hopping of the first irreducible letter ($i=1$) are the most interesting, as it has a PF word on the left, and a Dyck path on the right:
	\begin{align}
		P_{\textrm{PFH}}((j_1,\ldots,j_k),(j_1+1,\ldots, j_k))& =0-\frac{1}{4k}\frac{\braket{(j_1+1)\otimes\ldots\otimes j_k}{g}}{ \braket{j_1\otimes\ldots\otimes j_k}{g}}\bra{j_1\otimes\ldots\otimes j_k}H_{\textrm{eff}}\ket{(j_1+1)\otimes\ldots\otimes j_k} \nonumber\\
		& =\frac{1}{4k}\frac{\sqrt{W^{(d)}_{j_1}C^{(d-1)}_{j_{2}-j_{1}-2}}}{\sqrt{W^{(d)}_{j_1-1}C^{(d-1)}_{j_{2}-j_{1}-1}}}a_{s_2}a'_{s_1+1}\nonumber\\
		& =\frac{1}{4k}\frac{\sqrt{W^{(d)}_{j_1}C^{(d-1)}_{j_{2}-j_{1}-2}}}{\sqrt{W^{(d)}_{j_1-1}C^{(d-1)}_{j_{2}-j_{1}-1}}}\sqrt{\frac{(d-1)C^{(d-1)}_{j_2-j_1-2}}{2C^{(d-1)}_{j_2-j_1-1}}}\sqrt{\frac{(d-1)W^{(d)}_{j_1-1}}{2W^{(d)}_{j_1}}}\nonumber\\
		& =\frac{d-1}{4k}\frac{C^{(d-1)}_{j_{2}-j_{1}-2}}{C^{(d-1)}_{j_{2}-j_{1}-1}}=\frac{d-1}{4k}\frac{C^{(d-1)}_{s_2-1}}{C^{(d-1)}_{s_2}}.
	\end{align}
The remaining derivations are very similar; we only provide the results.
The probability of transition for the jump of the first irreducible letter to the left is
\begin{align}
	P_{\textrm{PFH}}((j_1+1,\ldots,j_k),(j_1,\ldots,j_k))&=\frac{d-1}{4k}\frac{W^{(d)}_{s_{1}-1}}{W^{(d)}_{s_{1}}}, \label{e89}
\end{align}
and	for $2\leq i\leq k$, the transition probabilities for an irreducible letter between two blocks that correspond to Dyck-paths are:
\begin{align}
	P_{\textrm{PFH}}((j_1,\ldots,j_i,\ldots,j_k),(j_1,\ldots,j_i+1,\ldots,j_k))
	&=\frac{d-1}{4k}\frac{C^{(d-1)}_{s_{i+1}-1}}{C^{(d-1)}_{s_{i+1}}},\\
	P_{\textrm{PFH}}((j_1,\ldots,j_i+1,\ldots,j_k),(j_1,\ldots,j_i,\ldots,j_k))&=\frac{d-1}{4k}\frac{C^{(d-1)}_{s_i-1}}{C^{(d-1)}_{s_i}}. \label{e91}
\end{align}

	The way the first irreducible letter hops here is different from the rest, see \eqref{e89} \& \eqref{e91}. However, we can compare the properties of this chain to another one, where the all the particles behave the same.
 In other words, let us now consider the {\em Catalan weighted hopping (CH)} chain, hopping on the same state space, weighted by colored Catalan weights, with $C$'s instead of $W$'s in \eqref{e89}. Note that this analysis is also applicable for a lower bound on the gap in the irreducible subspaces of the Fredkin chain, which was not explicitly performed in \cite{MovFSCGap},
but it would be necessary if one wanted to break the ground state degeneracy by peak-counting instead of endpoint terms.

	Let us denote the transition matrix of the Catalan weighted chain by $P_{\textrm{CH}}$ and its stationary state $\pi_{\textrm{CH}}(\cdot)$.
	Notice that the stationary state $\pi_{\textrm{CH}}(j_1,j_2,\ldots j_{k})=C_{s_1} C_{s_2}\ldots C_{s_{k+1}}/C_{l,k}$ and the transition probabilities  are the same for the colored and uncolored case, thanks to \eqref{countcoloredCn}: $\frac{1}{4k} (d-1)C_{m-1}^{(d-1)}/ C_m^{(d-1)}=\frac{1}{4k} C_{m-1}/ C_m$.
We can compare the transition probabilities \eqref{e89} of PF weighted hopping to Catalan weighted hopping, thanks to the relationship between $d-1$ colored Catalan numbers and $d$ colored PF numbers  \eqref{Vnkbounds},
with the constant of proportionality 
$c=\frac{d(d-1)}{(d-2)^2}$:
	\begin{align}
		\frac{1}{c}\,
		P_{\textrm{CH}}(s,t)
		\leq
		& 
		P_{\textrm{PFH}}(s,t)
		\leq c\, 
		P_{\textrm{CH}}(s,t),
		\label{eq:RelatingProbabilitiesPFHoppingCatalanHopping}
	\end{align}
i.e. the transition probabilities of PF weighted hopping are between $c^{-1}$ and $c$ times the transition probabilities of Catalan weighted hopping.
Their stationary states must then also be similar, as this implies
	\begin{align}
		\frac{1}{c}\leq\frac{\pi_{\textrm{PFH}}(j_1,\ldots,j_k)}{\pi_{\textrm{CH}}(j_1,\ldots,j_k)}\leq c
		.
		\label{eq:ComparisonPFHoppingCatalanHoppingPi}
	\end{align}

We can learn something about Catalan weighted hopping
from a similar Markov chain with the same stationary state, analyzed in \cite{MartinRandallDecompositionAdsorbingStaircaseWalk2006,MartinRandallAdsorbingStaircaseWalks} -- let's call it {\em Catalan weighted Metropolis hopping (CWMH)}. Its spectral gap was lower-bounded by $\Omega(n^{-19/2})$ \cite[Lemma 6.8.]{MartinRandallDecompositionAdsorbingStaircaseWalk2006}. 

The CWMH has Metropolis transition probabilities $P_{\textrm{CWMH}}(s,t)=\frac{1}{4k}\min\left\{1,\pi_{\textrm{CH}}(t)/\pi_{\textrm{CH}}(s)\right\}$ for two states $s$ and $t$ differing by a hop of a single particle, 0 for states that differ more, and self loop probabilities $P_{\textrm{CWMH}}(s,s)=1-\sum_{t\not=s}P_{\textrm{CWMH}}(s,t) \geq\frac{1}{2}$. As we mentioned earlier, the stationary state of this Markov chain is 
	$\pi_{\textrm{CWMH}}(s)=\pi_{\textrm{CH}}(s)$.
	Let us compare these transition probabilities with those of Catalan weighted hopping. Consider two states $s=(j_1,\ldots ,j_i,\ldots j_{k})$ and $t=(j_1,\ldots ,(j_i+1),\ldots j_{k})$, which differ only by the position of particle $i$. The probability of transition is
	$P_{\textrm{CH}}(s,t)=\frac{1}{4k}\min\left\{1,\frac{C_{s_i+1}C_{s_{i+1}-1}}{C_{s_i}C_{s_{i+1}}}\right\}$, similarly for $P_{\textrm{CH}}(t,s)$. 
	Because Catalan numbers satisfy $C_{n+1}\leq 4 C_n$, we get
\begin{align}
	\frac{1}{4}P_{\textrm{CWMH}}(s,t) 
		\leq P_{\textrm{CH}}(s,t),\qquad 
	\frac{1}{4}P_{\textrm{CWMH}}(t,s) 
		\leq P_{\textrm{CH}}(t,s).
	\label{eq:CatalanWHoppingAndCatalanBlockWeightedHopping}
\end{align}
	These bounds will allow us to relate the gap of the PF weighted hopping chain to the gap of the CWMH chain.
	
		The spectral gap of a reversible Markov chain satisfies \cite{LevinPeresWilmer2006}:
\begin{align}
	\Delta = 1-\lambda_2
	= \inf_{\substack{f:\Omega\rightarrow \R,\\ \Var_\pi(f)\not=0}}\frac{\DirichletForm_\pi(f,f)}{\Var_\pi(f)}
	= \inf_{\substack{f:\Omega\rightarrow \R,\\ 
	\Var_\pi(f)\not=0}}\frac{\frac{1}{2}\sum_{x,y\in\Omega}(f(x)-f(y))^2\pi(x)P(x,y)}{\frac{1}{2}\sum_{x,y\in\Omega}(f(x)-f(y))^2\pi(x)\pi(y)}.
	\label{Diriform}
\end{align}
Let $\Delta_{\textrm{CWMH}}$ be the spectral gap of the CWMH.
We will show that 
$\Delta_{\textrm{PFH}}\geq \frac{1}{4 c^4}\Delta_{\textrm{CWMH}}$. 
Let $\DirichletForm_{\pi}(f,f)$ and $\tilde{\DirichletForm}_{\tilde{\pi}}(f,f)$ be two Dirichlet forms associated with the PF weighted hoping and CWMH chains, respectively. Combining \eqref{eq:RelatingProbabilitiesPFHoppingCatalanHopping}, \eqref{eq:CatalanWHoppingAndCatalanBlockWeightedHopping}, and \eqref{eq:ComparisonPFHoppingCatalanHoppingPi} for any $x,y\in\Omega$ we find 
$\frac{1}{4c^2}\pi_{\textrm{CWMH}}(x)P_{\textrm{CWMH}}(x,y)\leq\pi_{\textrm{PFH}}(x)P_{\textrm{PFH}}(x,y)$ 
and 
$c^2\, 
	\pi_{\textrm{CWMH}}(x)\pi_{\textrm{CWMH}}(y)\geq 
		\pi_{\textrm{PFH}}(x)\pi_{\textrm{PFH}}(y)$. 
	Hence for all $f:\Omega\rightarrow\R$ such that $\Var_{\tilde{\pi}}(f)\not=0$ holds, we have
\begin{align}
	\frac{1}{4c^4}\frac{\tilde{\DirichletForm}_{\tilde{\pi}}(f,f)}{\Var_{\tilde{\pi}}(f)}\leq \frac{\DirichletForm_{\pi}(f,f)}{\Var_\pi(f)},
\end{align}
and thus also $\Delta_{\textrm{PFH}}\geq \frac{1}{4 c^4}\Delta_{\textrm{CWMH}}$, thanks to \eqref{Diriform}.
	
	Plugging in the inverse-polynomial lower bound on the CWMH gap \cite[Lemma 6.8.]{MartinRandallDecompositionAdsorbingStaircaseWalk2006} completes the proof. The spectral gap of the PF model Hamiltonian in any  unbalanced subspace labeled by the irreducible string $x$ is thus lower bounded by an inverse polynomial $\Delta(H|_{\textrm{ir:x}})=n^{-\bigO(1)}$.
\end{proof}

\subsubsection{An upper bound on the gap}

In this Section, we prove an $\bigO(n^{-2})$ upper bound on the gap of the PF model. As mentioned in \cite{MovaShor}, such a system can not be described by a relativistic CFT -- for that, its gap would have to vanish as $\Delta(H) = \frac{2\pi\Delta}{n}$, with $\Delta$ the scaling dimension \cite[p. 412]{CFT}.

Our upper-bound is analogous to the approach used for the Fredkin spin chain in \cite{MovFSCGap} and 
Motzkin spin chain in \cite{MovaShor}. The trick is to ``twist'' the uniform superposition of the individual Dyck walks (Motzkin walks) by phases proportional to the area under them, 
and show that the resulting state has low-enough energy and low overlap with the original ground state. 
In \cite{MovFSCGap, MovaShor}, the proof is based on the universality of Brownian motion, and on the convergence of Dyck random walk to the Brownian excursions. Here, we analyze the properties the PF Hamiltonian and utilize the comparison of PF words and $d-1$ colored Dyck paths \eqref{Vnkbounds}.

\begin{theorem}[PF-model gap upper-bound in the fully reducible subspace]\label{thm:PFmodelGapUpperBound} 
The gap of the pair-flip model is upper bounded by the gap in its fully reducible subspace, which is upper bounded as 
$\Delta(H|_{\varnothing})=\bigO(n^{-2})$.
\end{theorem}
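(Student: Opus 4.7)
The plan is variational: build an explicit ``twisted'' state $\ket{\phi}$ nearly orthogonal to the ground state $\ket{\psi_0}$ of $H|_{\varnothing}$ with Rayleigh quotient $O(n^{-2})$, in the spirit of the Brownian-excursion arguments in \cite{MovFSCGap,MovaShor}. For each fully reducible PF word $w$ of length $2n$, let $h_{j}(w)$ denote the stack depth after reading the first $j$ letters (the ``mountain'' profile of Figure~\ref{fig:reduction}) and define the area functional $A(w)=\sum_{j=1}^{2n}h_{j}(w)$. Set
\begin{align}
	\ket{\phi}=\frac{1}{\sqrt{W_n^{(d)}}}\sum_{w\in\PF^{(d)}_{n}}e^{i\theta A(w)}\ket{w},\qquad \theta=\frac{c}{n^{3/2}},
\end{align}
for a constant $c>0$ to be chosen.

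First I would prove a local Lipschitz bound: a pair flip $w\leftrightarrow w'$ at positions $i,i+1$ changes only the height $h_{i}$, and by at most $\pm 2$. Indeed, positions $<i$ are untouched, and in both $w$ and $w'$ the stack state after reading position $i+1$ returns to exactly the configuration it had before position $i$ (a push--pop of $t$ in one case, of $t'$ in the other), so $h_{j}$ is identical for all $j\neq i$, while $h_{i}$ toggles between $h_{i-1}+1$ (matched pair) and $h_{i-1}-1$ (unmatched). Expanding $\langle\phi|H|\phi\rangle$ as a sum of $|\bra{v}\phi\rangle|^{2}$ over the edge vectors $\ket{v}=(\ket{tt}-\ket{t't'})_{i,i+1}\otimes\ket{u}$ coming from \eqref{HXXflip} and using $|A(w)-A(w')|\le 2$ on each edge gives
\begin{align}
	\langle\phi|H|\phi\rangle=\frac{1}{W_n^{(d)}}\sum_{(w,w')\in E}\bigl|e^{i\theta A(w)}-e^{i\theta A(w')}\bigr|^{2}\le \frac{4\theta^{2}|E|}{W_n^{(d)}}\le 2(d-1)(2n-1)\theta^{2}=O(n^{-2}),
\end{align}
since each word has at most $(d-1)(2n-1)$ pair-flip neighbors, whence $|E|\le\tfrac{1}{2}(d-1)(2n-1)W_n^{(d)}$.

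The more delicate step is to bound the overlap $\langle\psi_0|\phi\rangle=\mathbb{E}[e^{i\theta A(w)}]$ in modulus away from $1$. Because $A(w)$ depends only on the mountain-profile shape, the distribution on Dyck shapes of length $2n$ induced by a uniform PF word is proportional to $d^{u_0}(d-1)^{n-u_0}$, where $u_0$ is the number of arches; the arch-decomposition generating function behind \eqref{W0asym} yields $\mathbb{E}_{\mathrm{Dyck}}[(d/(d-1))^{u_0}]\to d(d-1)/(d-2)^{2}<\infty$, so the tilt from the uniform Dyck measure is bounded uniformly in $n$. The invariance principle for Dyck walks then gives convergence of $A/n^{3/2}$ to a non-degenerate continuous random variable $\xi$ (the area under a Brownian excursion, possibly tilted by a bounded function of the arch count). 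Its characteristic function $\hat\xi(c)=\mathbb{E}[e^{ic\xi}]$ is continuous with $|\hat\xi(c)|<1$ for every $c\ne 0$, so choosing such a $c$ yields $|\langle\psi_0|\phi\rangle|\le 1-\epsilon$ for all large $n$ and some $\epsilon>0$ independent of $n$.

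Finally, setting $\ket{\phi_\perp}=\ket{\phi}-\langle\psi_0|\phi\rangle\ket{\psi_0}$ gives $\|\phi_\perp\|^{2}\ge 2\epsilon-\epsilon^{2}>0$ and $\langle\phi_\perp|H|\phi_\perp\rangle=\langle\phi|H|\phi\rangle$ since $H\ket{\psi_0}=0$. The min--max principle applied inside the fully reducible subspace yields
\begin{align}
	\Delta(H|_{\varnothing})\le \frac{\langle\phi|H|\phi\rangle}{\|\phi_\perp\|^{2}}=O(n^{-2}),
\end{align}
and $\Delta(H)\le\Delta(H|_{\varnothing})$ because the fully reducible subspace is invariant and contains a zero-energy ground state. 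The main obstacle I anticipate is the overlap estimate: the Lipschitz and energy bounds are essentially mechanical, but rigorously ensuring $|\hat\xi(c)|<1$ at the chosen scale requires the invariance principle for uniform Dyck walks together with careful control of the bounded $u_0$-tilt, in direct analogy with the Brownian-excursion arguments of \cite{MovFSCGap,MovaShor}.
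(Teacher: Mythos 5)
Your high-level strategy coincides with the paper's: construct a ``twisted'' state $\ket{\phi}$ whose amplitudes are the uniform ground-state amplitudes multiplied by $e^{i\theta A(w)}$ with $\theta=\Theta(n^{-3/2})$ and $A(w)$ the area under the mountain profile, then argue small energy and bounded overlap. Your Lipschitz claim (a pair flip changes only $h_i$, by $0$ or $\pm2$, depending on whether the stack top before position $i$ equals $t$, $t'$, or neither) is correct, and your energy bound $\langle\phi|H|\phi\rangle\le \tfrac{4\theta^2|E|}{W_n^{(d)}}=O(n^{-2})$ is essentially the same calculation as the paper's (the paper phrases it via $\#_{j,j+1}(1-\cos 4\pi\tilde\theta)$ and the trivial bound $\#_{j,j+1}\le W_n^{(d)}$, which is the same order).

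Where you genuinely diverge from the paper is the overlap estimate, and that is also where your argument has a real gap. You want to push the invariance principle directly through the PF measure, which on mountain-profile \emph{shapes} is the uniform Dyck measure tilted by $\bigl(\tfrac{d}{d-1}\bigr)^{u_0}$. The fact that $\mathbb{E}_{\mathrm{Dyck}}\bigl[(d/(d-1))^{u_0}\bigr]$ converges does not by itself let you conclude that $A/n^{3/2}$ has the same non-degenerate continuum limit under the tilted law: the tilt is unbounded (though with exponentially decaying tails), and transferring a central-limit-type statement across a size-biased change of measure requires controlling the \emph{joint} law of the arch count $u_0$ and the rescaled area, e.g.\ showing they asymptotically decouple. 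You flag this as the ``main obstacle,'' and indeed it is the step that is missing; as written, the claim that $\xi$ is ``the area under a Brownian excursion, possibly tilted by a bounded function of the arch count'' is heuristic rather than proven.

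The paper sidesteps this entirely with a cheaper argument. Using the area-preserving bijection $\PF(\cdot,A)$ between $(d-1)$-colored Dyck words and the subset of PF words whose base level avoids one fixed color, it splits the overlap by the triangle inequality into a Dyck part and a remainder, bounds the remainder trivially by $1-\tfrac{C_n^{(d-1)}}{W_n^{(d)}}$, and on the Dyck part invokes the already-proven Fredkin constant $\kappa_{\mathrm{FR}}<1$ from \cite{MovFSCGap}. Combined with the bound $C_n^{(d-1)}/W_n^{(d)}\ge (d-2)^2/(d(d-1))$ from \eqref{Vnkbounds}, this gives $|\langle\phi|\PF^{(d)}_n\rangle|\le 1-\tfrac{(d-2)^2}{d(d-1)}(1-\sqrt{\kappa_{\mathrm{FR}}})$, a fully rigorous constant strictly below $1$, with no fresh probabilistic limit theorem needed. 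If you want to complete your own route, you must either prove the decoupling of $u_0$ from $A/n^{3/2}$ under the tilted measure, or (much simpler) adopt the paper's splitting trick, which converts the question into one that was already answered for Dyck paths.
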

\begin{proof}
	Similarly as in \cite{MovFSCGap} and \cite{MovaShor}, we will choose a state $\ket{\phi^{(d)}_n}$ with a small overlap with the PF model ground state $|\braket{\phi^{(d)}_n}{\PF^{(d)}_n}|^2\leq \kappa<1$, where $\kappa=\bigO(1)$.
	We will show that $\bra{\phi^{(d)}_n}(H|_{\varnothing})\ket{\phi^{(d)}_n}=\bigO(n^{-2})$, which will imply an upper bound on the gap $\Delta(H|_{\varnothing})\leq (1-\kappa)^{-1}\bra{\phi^{(d)}_n}H|_{\varnothing}\ket{\phi^{(d)}_n}$.
	
	The state used to upper bound the gap of Fredkin spin chain in \cite{MovFSCGap} was:
	\begin{align}
		\ket{\phi^{(d-1)}_{\textrm{FR},n}}=\frac{1}{\sqrt{C^{(d-1)}_n}}\sum_{p\in \D^{(d-1)}_n}e^{2\pi i \tilde{A}_p\tilde{\theta}}\ket{p},
		\label{FRtwist}
	\end{align}
	where the $\tilde{A}_p$ was the area below the Dyck walk $p$, and $\tilde{\theta}$ a constant set to $\frac{n^{-3/2}}{\sqrt{10/3-\pi}}$.
	For the PF model, we will use a similar state, made from PF words, and involving the area under the corresponding mountain profile of the words (see Figure~\ref{fig:wordspathstrees}).
	\begin{align}
		\ket{\phi^{(d)}_{n}}=\frac{1}{\sqrt{W^{(d)}_n}}\sum_{p\in \PF^{(d)}_n}e^{2\pi i \tilde{A}_p\tilde{\theta}}\ket{p}.
	\end{align}
Let us calculate its overlap with the PF ground state. We split $\ket{\phi^{(d)}_{n}}$ into two parts: 
\begin{enumerate}
\item States corresponding to $d-1$ colored Dyck walks -- restricting to PF states which contain only $d-1$ colors on the base level $\ket{(\PF^{(d)}_n|_{\D})}$.
\item States corresponding to the rest of the PF words $\ket{(\PF^{(d)}_n|_{\PF\setminus \D})}$. 
\end{enumerate}
We will now show that $|\braket{\phi^{(d)}_{n}}{\PF^{(d)}_n}|^2 \leq \kappa<1$ and $\kappa\in\bigO(1)$.
\begin{align}
	|\braket{\phi^{(d)}_{n}}{\PF^{(d)}_n}|
	&\leq |\braket{\phi^{(d)}_{n}}{(\PF^{(d)}_n|_{\D})}|
	 + |\braket{\phi^{(d)}_{n}}{(\PF^{(d)}_n|_{\PF\setminus \D})}|\\
	&= \frac{1}{W^{(d)}_n}\left(
		\left|\sum_{p\in \D^{(d-1)}_n} e^{2\pi i \tilde{A}_p\tilde{\theta}}\right|
		+ \left|\sum_{p\in \PF_n^{(d)}\setminus\{\PF(x,A):x\in \D^{(d-1)}_n\}} e^{2\pi i \tilde{A}_p\tilde{\theta}}\right|
		\right)\nonumber\\
	&\leq \frac{1}{W^{(d)}_n}\left(C^{(d-1)}_n|\braket{\phi^{(d-1)}_{\textrm{FR},n}}{\D^{(d-1)}_n}| + (W_n^{(d)}-C_n^{(d-1)}) \right)\nonumber\\
	&= 1 - \frac{C_{n}^{(d-1)}}{W_n^{(d)}}
		\left(1-|\braket{\phi^{(d-1)}_{\textrm{FR},n}}{\D^{(d-1)}_n}|\right)
		\leq 1- \frac{(d-2)^2}{d(d-1)} \left(1- \sqrt{\kappa_{\textrm{FR}}}\right), \nonumber
	\end{align}
where we rely on \eqref{Vnkbounds} and recall that $\kappa_{\textrm{FR}}<1$ is a constant upper bound on the magnitude of the overlap of the state \eqref{FRtwist} and the Fredkin chain ground state.
Thus, $|\braket{\phi^{(d)}_{n}}{\PF^{(d)}_n}|^2$ is upper bounded by some constant $\kappa$ strictly less than 1.

	Let us explore the expectation value of the energy of the state $\ket{\phi^{(d)}_n}$. The Hamiltonian $H|_{\varnothing}$ relates two words if and only if they differ by applying one of the transition rules. The local rules can leave the area under the walk the same (when we recolor a peak), or change it up or down by two (when a peak becomes a valley, as in $1221 \leftrightarrow 1111$). If the area is the same and the words differ just by recoloring, the contribution to the expected value is 0. For each area change by two, we receive the contribution of $1-\cos{4\pi\tilde{\theta}}$ to the expectation value. To see that, consider two words with a different area under the PF walk connected by the transition rule $AA\leftrightarrow BB$ with the Hamiltonian term $\frac{1}{2}\ketbra{AA-BB}$. The number 1 is for connecting the word containing $AA$ (the same for $BB$) to itself. The term $-\cos{4\pi\tilde{\theta}}=-\frac{1}{2}e^{4 i\pi\tilde{\theta}}-\frac{1}{2}e^{-4i\pi\tilde{\theta}}$ comes from connecting the word with $AA$
	to the word with $BB$, with a different area under the paths, $-\frac{1}{2}e^{4 i\pi\tilde{\theta}}$ for increasing and $-\frac{1}{2}e^{-4i\pi\tilde{\theta}}$ for decreasing the area by two.
\begin{align}
	\sum_{j=1}^{2n-1}\bra{\phi^{(d)}_n}H_{j,j+1}\ket{\phi^{(d)}_n}
	&= \frac{1}{W_n^{(d)}}\sum_{j=1}^{2n-1}\#_{j,j+1}(1-\cos (4\pi\tilde{\theta})) \nonumber\\
	& \leq \frac{1}{W_n^{(d)}}\sum_{j=1}^{2n-1}\#_{j,j+1} 8 \pi^2 \tilde{\theta}^2
	= \frac{8\pi^2n^{-3}}{W_n^{(d)}(10/3-\pi)} \sum_{j=1}^{2n-1}\#_{j,j+1},
\end{align}
where $\#_{j,j+1}$ is the number of area-changing transitions involving positions $j,j+1$. Consider a word $vAAw$ with a pair on this position. Since $A\in\{1,\ldots,d\}$, there are $d$ such words.
In $d-1$ of them, $AA$ is a peak, while in the remaining one, $AA$ is a valley.
These $d$ words are connected to each other by a full graph of transitions, but only $d-1$ of these change a peak to a valley, the others recolor the $AA$ peak.
Therefore, for each position $j,j+1$, the total number of words with a transition there
is an upper bound on the number of area-changing transitions. This is no more than the total number of words $W_n^{(d)}$.
Summing over all positions $j$, we get a factor of $n$, giving us an upper bound 
\begin{align}
	\bra{\phi^{(d)}_n}(H|_{\varnothing})\ket{\phi^{(d)}_n}
	= E = \bigO(n^{-2}).
\end{align}

The overlap of the new twisted state with the ground state is thus bounded, and its energy is low. This implies that even if it were made only from the ground state and the first excited state $\ket{\psi_1}$ as 
\begin{align}
	\left(\sqrt{\kappa} \bra{\PF^{(d)}_n} + \sqrt{1-\kappa}\bra{\psi_1}
	\right)
	(H|_{\varnothing}) 
	\left(\sqrt{\kappa} \ket{\PF^{(d)}_n} + \sqrt{1-\kappa}\ket{\psi_1}
	\right)
	&\leq E, \nonumber\\
	(1-\kappa) \bra{\psi_1}(H|_{\varnothing}) \ket{\psi_1}
	&\leq E,
\end{align}
the energy of $\ket{\psi_1}$ and thus the gap of $H|_{\varnothing}$ must be upper bounded by $\frac{E}{1-\kappa} = \bigO\left(n^{-2}\right)$, which we wanted to prove.

\end{proof}


\subsection{Making the ground state unique.}
\label{sec:pairs}

Let us recall and adapt what we proposed for the qubit PF model in Section~\ref{sec:PF2breakMAIN}. We claim that we can break the ground state degeneracy of the $d\geq 3$ PF model by adding a little bit of frustration in the form of a translationally invariant pair-counting term \eqref{HXXcost}
\begin{align}
	H_{\textrm{cost}}^{\textrm{PF}} &= - \sum_{i=1}^{N-1} \sum_{t=1}^{d} \ket{tt}\bra{tt}_{i,i+1} 
\end{align}
as a  perturbation -- with a small prefactor $\delta=1/\textrm{poly}(n)$.
For all words (computational basis states), this term counts the number of subsequent identical letter pairs, as illustrated in Figure~\ref{fig:peakcount}c). Moreover, it keeps the structure of the invariant subspaces intact.

Thanks to the former ground states coming from different invariant subspaces, the first-order perturbation theory correction to the energy of each former ground state is $\delta \bra{\psi_0} H_{\textrm{cost}}^{\textrm{PF}} \ket{\psi_0}$. Thanks to the ground states being uniform superpositions of basis states from a subspace, this correction is proportional to the average number of pairs in all words from the subspace. 

\begin{figure}[h]
	\begin{center}
		\includegraphics[width=16.5cm]{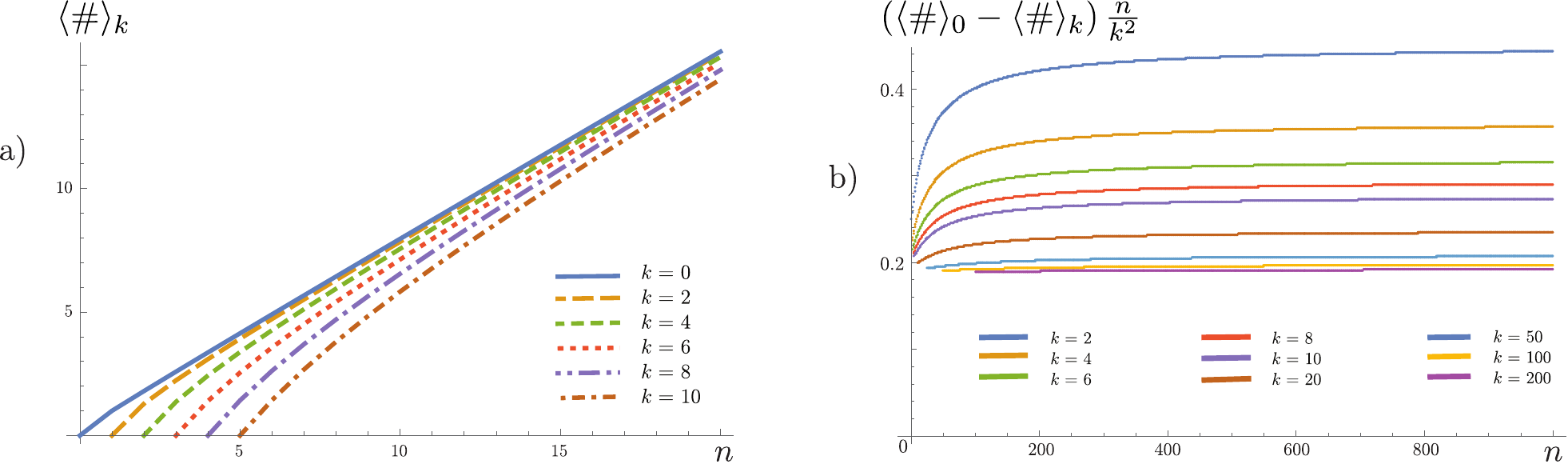}
		\caption{Numerics for the average pair number in $d=3$ PF model words of length $2n$ with $k$ irreducible letters.
		a) The average number of pairs for $n\leq 20$ and low $k$ converges to the $\langle \# \rangle \sim \frac{3}{4}(n+1)$ scaling shown in \eqref{fn32}. 
		b) A closer look at the rescaled gap between the average number of pairs in different subspaces $\left(\langle \# \rangle_{0} - \langle \# \rangle_{k} \right) \frac{n}{k^2}$ for large $n\leq 1000$.
		For clarity, we only plot the values for $k\leq 200$, the higher values of $k$ result in esentially horizontal lines. Observe that the number of pairs in the subspace with $k$ irreducible letters
		is $\Theta\left(k^2/n\right)$ lower than in the fully reducible subspace.}\label{fig:pairnumerics}
	\end{center}
\end{figure}

In Appendix~\ref{sec:GFasymptopairs}, we utilize analytic combinatorics to prove that for constant $k$, the average number of pairs is the largest for the fully-reducible subspace, with a $\Theta\left(n^{-1}\right)$ gap to the closest one, and further scaling as $\Theta\left(3k^2/16n\right)$ for subspaces with $k$ irreducible letters. 
On the other hand, in Appendix~\ref{sec:countPFpairs} we provide a recursive formula for calculating the average number of pairs of identical letters for groundstates from different invariant subspaces. 
Our numerics up to $n = 1000$ (a chain of length $2n=2000$) for all $k$ (shown in Figure~\ref{fig:pairnumerics} up to $k=200$)
indicate that the analytic behavior proven for constant $k$ works for large $k$ as well.

The pair-counting perturbation term thus selects a unique ground state, close to the uniform superposition of fully reducible words.
The original Hamiltonian has an inverse-polynomial gap in each of the invariant subspaces. The gap of the perturbed Hamiltonian will thus remain lower bounded by a small inverse polynomial, depending on the scaling of the inverse polynomial $\delta$.
 
\begin{figure}
\begin{center}
	\includegraphics[width=8cm]{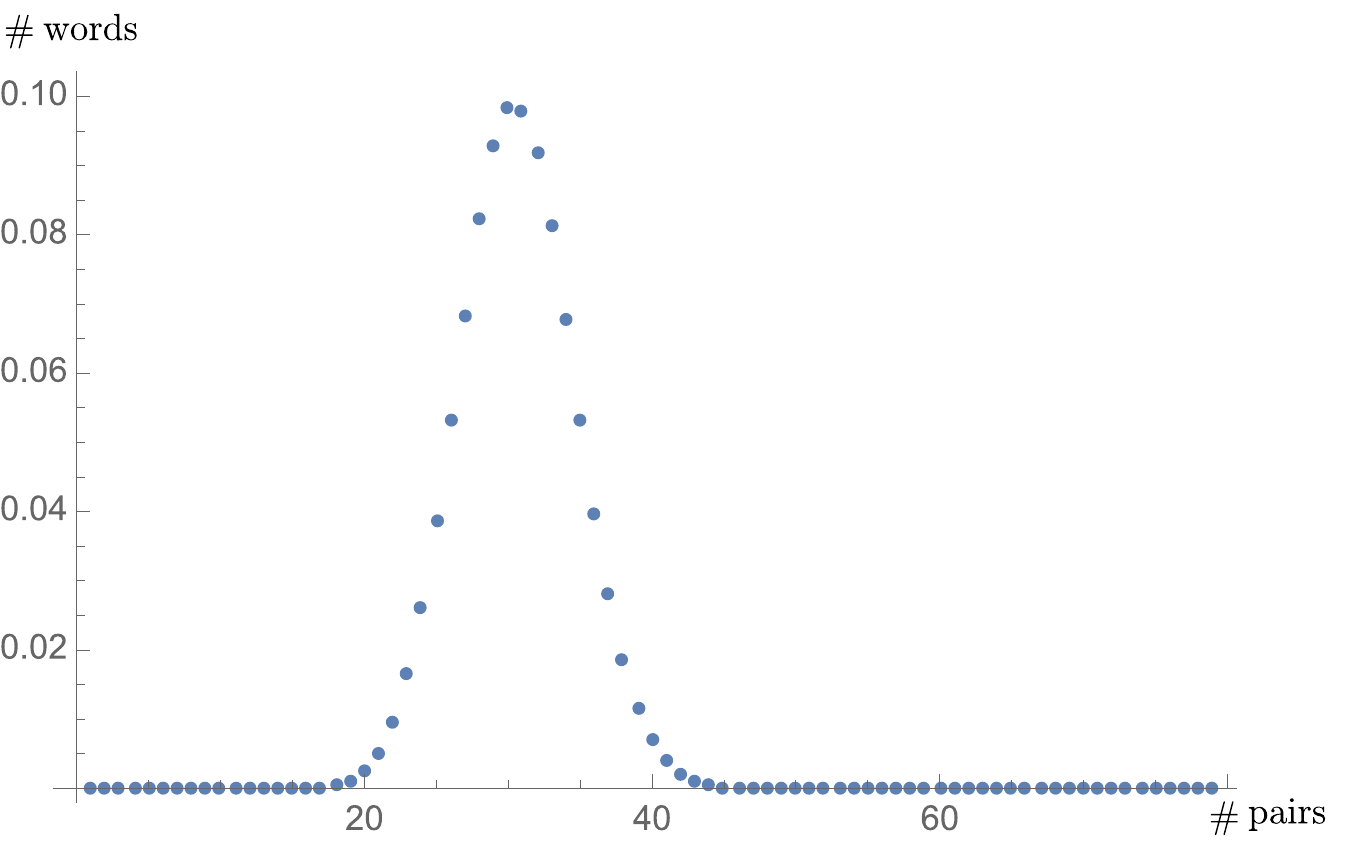}%
	\caption{
	A histogram of the number of pairs in fully reducible, $d=3$ PF model words with length $2n=80$. This distribution is concentrated around the value $3(n+1)/4$ \eqref{fn32}, and has width scaling as $\sqrt{n}$ \eqref{pairvariance}. Note that the maximum possible number of neighboring letter pairs in a word of length $2n$ is $2n-1$ ($AAA\cdots A$).}
	\label{fig:pairdistribution}
\end{center}
\end{figure}
As for the $d=2$ case, for large $n$, the distribution of words with a given number of pairs is narrowly centered around the average number of pairs, as illustrated in Figure~\ref{fig:pairdistribution}, with variance $\bigO(n)$ \eqref{pairvariance}, as shown in Appendix~\ref{sec:GFasymptopairs}. We numerically observe the robustness of the ground state entanglement entropy properties -- for small $\delta$, the ground state remains sufficiently close to the original uniform ground state. Nevertheless, an analytic proof of this remains a direction for future work, together with the possibility of understanding the $d\geq 3$ PF model on a ring.

\section{Discussion and open problems}

Our goal in this paper was to probe the boundaries of what is possible in terms of the gap/entanglement entropy tradeoff in translationally invariant, low-dimensional qudits chains, especially rewriting Hamiltonians with projector terms. We have made a breakthrough with our qutrit PF model. 

We discovered the PF model by realizing that we don't need ``movement'' terms to make a model interesting. We started with another rewriting Hamiltonian: a ``pair-creation'' interaction with particles that appear and disappear in pairs, and also move as the bracket ``particles'' in the Motzkin chain. Thus model has local dimension $d+1$, while its ground state entropy is comparable to the Motzkin chain with $d-1$ colors (i.e. local dimension $2d-1$). This is why the PF model, which receives an effective particle type from parity, is preferable.

There is still a number of open questions about the PF model.
We investigated the degeneracy breaking by preferring pairs analytically (for a range of parameters) and numerically. There is more work left to do for a full analytic solution for the expected number of pairs and the impact of adding such terms as a perturbation on entanglement robustness.
Analogous analytic calculations with a high degree of accuracy are required to analyze the degeneracy breaking impact of peak-counting for the Fredkin chain and particle-counting for the Motzkin chain.

The model we investigated includes frustration. We already know that the ground state entanglement entropy scaling of the $d=2$ PF model can not be achieved in frustration-free systems for qubits. However, we don't yet know what is the best thing in terms of the entanglement entropy/gap tradeoff possible for frustration-free qutrit systems.

Our model is truly translationally invariant, while the Motzkin and Fredkin chain included boundary terms. It would be very interesting to take these models to a system with periodic boundary conditions, with degeneracy broken by the particle-, peak/valley-, or pair-counting terms. However, the calculations would complicate quite a bit, while the precise tuning of the degeneracy-breaking terms would have to be addressed.

It would also be interesting to see if a parametrization similar to the deformed Motzkin/Fredkin spin chain is possible also for the PF model, since the transition rules are rather symmetric.

Moreover, thanks to the symmetry of transition rules, we believe tighter gap lower bounds could be obtained by different techniques.

Let us conclude with speculation about further directions. Would like to focus on possible applications -- investigating the error correcting properties and robustness for this class of models. Second, we are interested in whether the Hamiltonian can be obfuscated while keeping the terms simple, so that someone with the knowledge of the transformation can still prepare the ground state of such a system. In particular, we are interested in formulating a computational problem about rewriting Hamiltonians that would be natural for an intermediate computational class, i.e. verifiable on a less-than-universal quantum computer.

\section*{Acknowledgements}
We thank Ramis Movassagh, Peter Shor, Sergey Bravyi, Dorit Aharonov for interesting discussions.
LC has done the work on this paper during his PhD studies at the Faculty of Mathematics, Physics and Informatics of the  Comenius University in Bratislava.
DN's research has received funding from the People Programme (Marie Curie Actions) EU's 7th Framework Programme under REA grant agreement No. 609427. This research has been further co-funded by the Slovak Academy of Sciences.
LC and DN were also supported by the Slovak Research and Development Agency grant QETWORK APVV-14-0878 and VEGA OAQS 2/0130/15.


\bibliographystyle{plain}	
\bibliography{PFmodelpaper}


\appendix

\section{From well-bracketed words to reducible words: technical tools}
\label{sec:technical}

In this technical Appendix we introduce the basic facts about well-bracketed strings, Catalan numbers, Dyck and Motzkin paths that were necessary for the calculations in the bracket model \cite{CriticalityWithoutFrustration, MovaShor}.
We then build on these and prove several results about the words of the pair-flip model.

Throughout the paper, we label $\D_n^{(s)}$ the set of all Dyck paths with $s$ colors, $\M_n^{(s)}$ the set of all Motzkin paths with $s$ colors, and $\PF_n^{(d)}$ the set of all PF model words with $d$ colors.

\subsection{Catalan numbers}
The famous Catalan numbers OEIS A000108 \cite{OEIS} sequence of integers  1, 1, 2, 5, 14, 42, 132, 429, 1430, 4862, \dots, appears in many combinatorial problems \cite{koshy2008catalan} -- e.g. in the ballot problem, counting non-crossing partitions, or certain paths on grids. 
They play a major role in the understanding of well-bracketed words and the bracket model \cite{CriticalityWithoutFrustration}. Let us look at them in more detail, and sketch several techniques for working with these sequences (recursive relations, generating functions, asymptotic scaling).

\begin{figure}
\begin{center}
\includegraphics[width=10cm]{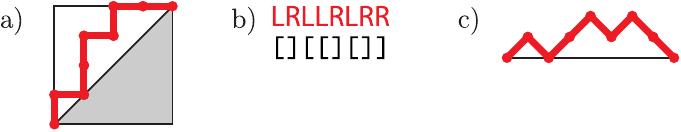}%
\caption{
The Dyck paths can be seen as a) up- and right- moving paths on an $n\times n$ square grid that do not go below the diagonal, b) well-bracketed words, and c) mountain profiles.
}
\label{fig:pathdyck}%
\end{center}
\end{figure}

\subsubsection{Dyck paths}
One of the interpretations of the Catalan number $C_n$ is the number of {\em Dyck paths}: up- and right-moving paths on an $n\times n$ grid that do not go below the diagonal as in Figure~\ref{fig:pathdyck}a). 
It is a combinatorial quantity appearing in many different contexts (voting, walks on trees, polygon dissection, Young tableaux, permutations, mountain ranges, well-bracketed expressions), with a rich literature on the subject. \cite{koshy2008catalan}.
Let us look at the connection between Dyck paths, well-bracketed expressions and mountain ranges.

One can also view Dyck paths as well-bracketed strings, as in Figure~\ref{fig:pathdyck}b) by replacing an ``up'' move by a left bracket ``['' and a ``right'' move by a right bracket ``]''. On the other hand, by turning the image around by 45 degrees as in Figure~\ref{fig:pathdyck}c), we can also view the Dyck paths as mountain ranges. When these mountain ranges are also allowed to have plateaus as in Figure~\ref{fig:pathmotzkin}c), the paths are called {\em Motzkin paths}.

Counting Dyck paths can be done for example by Andr\'{e}'s reflection\footnote{Following \cite{AndreReflect}, it is the number of all paths on an $n\times n$ grid, minus the number of ``bad'' paths that go over the diagonal. To count those, take each ``bad'' path, and reflect it (move right $\leftrightarrow$ move up) from the point after it crosses the diagonal. It will necessarily end at $(n+1,n-1)$. Note that all up- and right-moving paths on an $(n+1)\times(n-1)$ grid can be obtained in this fashion. Therefore,  $C_n = \binom{2n}{n} - \binom{2n}{n-1}$.} method:
\begin{align}
	C_n = \binom{2n}{n} - \binom{2n}{n-1}= \frac{1}{n+1}\binom{2n}{n}. \label{catalan}
\end{align}
We can use this to derive a handy recursive expression
\begin{align}
	C_{n+1} = \frac{1}{n+2}\binom{2(n+1)}{n+1} 
	= \frac{(2n+2)(2n+1)}{(n+2)(n+1)^2} \binom{2n}{n} 
	= \frac{4n+2}{n+2}\, C_n,
	\label{Cnrecursive}
\end{align}
and prove two facts about the ratio of successive Catalan numbers:
\begin{align}
	X_n &= \frac{C_{n}}{C_{n-1}} = \frac{4n-2}{n+1} \leq 4 \\
	X_{n+1} &= \frac{C_{n+1}}{C_{n}} = \frac{4n+2}{n+2} \geq \frac{4n-2}{n+1} \geq X_{n}.
	\label{Xngrow}
\end{align}

We can view the Catalan numbers $\{C_n\}_{n=0}^{\infty}$ as coefficients in the series $\sum_{n=0}^\infty C_n z^n$. We claim that it is equal to the generating function 
\begin{align}
		C(z) = \sum_{n=0}^\infty C_n z^n 
		= \frac{1-\sqrt{1-4z}}{2z}.
		\label{Cgenerate}
\end{align}
We can prove this using the recursive relation
\begin{align}
	C_n = \sum_{i=0}^{n-1} C_{n-i-1} C_{i},
	\label{recursiveCn}
\end{align} 
saying that for $n\geq 1$, any well bracketed word with $n$ pairs can be be built as $(u)v$, from two
shorter well-bracketed words $u$ with $i$ pairs and $v$ of length $n-i-1$ pairs.
Thus, we can write
\begin{align}
	C(z) &= \sum_{n=0}^{\infty} C_n z^n 
		= 1 + \sum_{n=1}^{\infty}  \sum_{i=0}^{n-1} C_i C_{n-1-i} z^n 
	 = 1 + z \sum_{m=0}^{\infty} \sum_{i=0}^{m} C_i z^i C_{m-i} z^{m-i} 
	= 1 + z \left(C(z)\right)^2.
\end{align}
This is a quadratic equation for $C(x)$, whose solution is \eqref{Cgenerate}.

One can also derive (see \cite{FlajoletSedgewick}, p.384, Figure VI.3) a precise asymptotic large-$n$ expansion for the Catalan numbers:
\begin{align}
	C_n = \frac{4^n}{\sqrt{\pi n^3}} \left(
		1 - \frac{9}{8n}
		+ \frac{145}{128n^2}
		- \frac{1155}{1024n^3}
		+ \frac{36939}{32768n^4}
		- \frac{295911}{262144n^5}
		+ O\left(n^{-6}\right)  \right).
		\label{asymcatalan}
\end{align}
This result uses the asymptotic expansion of the term $-(1-4z)^{\frac{1}{2}}$ in the generating function, which has the general form, called {\em standard function scale} \cite{FlajoletOdlyzkoSingularityAnalysis,FlajoletSedgewick}). For any $\alpha \not= 0,-1,-2, \ldots$ and $\zeta\not=0$, we have a full asymptotic expansion in descending powers of $n$
\begin{align}
	[z^n](1-\zeta z)^{-\alpha} &= \frac{\zeta^n n^{\alpha-1}}{\Gamma(\alpha)}\left(1 + \frac{\alpha(\alpha-1)}{2n} + \frac{\alpha(\alpha-1)(\alpha-2)(3\alpha-1)}{24n^2}+\ldots\right). \label{standardscaling}
\end{align}

This {\em singularity analysis} \cite{FlajoletOdlyzkoSingularityAnalysis, FlajoletSedgewick} method works for functions with an algebraic (or algebraic-logarithmic) singularity.  First, find the essential singularity $\zeta$, exponential growth factor, and check whether the function is analytic in a $\Delta$-region (Camembert or Pac-man shaped -- a circular region with an acute angle wedge cutting off the singularity\footnote{Formally, 
		given two numbers $\phi$ and $R$, such that $0<\phi<\pi/2$ and $R>1$, the $\Delta$-region is defined by $\{z:|z|<R,z\not=1, |\arg(z-1)|>\phi\}$. We can shift this region to a singularity $\zeta\not=0$ by $z\mapsto\zeta z$.
		}). Next, expand the function near the singularity, and finally, use the standard function scale to find the sub-exponential scaling as well. We expand the generating function around the singularity ($z=\frac{1}{4}$): $C(z)\sim 2-2\sqrt{1-4z}$ which gives us $[z^n]C(z)\sim\frac{4^n}{\sqrt{\pi n^3}}$. 

To conclude, we have found a closed form expression \eqref{catalan} for the number of Dyck paths of length $2n$ (with $n$ pairs), and understand its asymptotic scaling \eqref{asymcatalan} as well.


\subsubsection{Motzkin paths}

\begin{figure}
\begin{center}
\includegraphics[width=11cm]{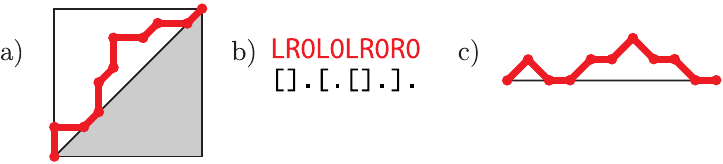}%
\caption{
The Motzkin paths can be seen as a) up-, right-, and diagonally- moving paths on an $n\times n$ square grid that do not go below the diagonal, b) well-bracketed words with spaces, and c) mountain profiles with plateaus.
}
\label{fig:pathmotzkin}%
\end{center}
\end{figure}

When we generalize the Dyck paths, allowing the mountains to have flat sections, we get the {\em Motzkin paths} (see Figure~\ref{fig:pathmotzkin}). This also has an interpretation in the bracket model, where we add another possible letter ``0'' corresponding to a space in the bracket model word.
To get the number of Motzkin paths $M_n$ of length $n$, we just count in how many ways we can take shorter Dyck paths of length $2m$ (with $m$ pairs) and pad them with zeros:
\begin{align}
	M_n = \sum_{m=0}^{\lfloor \frac{n}{2}\rfloor} \binom{n}{2m} C_m.
\label{Mn}
\end{align}

We can also investigate them using a recursive equation that we derive here. Motzkin paths (well-bracketed words with spaces) can be built either from $0$ and another Motzkin path $w$ as $0w$, or from a bracket pair with a Motzkin path $v$ inside and a Motzkin path $u$ outside as $[v]u$:
\begin{align}
	M_n = M_{n-1} + \sum_{i=0}^{n-2} M_{n-i-2} M_{i},
	\label{recursiveMn}
\end{align} 
In terms of generating functions, this recursive equation reads
\begin{align}
	M(z) &= 1 + z M(z) + z^2 M(z) M(z), 
	\label{recursivegeneratingMn}
	\end{align} 
with the 1 taking into account what happens for $n=1$. The solution of the quadratic equation with positive values for $M_n$ is
\begin{align}
	M(z) &= \frac{1-z-\sqrt{1-2z-3z^2}}{2z^2}
	\label{genMn}
\end{align} 
After rewriting $\sqrt{1-2z-3z^2} = \sqrt{(1-3z)(1+z)}$ and expanding around $z=\frac{1}{3}$, one can show using singularity analysis that the asymptotic scaling of $M_n=[z^n]\left(M(z)\right)$ is
\begin{align}
	M_n = 3^n\frac{3\sqrt{3}}{2\sqrt{\pi n^3}} 
	\, \left(1 + O(n^{-1})\right).
\end{align}
Thus, even though we do not have a closed form for the Motzkin numbers $M_n$, we can understand how they scale with growing $n$.


\subsubsection{Dyck and Motzkin paths that don't end at zero height (words with extra brackets)}

\begin{figure}
\begin{center}
\includegraphics[width=11cm]{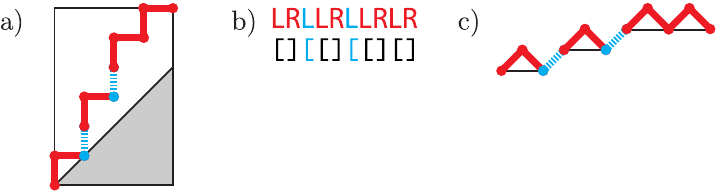}%
\caption{
A generalization of Dyck paths to a) up- and right- moving paths on an $n\times (n+k)$ grid that do not go below the 45$^\circ$ line, b) bracketed words with $k$ extra brackets, and c) mountain profiles that end at level $k$ above the starting point.
}
\label{fig:pathdyckNK}%
\end{center}
\end{figure}

One can generalize Catalan numbers in a simple way. Let us count the number of up- and right-direction only paths that don't go below the diagonal line on an $n\times (n+k)$ grid (with $k$ extra vertical steps) as in Figure~\ref{fig:pathdyckNK}. 
The total path length is $2n+k$.
Andr\'{e}'s reflection method gives us
\begin{align}
	C_{n,k} = \binom{2n+k}{n} - \binom{2n+k}{n-1}= \frac{k+1}{n+k+1}\binom{2n+k}{n}.
	\label{Pk}
\end{align}
We can find the generating function $C^{k}(z) = \sum_{z=0}^{\infty} C_{n,k} z^n$ 
in an iterative way, assuming that we know $C^k(z)$. 
Our base of induction will be $C^0(z) = C(z)$, because walks with no extra steps are counted by $C_{n}^{0} = C_n$.
Let us count the walks above a diagonal on an $n \times n+k$ grid (with $k$ extra steps). 
Each of them is composed from a walk with $k-1$ extra (vertical) steps, one extra (vertical) step, and then a simple Dyck path, implying a recursive relation similar to \eqref{recursiveCn}:
\begin{align}
	C_{n,k} = \sum_{i=0}^{n} C_{n-i,k-1} C_i. \label{Pnkrecursion}
\end{align}
We can use it to obtain the generating function
\begin{align}
	C^{k}(z) &= \sum_{n=0}^{\infty} C_{n,k} z^{n} 
	= \sum_{n=0}^{\infty} \sum_{i=0}^{n} C_{n-i,k-1} C_i z^{n-i+i} 
	= C^{k-1}(z) C(z).
\end{align}
Therefore, by induction starting with $C^{0}(z)=C(z)$, we get
\begin{align}
	C^{k}(z) = (C(z))^{k+1} = \left(\frac{1-\sqrt{1-4z}}{2z}\right)^{k+1}. 
	\label{Ck}
\end{align}

In the Motzkin spin chain, we also need to calculate the number of Motzkin paths that have $k$ extra brackets. Similarly to \eqref{Mn}, this can be done by counting the ways to pad Dyck paths with extra steps:
\begin{align}
	M_{n}^{k} = \sum_{m=0}^{\lfloor \frac{n-k}{2}\rfloor} \binom{n}{2m+k} C_{m,k}.
\label{Mnk}
\end{align}

There are several possible approaches (e.g. the saddle point method) to obtain the asymptotic scaling of the coefficient of the $z^n$ term of \eqref{Ck}, or of the generating function for \eqref{Mnk} which we don't show here.
We advise the reader to be very careful about the various error guarantees and requirements on the scaling of $k$ with $n$. 

\subsubsection{Coloring the brackets}
The bracket model discussed above can be easily generalized to several types (colors, species) of brackets. For example, we could allow two types of brackets,
allowing words such as $\red{[}\,\red{]}\,()$ 
in addition to $()\,()$. 
For the Dyck paths (in the Fredkin spin chain), this requires local dimension $d=2s$, while for the Motzkin paths (in the Motzkin chain, including the empty spaces) require local dimension $d=2s+1$.

Let us count well-bracketed words made from brackets of $s$ different types.
These calculations are required for the analysis of the colored bracket model in \cite{MovaShor}.
Each such word can be mapped to a generalization of a Dyck path -- a mountain range, where each pair of up- and down- moves have the same color (type). All that changes in the coefficients is counting the possibilities of assigning the colors
for the bracket pairs, i.e.
\begin{align}
	C^{(s)}_n &= s^n C_n, \qquad 	C^{(s)}_{n,k} = s^n C_{n,k},
	\label{countcoloredCn}
\end{align}
where $C^{(s)}_{n,k}$ counts the number of words with $s$ bracket colors that have a specifically colored sequence of $k$ extra brackets.
The expressions \eqref{Xngrow} can thus be simply generalized for the colored Dyck paths as
\begin{align}
	X_n^{(s)} &= \frac{C_{n}^{(s)}}{C_{n-1}^{(s)}} 
	= \frac{C_{n}^{(s)}}{C_{n-1}^{(s)}} = s \frac{4n-2}{n+1} \leq 4s, \\
	X_{n+1}^{(s)} &= \frac{C_{n}^{(s)}}{C_{n-1}^{(s)}} = s\frac{C_{n+1}}{C_{n}} \geq s\frac{C_{n}}{C_{n-1}} \geq X^{(s)}_{n}.
	\label{Xdngrow}
\end{align}

We can also use \eqref{countcoloredCn} to find the generating functions for the $d$-colored well-bracketed words and $k$-extra bracket words, which become
\begin{align}
	C^{(s)}(z) &= C(sz), \qquad 	C^{(s),k}(z) 
	= C^k(sz) = \left(C(sz)\right)^{k+1}.\label{ColorCGF}
\end{align}
However, things get slightly more complicated when we look at colored Motzkin paths (with flat sections/empty spaces/zeros). For $s$-colored Motzkin paths, the recursion \eqref{recursiveMn} has an extra $s$ in front of the $M^2(z)$ term, so the generating function becomes
\begin{align}
	M^{(s)}(z) = \frac{1-z-\sqrt{1-2z+(1-4s)z^2}}{2sz^2}.
		\label{generatingMn}
\end{align}
It is a generating function with algebraic singularity at 
$z = \frac{-1+2\sqrt{s}}{4s-1} = \frac{1}{1+2\sqrt{s}}$.
It is easy to check that this function is analytic in $\Delta$-region. Therefore we can use the singularity analysis to obtain the asymptotic scaling

\begin{align}
	M^{(s)}_n\sim \left(1+2\sqrt{s}\right)^n \frac{\sqrt{(1+2\sqrt{s})^3}}{2 s^{3/4}\sqrt{\pi n^3}}\left(1+\bigO\left(\frac{1}{n}\right)\right).
\end{align}

Finding the scaling for colored Motzkin paths with $k$-extra brackets is a bit more complicated, especially when we would like it to work up to $k = \Theta(\sqrt{n})$, necessary in the calculations in the Motzkin spin chain.	

This concludes the technical Section on the Catalan numbers and counting in the bracket model. Next, we will look at the pair-flip model and its words.

\subsection{The pair-flip (PF) model and its words: returning walks}
\label{sec:PFtools}

In this Section we provide a detailed discussion of the pair-flip model
with the alphabet $\{1,\dots,d\}$, and calculations of its combinatorial properties.
 
The rules of the PF model \eqref{HXXflip} allow us to change the color of an adjacent letter pair, e.g. $\dots 11 \dots \leftrightarrow \dots 22 \dots$.
Such rewriting moves preserve the {\em irreducible string} of a word.
We specify the reduction procedure in Definition~\ref{def:reduction}, and illustrate it in Figure~\ref{fig:reduction}. Two other examples of irreducible strings are illustrated in Figure~\ref{fig:wordspathstrees}a), where the letters are fully matched, and the word is fully reducible, and Figure~\ref{fig:wordspathstrees}d), where the word reduces to the irreducible string ``32''.

The reduction process pairs letters uniquely. One could imagine another way of pairing the letters in words like $1111$ -- matching the middle pair and the outer pair. However, we choose to always match the letters by reading from left to right, as dictated by Definition~\ref{def:reduction}.

\begin{figure}
\begin{center}
\includegraphics[width=16.5cm]{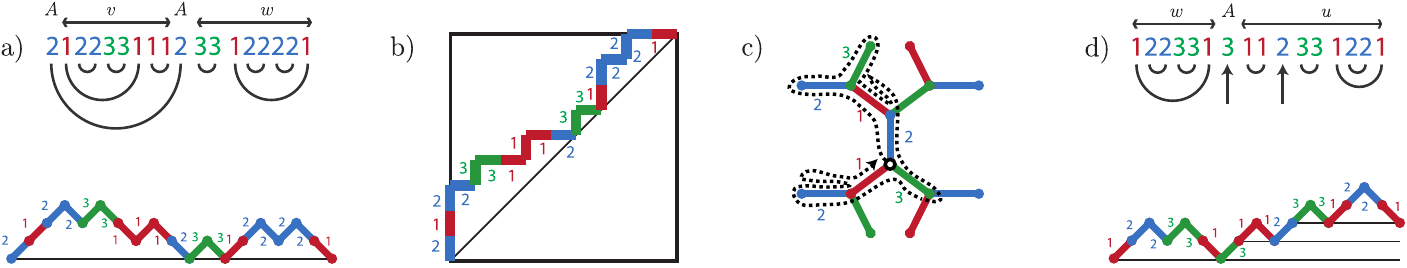}%
\caption{
Different ways to view words in the pair-flip model. a) A fully reducible word from the alphabet $\{1,2,3\}$. The semicircles depict a unique way of pairing (reducing) the letters when reading from left to right. 
We also depict a division of the word into two blocks as $AvAw$. 
b) The same fully reducible word viewed as a mountain range/Dyck path on a square grid: move up when a new letter is introduced, and move right when encountering its partner.
c) The same word viewed as returning walk on a 3-regular colored tree.
 d) A word that reduces to the nonempty string $32$ is made from a fully reducible word $w$, the letter $3$, and the word $u$ divided by the letter $2$ into two $d-1$ colored Dyck paths.
Its mountain profile ends two steps above the floor.}
\label{fig:wordspathstrees}%
\end{center}
\end{figure}

The fully reducible words in the PF model can be viewed as words whose letters can be paired and reduced, colored paths on a grid, colored ``mountain'' profiles, or colored trees (see Figure~\ref{fig:wordspathstrees}) or returning walks on a $d$-regular graph (see Figure~\ref{fig:stree}). In those, each vertex has $d$ edges, each of which has a different color/label $\{1,\dots,d\}$. Fully reducible words correspond to walks that return to the starting point. This is because at the starting point, we have $d$ ways of growing the stack by going ``outwards''. At all other points we have one way to pop the stack and ``go back inside for one step'', or $d-1$ options of adding a letter to the stack. This also means when we end up at distance $k$ from the center, our stack has $k$ letters -- and the walk corresponds to a word that reduces to a $k$-letter irreducible string. We will use this observation to derive a recursive property of the words, as well as a generating function allowing us to count them.

\subsubsection{The relationship of PF words to colored Dyck paths}
\label{sec:correspond}

Viewing PF words as colored ``mountains''/Dyck paths as in Figure~\ref{fig:wordspathstrees}b), 
we realize there are restrictions on their possible coloring\footnote{This results in complicated counting, as a mountain range with $n$ pairs that touches the ground $t$ times (i.e.  is made of $t-1$ fully reducible blocks) can be colored in $d^{t-1} (d-1)^{n-(t-1)}$ ways, while the standard colored bracket model's mountains with $n$ pairs are colorable by $d$ colors straightforwardly in $d^n$ ways. For example, for $d=3$, the bracket model word $LLRLRR$ or $(\,(\,)\,(\,)\,)$ can be colored in $3^3=27$ ways. Although it has the same mountain range profile, the PF-model word $ABBCCA$ can only be colored (have the letters $A,B,C$ chosen) in only $3\times 2 \times 2=12$ ways.}. The base layer pairs (up- and right- moves in Figure~\ref{fig:wordspathstrees}b) can take any of the $d$ colors, as there are $d$ possible directions of the walk from the initial point in Figures
~\ref{fig:wordspathstrees}c) or \ref{fig:stree}. However, each pair that is not touching the diagonal has only $d-1$ color options -- same as the number of ways to walk outwards from a non-central point in Figure~\ref{fig:stree}.  
This strongly suggests that the PF words are similar to $d-1$ colored Dyck paths, except for the points that touch the diagonal/return to the center. We will make this comparison precise in \eqref{Vexact}.

However, there are certain parts of the PF words that map to $d-1$ colored Dyck paths exactly -- the fully reducible subblocks that do not touch the diagonal. Consider a general, fully reducible PF word. We can write it as $A\alpha A\beta$, with some letter $A$, a PF model word $\beta$, and a block $\alpha$. We claim that this block $\alpha$, sitting at a higher nesting level, exactly corresponds to a $d-1$ colored Dyck word. 
We capture it by the following correspondence functions, useful in the proof of the energy gap Section \ref{sec:gap}. 
\begin{enumerate}
	\item The function $\PF(w, X)$ maps a $d-1$ colored Dyck path $w$ to a $d$ colored fully reducible PF word $q$, a substring of a single-block PF model word $XqX$ with base color $X$. 
	\item Inversely, the function $\D(q,X)$ maps the $d$-color, fully reducible substring $q$ of a 1-block PF word $XqX$ (with a base color $X$) to a $d-1$ colored Dyck word. 
\end{enumerate} 
We define these functions recursively. 
\begin{enumerate}
	\item We can split the Dyck path $w$ into balanced blocks enclosed in bracket pairs $w=L_{i_1}\alpha_1 R_{i_1}\ldots L_{i_k}\alpha_k R_{i_k}$, where $i_x$ is the color of the bracket enclosing the block $\alpha_x$.
	Then 
	\begin{align}
	\PF(w,X)\quad \mapsto \quad 
	A_{\tilde{i_1}}\PF(\alpha_1,A_{\tilde{i_1}})A_{\tilde{i_1}}\,
	\ldots \,
	A_{\tilde{i_k}}\PF(\alpha_k,A_{\tilde{i_k}})A_{\tilde{i_k}},
	\label{PFmap}
	\end{align}
	where $A_{\tilde{i_x}}$ is the $i_x$th element in the naturally ordered set
	$\{1,\ldots,d\}\setminus X$
	of PF colors without the base color $X$.
	Of course, when $w$ is empty, we simply map it to an empty PF word.
	\item We define the inverse function $\D$ similarly. Each fully reducible substring $q$ of a single-block PF word $XqX$ can be written as 
	$q=A_{j_1}\beta_1 A_{j_1}\, \ldots\, A_{j_l}\beta_l A_{j_l}$, with the color $j_x$ from the naturally ordered set
	$\{1,\ldots,d\}\setminus X$ of PF colors without the base color $X$. 
	We define recursively
	\begin{align}
	\D(q,X)\quad \mapsto \quad L_{\tilde{j_1}}\D(\beta_1,A_{j_1}) R_{\tilde{j_1}}
	\, \ldots \, L_{\tilde{j_l}}\D(\beta_l,A_{j_l}) R_{\tilde{j_l}},
	\label{Dcorrespond}
	\end{align}
	where $\tilde{j}_x$ denotes $j_x$th element in the naturally ordered set $\{1,\ldots,d\}\setminus X$.
\end{enumerate}

\begin{figure}
\begin{center}
\includegraphics[width=8cm]{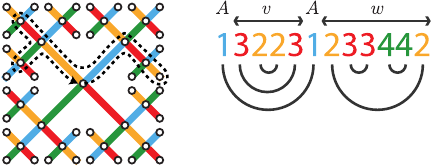}%
\caption{
The words in the PF-model can be viewed as walks on a $d$-regular tree, with $d$ differently colored edges for each vertex. When walking, we write down the colors of the edges and thus build the word. Fully reducible words correspond to walks that end at the initial point (the one depicted here reads $132231233442$). On the other hand, walks that end at distance $k$ from the initial point correspond to words that reduce to a $k$-letter string. The division $AvA w$ into two blocks is useful for getting the recursive relationship \eqref{recursionPCcolored} for counting the words.
}
\label{fig:stree}%
\end{center}
\end{figure}

With this in hand, we can turn to practical calculations.

\subsubsection{The simple case $d=2$.}
\label{sec:PF2start}

This is a qubit model ($d=2$), with the alphabet $\{1,2\}$.
For words of length $2n$, the irreducible words
are even-length strings with alternating $1$'s and $2$'s:
\begin{align}
	\varnothing, 12, 21, 1212, 2121, 121212, 212121, \dots,
\end{align}
and there are $1+2n$ of these.  

It is straightforward to count the numbers of such words.
Figure~\ref{fig:stree} becomes simply a line with alternating link colors. 
The number of fully reducible words of length $2n$ is thus
\begin{align}
	W^{(2)}_{n} = W_n = \binom{2n}{n},
	\label{reducible2}
\end{align}
the number of walks on a line that returns to the initial point after $2n$ steps.

Similarly, when counting 2-color words in the PF-model that reduce to a length-$k$ string of the type $1212\dots$ or $2121\dots$,
we are just counting the number of walks on a line that start at $0$ and end at position $k$ or $-k$. We obtain
\begin{align}
	W^{(2)}_{n,k} = W_{n,k} = \binom{2n+k}{n}.
		\label{reducible20k}
\end{align}
This was not difficult, but we will see below that adding just one more letter introduces way more complexity.

\subsubsection{The general case $d\geq 3$.}
\label{sec:PF3start}

The fully reducible words in the $d$-color PF-model can be built from two blocks as $AvAw$ as in Figure~\ref{fig:wordspathstrees}a) or Figure~\ref{fig:stree},
where $A$ is a single letter from the alphabet $\{1,\dots,d\}$, and the two $A$'s are a pair matched by the reduction procedure. 
The word $w$ is some fully reducible PF-model word, while $v$ has more restrictions -- we can view it as a walk that starts one vertex away from the center in Figure~\ref{fig:stree}, and can't return to the center. This means $v$ corresponds to a $d-1$ colored Dyck path, and we already know how to count those \eqref{countcoloredCn}. Thus, 
\begin{align}	
	W^{(d)}_{n} = \sum_{i=0}^{n-1} d C^{(d-1)}_{n-1-i}  W^{(d)}_{i}
	= \sum_{i=0}^{n-1} d (d-1)^{n-1-i} C_{n-1-i}  W^{(d)}_{i}, \label{recursionPCcolored}
\end{align}
with $d$ choices for the letter $A$, and counting $(d-1)^{n-1-i} C_{n-1-i}$ possible words $v$ and $W^{(d)}_{i}$ possible words $w$.
Note that this works for $n\geq 1$, and $W^{(d)}_{0}=1$ (the empty word). 
For $d=3$, this results in the sequence 1, 3, 15, 87, 543, 3543, 23823, \dots, also known as OEIS\cite{OEIS} A089022.

We can use this recursive relation to find the generating function for the series $\{W^{(d)}_{n}\}_{n=0}^\infty$ as follows:
\begin{align}
	W^{(d)}(z) &= 
	 1+ \sum_{n=1}^{\infty} W^{(d)}_{n} z^n
	= 1+ \sum_{n=1}^{\infty} 
	\sum_{i=0}^{n-1}
	d \underbrace{(d-1)^{n-1-i} C_{n-1-i}}_{C^{(d-1)}_{n-1-i}} 
				W^{(d)}_{i} z^n \\
	&= 1+ dz \sum_{n=1}^{\infty} \sum_{i=0}^{n-1} C^{(d-1)}_{n-1-i} W^{(d)}_{i} 
		z^{n-1}  
	= 1+ d z \sum_{m=0}^{\infty} \sum_{i=0}^{m} C^{(d-1)}_{m-i} W^{(d)}_{i} z^i z^{m-i} \\
	&= 1+ dz C^{(d-1)}(z) W^{(d)}(z),
\end{align}
Solving the above equation for $W^{(d)}(z)$ and using \eqref{ColorCGF} gives us
\begin{align}
	W^{(d)}(z) &
	= \frac{1}{1-dz {C}^{(d-1)}(z)} 
	= \frac{1}{1-dz \left(\frac{1-\sqrt{1-4(d-1)z}}{2(d-1)z}\right)} 
	= \frac{2(d-1)}{d-2+d\sqrt{1-4(d-1)z}}. \label{W0generate}
\end{align}
This formula has been originally proven in the context of walks on $d$-regular trees by \cite{McKay1983aa} and has also a short proof in \cite{2009arXiv0903.1877R}, with references to \cite{CartierHarmoniTrees} and \cite{10.2307/1993160}.

Let us now use the generating function to prove a formula for $W^{(d)}_{n}$, the number of fully reducible words made from $n$ pairs of letters $1,\dots,d$. We will use another form of \eqref{W0generate}: $W^{(d)}(z) = \left(\frac{d}{2}\right) \frac{\sqrt{1-4(d-1)z}}{1-d^2 z} -  \left(\frac{d-2}{2}\right)\frac{1}{1-d^2z}$, and plug in the expansions
\begin{align}
	\left(1-d^2 z\right)^{-1} &= \sum_{n=0}^{\infty} d^{2n} z^n,\\
	\sqrt{1-4(d-1)z} &= 1-2(d-1)z \sum_{n=0}^{\infty} C_n (d-1)^n z^n 
		 = 1-2(d-1)\sum_{n=1}^{\infty} C_{n-1} (d-1)^{n-1} z^{n}, \\
	\frac{\sqrt{1-4(d-1)z}}{1-d^2z} &= 
			1 - \sum_{n=1}^{\infty} \left(
				d^{2n} - 2(d-1) \sum_{i=1}^{n} d^{2(n-i)} C_{i-1} (d-1)^{i-1} \right)z^n \nonumber\\
			& = 1 - \sum_{n=1}^{\infty} d^{2n} \left(
				 1 - \frac{2(d-1)}{d^2} 
					\sum_{i=0}^{n-1} \left(\frac{d-1}{d^2}\right)^{i} C_{i}
					\right) z^n. \label{expandratio}
\end{align}
Let us extract the coefficient in front of $z^n$ in the formal power series $W^{(d)}(z)$:
\begin{align}
	W^{(d)}_{n} &= [z^n]\left(W^{(d)}(z)\right)
	= [z^n]\left(\left(\frac{d}{2}\right) \frac{\sqrt{1-4(d-1)z}}{1-d^2 z} 
			-  \left(\frac{d-2}{2}\right)\frac{1}{1-d^2z}\right)\\
	&= \frac{d}{2} d^{2n}
			\left(
				1- \frac{2(d-1)}{d^2} \sum_{i=0}^{n-1} \left(\frac{d-1}{d^2}\right)^{i} C_{i} 
		\right) - \frac{(d-2)}{2}d^{2n}
	\\
	&= d^{2n}\left( 1- \frac{d-1}{d} \sum_{i=0}^{\infty} \left(\frac{d-1}{d^2}\right)^{i} C_{i}
	+ \frac{d-1}{d} \sum_{i=n}^{\infty} \left(\frac{d-1}{d^2}\right)^{i} C_{i} \right) \label{Vexpress}
\end{align}
Recalling the generating function for Catalan numbers \eqref{Cgenerate}, we can calculate the first of the sums as
\begin{align}
	\sum_{i=0}^{\infty} \left(\frac{d-1}{d^2}\right)^{i} C_{i} 
	&= C\left(\frac{d-1}{d^2}\right) 
	= \frac{d}{d-1}.
\end{align}
To get a handle on the second part, we express
\begin{align}
	\sum_{i=n}^{\infty}  C_i \left(\frac{d-1}{d^2}\right)^i 
	&=
	\left(\frac{d-1}{d^2}\right)^{n} C_n \underbrace{\sum_{j=0}^{\infty} \left(\frac{d-1}{d^2}\right)^j \frac{C_{n+j}}{C_n}}_{R^{(d)}_n},
\end{align}
where the term 
\begin{align}
	R^{(d)}_n = \sum_{j=0}^{\infty} \left(\frac{d-1}{d^2}\right)^j \frac{C_{n+j}}{C_n}
	\label{Rn}
\end{align}
will be important in several expressions later. For example, it lets us 
turn \eqref{Vexpress} into
\begin{align}
	W^{(d)}_{n} &=  \frac{(d-1)R^{(d)}_n}{d} \, (d-1)^n C_n \label{Vexact}.
\end{align}
We can interpret it as the number of fully reducible $d$-color PF-model words being a $R_n (d-1)/d$ multiple of the number of well-bracketed words for a $d-1$ colored bracket model.

Next, we can show that $R_n$ tends toward a constant, with an $O(n^{-1})$ error term.
Using  the approximation $C_{n+j}/C_n \approx 4^j$, true to order $O(n^{-1})$, we obtain for large $n$
\begin{align}
	\lim_{n\rightarrow \infty} R_n &= \sum_{j=0}^{\infty} \left(\frac{d-1}{d^2}\right)^j \frac{C_{n+j}}{C_n} 
		\approx \sum_{j=0}^{\infty} \left(\frac{4(d-1)}{d^2}\right)^j = \frac{1}{1-\frac{4(d-1)}{d^2}} 
		= \left(\frac{d}{d-2}\right)^2.
	\label{Rnfirst} \\
	W^{(d)}_{n} &= \left(\frac{d(d-1)}{(d-2)^2} + O\left(n^{-1}\right) \right)\, (d-1)^n C_n
	 = \left(\frac{d(d-1)}{(d-2)^2} + O\left(n^{-1}\right) \right)\, C^{(d-1)}_n. \label{Vapprox1}
\end{align}
Note that we now have $d\geq 3$, while we have already found $W^{(2)}_{n}$for $d=2$ in \eqref{reducible2}.
Also, because $C_{j+1}\leq 4 C_j$, the approximations \eqref{Rnfirst} and \eqref{Vapprox1} also give us upper bounds:
\begin{align}
	R^{(d)}_n \leq \left(\frac{d}{d-2}\right)^2,
	\qquad
	W^{(d)}_{n} \leq \frac{d(d-1)}{(d-2)^2}\, (d-1)^n C_n. \label{Vupper}
\end{align}

Note that the factor $R_n$ is growing with $n$, 
\begin{align}
	R^{(d)}_{n+1} = \sum_{j=0}^{\infty} \left(\frac{d-1}{d^2}\right)^j \frac{C_{n+1+j}}{C_{n+1}}
	\geq \sum_{j=0}^{\infty} \left(\frac{d-1}{d^2}\right)^j \frac{C_{n+j}}{C_{n}} = R^{(d)}_n, \label{Rngrow}
\end{align}
because 
\begin{align}
	\frac{C_{n+1+j}}{C_{n+1}}\geq \frac{C_{n+j}}{C_{n}} \label{Cnratiogrow}
\end{align} can be shown using a little algebra with the
exact expression for the Catalan numbers \eqref{catalan}.
Using $j=1$ this also implies that the ratio $X_n^{(d-1)} = \frac{(d-1)^n C_n}{(d-1)^{n-1}C_{n-1}}$
of the number of (colored) Dyck paths with $n$ and $n-1$ pairs is non-decreasing, i.e. 
\begin{align}
	X_{n+1}^{(d-1)}\geq X_{n}^{(d-1)}.	\label{XngrowD}
\end{align}

Next, we will show a few results about the ratio of successive numbers of fully reducible PF words.
For this, we will first derive a recursive expression for $R_n$ from \eqref{Rn}, keeping in mind the first term $R^{(d)}_0=\frac{d}{d-1}$:
\begin{align}
	R^{(d)}_{n+1} 
	&= \frac{C_n}{C_{n+1}} \left(\frac{d^2}{d-1}\right) \sum_{j=0}^{\infty} \left(\frac{d-1}{d^2}\right)^{j+1} 
			\frac{C_{n+1+j}}{C_n} 
	= \frac{C_n}{C_{n+1}} \left(\frac{d^2}{d-1}\right) \left(R^{(d)}_n - 1\right).
	\label{Rnrecursive}
\end{align}
Using this and \eqref{Cnrecursive} in \eqref{Vexact} gives us a recursive relationship for $W^{(d)}_{n}$:
\begin{align}
	W^{(d)}_{n+1} =  \frac{(d-1)R^{(d)}_{n+1}}{d}C_{n+1}^{(d-1)}
	& = d(d-1)(R^{(d)}_n-1)C_n^{(d-1)} 
	= d^2 W^{(d)}_{n} - d(d-1)C_{n}^{(d-1)}. \label{Vnrecursive}
\end{align}

We will now prove an upper and a lower bound on $Y_n^{(d)}$:
\begin{align}
	4(d-1) \geq Y_n^{(d)} = \frac{W_{n}^{(d)}}{W^{(d)}_{n-1}}
	= \frac{(d-1) R^{(d)}_{n+1} C_{n+1}}{R^{(d)}_n C_n} \geq X_n^{(d)}. \label{Ybounds}
\end{align}
on the ratio of successive numbers of fully reducible words.
The lower bound simply relies on \eqref{Rngrow} and the definition of $R^{(d)}_n$. For the upper bound, we want to show
$W_{n+1}^{(d)} \leq 4(d-1) W_{n}^{(d)}$.
Using the recursive relation \eqref{Vnrecursive}, this translates to
\begin{align}
	 d^2 W^{(d)}_{n} - d(d-1)C_{n}^{(d-1)} &\leq 4(d-1) W_{n}^{(d)}, \\
	 \frac{(d-2)^2(d-1)}{d} R^{(d)}_n C^{(d-1)}_n &\leq d(d-1)C^{(d-1)}_{n}. 
\end{align}
Because of \eqref{Vupper}, this is true, so we get the upper bound in \eqref{Ybounds}.

Finally, we will show that the ratio $Y_n^{(d)}$ is non-decreasing.
We want to show
\begin{align}
	Y_{n+1}^{(d)} &\geq Y_{n}^{(d)}, \label{Yngrow}
\end{align}
which means we must prove
\begin{align}
	\frac{(d-1) R^{(d)}_{n+1} C_{n+1}}{R^{(d)}_n C_n} &\geq \frac{(d-1) R^{(d)}_{n} C_{n}}{R^{(d)}_{n-1} C_{n-1}}, \\
	\frac{R^{(d)}_{n}-1}{R^{(d)}_{n}} &\geq \frac{R^{(d)}_{n-1}-1}{R^{(d)}_{n-1}}, \label{toproveY}
\end{align}
using \eqref{Rnrecursive} on both sides. The expression \eqref{toproveY} is true because of \eqref{Rngrow}.
Therefore, \eqref{Yngrow} is true.

We also need the asymptotic scaling of $W_{n}^{(d)}$. This can be again obtained from the generating function by singularity analysis \cite{FlajoletOdlyzkoSingularityAnalysis, FlajoletSedgewick}. The dominant singularity is at $1/(4(d-1))$.
Expanding the generating function around this singularity and considering only half integer powers of $(1-4(d-1)z)$ we receive
\begin{align}
	-\frac{2(d-1)d}{(d-2)^2}\,(1-4(d-1)z)^{1/2}-\frac{2(d-1)d^3}{(d-2)^4}\,(1-4(d-1)z)^{3/2}
	+O\left((1-4(d-1)z)^{5/2}\right),
\end{align}
which gives us the asymptotic approximation:
\begin{align}
	[z^n]W^{(d)}(z)
	\sim \frac{d(d-1)}{(d-2)^2} \frac{(4(d-1))^n}{\sqrt{\pi n^3}}\left(1
	-\frac{3(d+2)(3d-2)}{8(d-2)^2n}
	+O\left(n^{-2}\right)\right).
\end{align}
Observe that up to a constant prefactor, the leading term scales with $n$ exactly as the leading term for the $d-1$ colored Dyck paths $C_n^{(d-1)}=(d-1)^n C_n$,
recalling the asymptotic scaling of Catalan numbers $C_n$ from \eqref{asymcatalan}. 

Let us take one more step in this comparison. We can calculate the bounds on $R_n$ to higher precision, using higher order terms in the asymptotic expansion 
\eqref{asymcatalan}. For example, to get the order $\frac{1}{n}$, we calculate
\begin{align}
	R_n = \sum_{j=0}^{\infty} \left(\frac{d-1}{d^2}\right)^j \frac{C_{n+j}}{C_n} 
	&\approx 
		\sum_{j=0}^{\infty} \left(\frac{4(d-1)}{d^2}\right)^j 
		\left(\frac{n}{n+j}\right)^{\frac{3}{2}}
		\underbrace{\left(\frac{1-\frac{9}{8n}}{1-\frac{9}{8(n+j)}}\right)}_{1+O\left(j/n^2\right)} \nonumber\\
	&\sim
		\sum_{j=0}^{\infty} \left(\frac{4(d-1)}{d^2}\right)^j 
		\left(1-\frac{3j}{2n}\right)
	=
		\sum_{j=0}^{\infty} y^j - \frac{3}{2n}\sum_{j=0}^{\infty} j y^j \nonumber\\
	&= \frac{1}{1-y} - \frac{3}{2n}\frac{y}{(1-y)^2}  
	= 
	 \frac{1}{1-\frac{4(d-1)}{d^2}}
	 - \frac{3}{2n} \frac{4(d-1)}{d^2\left(1-\frac{4(d-1)}{d^2}\right)^2} \nonumber\\
	 &= \frac{d^2}{(d-2)^2}
	 -\frac{6d^2(d-1)}{(d-2)^4}\frac{1}{n},
		\label{RnHigherTerms}
\end{align}
where we labeled $y=\frac{4(d-1)}{d^2}$ and used $\sum_{j=0}^{\infty} y^j \sim (1-y)^{-1}$ 
and $\sum_{j=0}^{\infty} j y^j \sim y (1-y)^{-2}$.
This translates to an approximation
\begin{align}
	W^{(d)}_{n} &= \left(\frac{d(d-1)}{(d-2)^2}
	- \frac{6d (d-1)^2}{(d-2)^4}\frac{1}{n}
	+ O\left(n^{-2}\right)
	\right) \, C^{(d-1)}_n. \label{Vapprox2}
\end{align}
The number of fully reducible PF model words is thus asymptotically proportional to the number of $d-1$ colored Dyck paths.

\subsection{Words that reduce to something nonempty in the PF-model}

\begin{figure}
\begin{center}
\includegraphics[width=9cm]{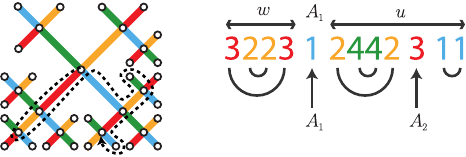}%
\caption{
The word $322312442311$ from the $d=4$ PF model reduces to the string $13$.
The words in the PF-model that reduce to a nonempty string $X_1\cdots X_k$ can be viewed as walks on a colored $d$-regular tree, ending at distance $k$ from the start (compare to to Figure~\ref{fig:stree}). We can also build them from blocks as $wX_1 u$, where $w$ is a fully reducible word, $X_1$ is the first extra letter, and $u$ is a word reducible to $k$ letters. Furthermore, $u$ is built from $k$ blocks of $d-1$ colored Dyck paths $v_1, \dots, v_k$
separated by the extra letters $X_2,\dots,X_k$ 
as $u=v_1 X_2 v_2 X_3 v_3 \dots X_k v_k$. This is useful for getting the recursive relationship \eqref{Wk}. 
}
\label{fig:Wk}%
\end{center}
\end{figure}

Next, let us look at words $w$ that reduce to a particular $k$-letter string $A_1A_2\cdots A_k$ and count them. These words can be built from a fully reducible word $w$, the letter $A_1$, and a word $u=b_1 A_2 b_2 A_3 \dots A_k b_k$, divided by the letters $A_2,\dots,A_k$ into $k$ blocks $b_j$. Each of the blocks $b_j$ is an $d-1$ colored Dyck path, as its base layer can not have the letter $A_j$. Observe that because $A_{j+1}\neq A_{j}$ (the $A$'s form an irreducible string), the number of the possible words $u$ is the same as the number of $d-1$ colored bracket words with $k-1$ extra brackets, defined in \eqref{Pk} and \eqref{countcoloredCn}.
Therefore, the recursive relation
\begin{align}
	W_{n,k}^{(d)} &= \sum_{i=0}^{n} W_{i}^{(d)} C_{n-i,k-1}^{(d-1)}
	\label{Wk}
\end{align}
holds for $n\geq 0$.
Recalling \eqref{Vexact}, this means
\begin{align}
	W_{n,k}^{(d)} &= \sum_{i=0}^{n} W_{i}^{(d)} C_{n-i,k-1}^{(d-1)} 
	= \sum_{i=0}^{n} \left(\frac{d-1}{d} \right)R_i (d-1)^{i} C_i C_{n-i,k-1}^{(d-1)}.
	\end{align}
Using the upper bound on $R_i$ \eqref{Vupper}, we can derive an upper bound on $W_{n,k}^{(d)}$ in terms of $C_{n,k}^{(d-1)}$, the number of $d-1$ colored Dyck paths with $n$ pairs and $k$ extra steps:
\begin{align}
	W_{n,k}^{(d)} &\leq  \left(\frac{d-1}{d} \right) \left(\frac{d}{d-2}\right)^2
	\sum_{i=0}^{n}   C^{(d-1)}_i C_{n-i,k-1}^{(d-1)}
	= \frac{d(d-1)}{(d-2)^2} C_{n,k}^{(d-1)}, \label{Wnkupper}
\end{align}
using the recursive relation \eqref{Pnkrecursion} for counting Dyck paths with extra steps, now with $d-1$ colors.
On the other hand, there is a straightforward, loose lower bound 
\begin{align}
	W_{n,k}^{(d)} &\geq C_{n,k}^{(d-1)}, \label{Wnklower}
\end{align}
as there are certainly more PF-model words than $d-1$ colored Dyck paths -- the PF words have $d$ choices for their base layer colors, compared to the $d-1$ for the Dyck paths.

We can also view \eqref{Wk} in terms of generating functions as
\begin{align}
	W^{(d),k}(z) &= \sum_{n=0}^{\infty} W^{(d)}_{n,k} z^n 
	= \sum_{n=0}^{\infty}  \sum_{i=0}^{n} W_{i}^{(d)} C_{n-i,k-1}^{(d-1)}  z^n 
	\nonumber\\
	& = W^{(d)}(z)
		C^{(d-1),k-1}(z) 
	= W^{(d)}(z) \left(C((d-1)x)\right)^{k}.
	\label{generateWk}
\end{align}
Plugging in what we know from \eqref{W0generate}, we have just shown
\begin{align}
	W^{(d),k}(z) = 
	\frac{2(d-1)}{d-2+d\sqrt{1-4(d-1)z}} 
	\left(\frac{1-\sqrt{1-4(d-1)z}}{2(d-1)z}\right)^k.
	\label{WKgenerate}
\end{align}
We discuss the asymptotic scaling of this generating function in detail in Appendix~\ref{subsec:GFAverageNoPairs}.
For now, the upper bound \eqref{Wnkupper} and lower bound \eqref{Wnklower} are enough to give us an estimate on the entanglement entropy of the ground state in the fully-reducible subspace. In Section~\ref{sec:schmidt}, we relate it to that of the $d-1$ colored Fredkin spin chain.

\section{Counting peaks and valleys}
\label{sec:peakcount}

In this Appendix, we first present the calculations of the average number of mountain peaks in Dyck paths, i.e. ()'s in well-bracketed words. We also calculate this for paths that have some extra steps, as well as for colored Dyck paths.
These calculations find use in removing boundary terms in the Fredkin spin chain model in Section~\ref{sec:fredkinreview}.
Finally, in Appendix~\ref{sec:pv}, we also count the valleys, i.e. )('s in Dyck paths. This result will be useful for the considerations of the PF model in Appendix~\ref{sec:countPFpairs}
and \ref{sec:GFasymptopairs}.

\subsection{Dyck paths and their peaks}
\label{sec:peaks0}

Our goal in this Section is to find the average number of mountain peaks in a Dyck path,
as illustrated by the solid circles in Figure~\ref{fig:peakcount}a). 
These correspond to substrings $LR$, or ``()'' if we view the path as a bracketed word. 
We are motivated to do this as we search for a fully translationally-invariant, degeneracy breaking term for the Fredkin spin chain, that would separate the ground states from well-bracketed and other subspaces.

To find the average number of peaks, we will first count the total number of peaks in all Dyck paths, and then divide this by the number of Dyck paths.

\subsubsection{Peaks in well-bracketed words}
Let us start by proving by induction that the total number of mountain peaks in Dyck paths with $n$ pairs is\footnote{This mountain-peak counting sequence starts with 1, 3, 10, 35, 126, 462, 1716, and is also known as OEIS A001700.}
\begin{align}
	\pi_n &= \binom{2n-1}{n} = \frac{1}{2}\binom{2n}{n} = \frac{n+1}{2}\, C_n, \qquad \textrm{for }n\geq 1, \textrm{and }\qquad
	\pi_0 =0, 
	\label{pin}
\end{align}
where $C_n$
are the Catalan numbers. The average number of peaks in well-bracketed words of length 
$N=2n$ (with $n\geq 1$ pairs of brackets) is thus
\begin{align}
	p_n = \frac{\pi_{n}}{C_{n}} 
	= \frac{\binom{2n-1}{n}}{\frac{1}{n+1}\binom{2n}{n}} 
	= \frac{n+1}{2} = \frac{N+2}{4}.
	\label{p0average}
\end{align}

We begin the proof of \eqref{pin} by decomposing the Dyck paths into smaller blocks. There are $C_{n+1}$ Dyck paths of length $2(n+1)$ with $n+1$ bracket pairs. Each such path can be written as $(u)w$, where $u$ is a Dyck path with $i$ pairs and $w$ is a Dyck path with $n-i$ pairs. Note also that the words $()w$ start with a peak.
The total number of peaks in all such words is then
\begin{align}
	\pi_{n+1} &= \sum_{|(u)w|=2(n+1)} \#_{(u)w\textrm{'s peaks}} 
	= \underbrace{C_n + \pi_n}_{\textrm{from }()w} 
	+ \underbrace{\sum_{i=1}^{n} \left(\pi_i C_{n-i} +  C_i \pi_{n-i} \right)}_{\textrm{from words built as }(u)w}
	= C_n + 2\sum_{i=1}^{n} \pi_i C_{n-i},  \label{pnplus1}
\end{align}
as $\pi_0=0$ and $\pi_n C_0=\pi_n$.

The expression \eqref{pin} is valid by inspection for $n=0$ ($\pi_0=0$) and $n=1$ ($\pi_1=1)$. Let us use \eqref{pnplus1} to prove the induction step for \eqref{pin}, assuming \eqref{pin} holds up to $n$:
\begin{align}
	\pi_{n+1} &= C_n + 2\sum_{i=1}^{n} \pi_i C_{n-i} 
	= C_n + 2\sum_{i=1}^{n} \frac{i+1}{2}C_i C_{n-i}
	= C_n - C_n + \sum_{i=0}^{n} (i+1)C_i C_{n-i} \nonumber\\
	& = \sum_{i=0}^{n} C_i C_{n-i} + \sum_{i=0}^{n} i C_i C_{n-i}
	= C_{n+1} + C_{n+1} \underbrace{\sum_{i=0}^{n} i \frac{C_i C_{n-i}}{C_{n+1}}}_{z_{n,1}}.
\end{align}
Here $z_{n,1}$ can be interpreted as the average length (number of bracket pairs) of a well-bracketed block $u$ in words formed as $u(w$ with $|u(w|=2n+1$. It is symmetric under the exchange of $u$ and $w$, which implies $z_{n,1} = n/2$. Therefore,  
\begin{align}
	\pi_{n+1} &= C_{n+1}\left(1 + \frac{n}{2}\right) = \frac{(n+1)+1}{2}\, C_{n+1}.
\end{align}
We have thus proven \eqref{pin} and \eqref{p0average}.

\subsubsection{Peaks in words with some extra brackets}
\label{sec:generalPNK}

Let us now directly calculate the number of peaks for a general number of extra brackets,
using \eqref{pin}. 
There are $C_{n,k}$ Dyck paths with $n$ pairs and $k$ extra steps \eqref{Pk}. Such words are built as $u_1 ( u_2 ( \cdots ( u_{k+1}$ from well bracketed blocks (Dyck paths) $u_1, \dots, u_{k+1}$. By symmetry, when going over all such words, counting the peaks of all of the blocks is $k+1$ times the number of the peaks of the first block. Thus,
\begin{align}
	\pi_{n,k} &= \sum_{z=1}^{k+1} \sum_{|u_1\cdots u_{k+1}|=2n}  \#_{u_z\textrm{'s peaks}} = (k+1) \sum_{|u_1\cdots u_{k+1}|=2n}  \#_{u_1\textrm{'s peaks}} 
	= (k+1) \sum_{i=1}^{n} \pi_i C_{n-i,k-1}.
\end{align}
Let us plug in \eqref{p0average} and continue as
\begin{align}
	\pi_{n,k} &= (k+1) \sum_{i=1}^{n} \frac{i+1}{2} C_i C_{n-i,k-1} 
	= -\frac{k+1}{2} C_{n,k-1} + \frac{k+1}{2} \sum_{i=0}^{n} (i+1) C_i C_{n-i,k-1} 
	\nonumber\\
	&= \frac{k+1}{2} \left( C_{n,k}- C_{n,k-1}\right) + \frac{k+1}{2} C_{n,k}
	\underbrace{\sum_{i=0}^{n} i \frac{C_i C_{n-i,k-1}}{C_{n,k}}}_{z_{n,k}},
	\label{znk}
\end{align}
where $z_{n,k}$ can be interpreted as the average length of the block $u_1$ in a word $u_1 ( u_2 (\cdots (u_{k+1}$
with $n$ bracket pairs and $k$ extra brackets.
Symmetry dictates that the average length of the block $u_1$ is the same as the average length of any block, thus $z_{n,k} = \frac{n}{k+1}$. 
Expressing $C_{n,k-1} 
= C_{n,k} \frac{k(n+k+1)}{(k+1)(2n+k)}$, we continue as
\begin{align}
	\pi_{n,k} 
	&	= C_{n,k} \left(
		\frac{k+1}{2} - \frac{(k+1)}{2} 
		\frac{k(n+k+1)}{(k+1)(2n+k)} + \frac{n}{2}
	\right) \nonumber\\
	&= 
	\frac{C_{n,k}}{2} \left(
		n+k+1 - \frac{k(n+k+1)}{(2n+k)}
	\right) 
	= \frac{C_{n,k}}{4} \left(
		2n+k+2 - \frac{k(k+2)}{2n+k}
	\right).
\end{align}
This translates to an average number of peaks
\begin{align}
	p_{n,k} &= \frac{\pi_{n,k}}{C_{n,k}} 
	= \frac{1}{2} + \frac{2n+k}{4} - \frac{k(k+2)}{4(2n+k)}
	= \frac{N+2}{4} - \frac{k(k+2)}{4N}, \label{pnkfull} 
\end{align}
expressed as a function of $N=2n+k$, the total number of letters in Dyck paths with $n$ pairs and $k$ extra steps. 

\subsection{Counting the average peaks can break ground state degeneracy}

Observe that for a fixed $N$, the number of peaks is smaller for $k>0$ than for $k=0$.
Thus, we have just proved that the average number of ``()'' peaks in well-bracketed words is at least $\oo{N^{-1}}$ greater than for words of other types of the same length. The difference is the smallest between $p_{n}$ and $p_{n-1,2}$, with the gap exactly $\frac{1}{2n}$ (comparing words with even lengths $N=2n$). 
Therefore, we can use the operator
\begin{align}
	H_{\textrm{peaks}} = - \delta \sum_i \ket{LR}\bra{LR}_{i,i+1}
\end{align}
as a perturbation to the Fredkin spin chain Hamiltonian to prefer the ground state close to the one from the well-bracketed subspace, as described in Section~\ref{sec:fredkinreview}.
Because the gap of $H_{\textrm{peaks}}$ is $\delta \oo{N^{-1}}$, we can choose $\delta= N^{-\alpha+1}$, so that the inverse-polynomial gap of the perturbation is much smaller than the inverse-polynomial gap of the Fredkin spin chain Hamiltonian.

\subsubsection{Coloring the Dyck paths and the Fredkin spin chain}

Let us generalize the previous calculations to models with several bracket types. We can color the paths, assigning each bracket pair (and each extra bracket) one of $s$ colors. There are $s^{n+k}$ versions of each Dyck path with $n$ pairs and $k$ extra steps. However, the total number of peaks also increases in the same way, as $\pi_{n,k}^{(s)} = s^{n+k}\pi_{n,k}$ (see also Figure~\ref{fig:peakcount}b). Therefore, the average number of peaks in colored Dyck paths is equal to the uncolored average \eqref{pnkfull} 
\begin{align} 
	p_{n,k}^{(s)} &= p_{n,k} = \frac{2n+k}{4} + \frac{1}{2} - \frac{k(k+2)}{4(2n+k)}
	= \frac{1}{2}+\frac{N}{4}-\frac{k(k+2)}{4N}, 
	\label{pnkfullcolor}
\end{align}	
for $n>0$. It decreases with growing $k$ for a fixed $N=2n+k$, the length of the words with $2n$ pairs and $k$ extra brackets.

Thus, we can also think about the Fredkin spin chain with several types of brackets, 
with the Hamiltonian
\begin{align}
	H^{\textrm{Fredkin}\,(s)} &=
	\sum_i \sum_{c=1}^{s} \sum_{c'=1}^{s} \ket{L_{c'} L_c R_c-L_c R_c L_{c'}}\bra{L_{c'} L_c R_c-L_c R_c L_{c'}}_{i,i+1} \label{multicolorFredkin}\\
 	&+ \sum_i \sum_{c=1}^{s} \sum_{c'=1}^{s} \ket{R_{c'} L_c R_c-L_c R_c R_{c'}}\bra{R_{c'} L_c R_c-L_c R_c R_{c'}}_{i,i+1}\nonumber\\
	&+ \sum_i \sum_{c=1}^{s} \sum_{c'\neq c} \ket{L_c R_c-L_{c'}R_{c'}}\bra{L_c R_c-L_{c'}R_{c'}}_{i,i+1}
	. \nonumber
\end{align}

Now, as we have just shown above, instead of boundary terms, we can use the peak-preferring Hamiltonian 
\begin{align}
	H_{\textrm{peaks}} = - \delta \sum_i \sum_{c=1}^{s} \ket{L_cR_c}\bra{L_cR_c}_{i,i+1}
\end{align}
as a perturbation, with $\delta = 1/\textrm{poly}(n)$, to energetically favor a state very close to the uniform superposition of well-bracketed words over states that are close to uniform superpositions of words with some extra brackets.

\subsection{Counting both peaks and valleys in Dyck paths}
\label{sec:pv}

As an intermediate step, let us count not only the peaks ``()'', but also the valleys ``$)($'' in Dyck paths with $n$ pairs, as in Figure~\ref{fig:peakcount}a).
Let us call the total number of peaks and valleys $\tau_n$ and the average number $t_n$. This does not have a direct application, but we will use it as a tool for counting pairs in the PF model in Section~\ref{sec:countPFpairs}.

Every well bracketed word with $f$ peaks also has $f-1$ valleys. Therefore\footnote{The sequence for $\tau_n$ begins with 1, 4, 15, 56, 210, 792, 3003; it is known as OEIS A001791.},
\begin{align}
	\tau_n &= 2\pi_n-C_n = \frac{2(n+1)}{2}C_n-C_n = nC_n, \label{tn}\\
	t_{n} &= \frac{\tau_{n}}{C_n} = n. \label{tnavg}
\end{align}
Note that he formula $\tau_n = nC_n$ works also for $n=0$, where $\tau_0 = 0$.

Let us continue with words with $n$ bracket pairs and a general number $k$ of extra brackets as in Section~\ref{sec:generalPNK}. Each such words can be built from Dyck paths $u_1,\dots,u_{k+1}$ as $u_1 ( u_2 ( \cdots ( u_{k+1}$. Note that when any of the blocks $u_1, \dots, u_k$ is not empty, it is followed by a valley, because there is an extra bracket that follows the block. The total peak/valley number for these words is then
\begin{align}
\tau_{n,k} 
&= k \underbrace{\sum_{i=1}^{n}C_{i}C_{n-i,k-1}}_{\textrm{a valley after a block}} + (k+1) \underbrace{\sum_{i=1}^{n} \tau_{i}C_{n-i,k-1}}_{\textrm{peaks/valleys in a block}}\nonumber\\
&= -kC_{n,k-1}
+ k \sum_{i=0}^{n}C_{i}C_{n-i,k-1}
-(k+1)\tau_0 C_{n,k-1} + (k+1) \sum_{i=0}^{n} \tau_{i}C_{n-i,k-1} \nonumber\\
&= - kC_{n,k-1} +k C_{n,k}+ (k+1) \sum_{i=0}^{n} i C_i C_{n-i,k-1}. 
\label{peakvalleykSTART} 
\end{align}
where we plugged in \eqref{tn}. 
Let us recall $z_{n,k}$ from \eqref{znk}, and 
$C_{n,k-1} = C_{n,k} \frac{k(n+k+1)}{(k+1)(2n+k)}$, which gives us
\begin{align}
	\tau_{n,k} 
	&= 	k \left( C_{n,k} - C_{n-1,k}\right) 
				+ (k+1)C_{n,k}z_{n,k} 
	= C_{n,k} \left(
				k -\frac{k^2(n+k+1)}{(k+1)(2n+k)} + n
			\right)	\nonumber\\
	&= \frac{1}{2} C_{n,k} 
	\left(
		(2n+k) + k - \frac{k^2(2n+k+k+2)}{(k+1)(2n+k)} 
		\right)
	= C_{n,k}
	\left( \frac{N}{2} + \frac{k}{2(k+1)} - \frac{k^2(k+2)}{2(k+1)N}
		\right), \label{peakvalleyk}\\
  t_{n,k} &= \frac{\tau_{n,k}}{C_{n,k}} 
	 = \frac{N}{2} + \frac{k}{2(k+1)} - \frac{k^2(k+2)}{2(k+1)N}. \label{tnkL}
\end{align}
This expression does not behave as we desired -- it is not a monotonous function of $k$. For a fixed and large total length $N=2n+k$ with $2n\gg k$, the value of $t_{n,k}$ can be larger than $t_{n,0}$.
Differentiating $t_{n,k}$ with respect to $k$, we find that it has a single maximum 
when $N=(2k^2+5k+4)k$, i.e. near $k = (N/2)^{\frac{1}{3}}$.
This means adding peak-and-valley counting terms breaks the degeneracy, but not in the way we intended -- it selects for a nearly-uniform ground state from a subspace with a specific number of extra steps. We conjecture that even this ground state might have a large entanglement entropy.
Furthermore, we will require this calculation when counting letter pairs in Appendix~\ref{sec:countPFpairs}.

Could this calculation have another application? For example, could counting peaks and valleys be useful for a system with {\em periodic boundary conditions}?
One can investigate this by also counting the possible valley/peak across the endpoints when we wrap the chain into a cycle. However, this complicates things quite a bit, as the classes of states that reduce to each other are now modified. For example, the words of the type $)))((($ become well bracketed, as the brackets can now also be paired around the cycle. We leave this investigation for later work.

\subsection{Counting the peaks and valleys of colored Dyck paths}

What happens to \eqref{peakvalleyk} when we color the brackets by $s$ colors as in Figure~\ref{fig:peakcount}b)? We can modify the previous calculation to reflect this. The number of peaks is multiplied by $s^n$, as the coloring the Dyck paths increases their number from $C_n$ to $C_{n}^{(s)} = s^{n}C_n$. We also remember that all but the final peaks of the words have valleys that follow them. However, as we color each bracket pair by some color, only $\frac{1}{s}$ of these actually count as valleys with both sides of the same color. Thus,
\begin{align}
	\tau_{n}^{(s)} = \sum_{\textrm{words}} \left(\textrm{\#peaks} + \frac{1}{s}
	\left(\textrm{\#peaks}-1\right)\right)
	= \frac{s+1}{s} \pi^{(s)}_n - \frac{1}{s}C_n^{(s)} = \frac{(s+1)n + (s-1)}{2s}\,C^{(s)}_{n}, \label{taucolor}
\end{align}
for $n>0$ and we need to remember that $\tau_0^{(s)}=0$. We can check that the special case for $s=1$ is what we had before, $\tau_{n}^{(1)} = \tau_n = n C_n$.

Let us calculate the peak/valley number for the colored Dyck paths that end at a level $k$ above the horizontal. Similar to \eqref{peakvalleykSTART}, we now have
\begin{align}
	\tau_{n,k}^{(s)} 
&= \frac{k}{s} \underbrace{\sum_{i=1}^{n}C_{i}^{(s)}C^{(s)}_{n-i,k-1}}_{\textrm{a valley after a block}} + (k+1) \underbrace{\sum_{i=1}^{n} \tau_{i}^{(s)}C^{(s)}_{n-i,k-1}}_{\textrm{peaks/valleys in a block}}.
\end{align}
The calculation continues as 
\begin{align}
	\tau_{n,k}^{(s)} 
	&= 
-\frac{k}{s} C_{n,k-1} + 
\frac{k}{s} \sum_{i=0}^{n}C_{i}^{(s)}C^{(s)}_{n-i,k-1}
+ s^n (k+1) \sum_{i=1}^{n} \frac{(s+1)i+s-1}{2s}\,C_i C_{n-i,k-1} \nonumber\\
&= \frac{ks^n}{s} \left( C_{n,k} - C_{n,k-1}\right) 
-\frac{s^n (k+1)(s-1)}{2s} C_{n,k-1}
+ s^n (k+1) \sum_{i=0}^{n} \frac{(s+1)i+s-1}{2s}\, C_i C_{n-i,k-1} \nonumber\\
&= s^n C_{n,k} \left(
		\left(
			\frac{k}{s} 
				+ \frac{(k+1)(s-1)}{2s}
		\right)
			\left( 1- \frac{C_{n,k-1}}{C_{n,k}}\right) 
		 + \frac{(k+1)(s+1)n}{2s(k+1)} 
	\right) \nonumber\\
& = s^n C_{n,k} \,\frac{1}{2s}
		\left(
			\left(
				2k + ks -k + s -1
			\right)
			\left(
				1 - \frac{k(n+k+1)}{(k+1)(2n+k)} 
			\right)
		+ (s+1)n
	\right) \nonumber\\
& = s^n C_{n,k} \,\frac{1}{2s}\left(
		k(s+1)+(s-1)
		- \frac{\left(k(s+1)+(s-1)\right)k(N+k+2)}{2N(k+1)}
		+ \frac{(s+1)(N-k)}{2}
	\right) \nonumber\\
& = s^n C_{n,k} 
	\left(
		\frac{(s+1)N+2(s-1)}{4s} + k \left(
						\frac{s+1}{4s}
						-\frac{\left(k(s+1)+(s-1)\right)}{4s(k+1)}
						\right)
		- \frac{k(k(s+1)+s-1)(k+2)}{4s(k+1)N}
	\right), \nonumber
\end{align} 
finally resulting in 
\begin{align}
	\tau_{n,k}^{(s)} 
	&= s^n C_{n,k} 
	\left(
		\frac{(s+1)N+2(s-1)}{4s} + \frac{2k}{4s(k+1)}
		- \frac{k(k+2)\left(k(s+1)+(s-1)\right)}{4s(k+1)N}
	\right),
\label{peakvalleykcolored}
\end{align}
for $n>0$, while $\tau^{(s)}_{0,k} = 0$, where we used $z_{n,k} = \frac{n}{k+1}$, and $C_{n,k-1} = C_{n,k} \frac{k(n+k+1)}{(k+1)(2n+k)}$. 
This result is consistent with \eqref{peakvalleyk} for $s=1$. 
Similarly to the uncolored version \eqref{tnkL}, 
the average number $t^{(s)}_{n,k} = \tau^{(s)}_{n,k}/C^{(s)}_{n,k}$
for a fixed chain length $N=2n+k$ is not monotonous with $k$.
Differentiation with respect to $k$ tells us it has a single maximum
when $N-1 = k^3+s(k+1)^3+2k^2+k$, i.e. near $k=\left(\frac{N}{s+1}\right)^{\frac{1}{3}}$.
Again, we challenge the reader to investigate whether a ground state from a subspace with such $k$ remains very entangled.

This concludes the investigation of the average number of peaks and valleys in $s$-color Dyck paths. In the next Section, we turn to the PF model with $d$ letters. Our goal is to find the average number of subsequent letter pairs in PF model words.
Recalling its mapping to a subset of $d$-colored Dyck paths, we realize identical letter pairs in the PF model correspond to peaks, as well as same-color valleys in Dyck paths, so we will be able to rely on what we derived in this Section.

\section{Counting neighboring identical letter pairs in the PF model}
\label{sec:countPFpairs}

In order to break the ground state degeneracy of the PF model, we suggest energetically preferring neighboring identical letter {\em pairs}. For this, we need to accurately count the average number of times a {\em pair} (e.g. \dots AA\dots) appears in PF model words with various irreducible strings (see also Figure~\ref{fig:peakcount}c). We will build on the results from Appendix~\ref{sec:peakcount}.

First, in Section~\ref{sec:countPF2}, we find a closed-form expression of the average number of neighboring letter pairs in the qubit ($d=2$) PF model.
Then in Section~\ref{sec:countPF3}, we prove a recursive expression for counting the pairs in models with $d\geq 3$, and sketch a direction for further analysis. The result does not have a closed form, but we can go quite far in using it numerically.
In particular, in 
Figure~\ref{fig:pairnumerics} we investigate the $d=3$ color PF model up to $n\leq 1000$.
We find strong evidence that 
\begin{align}
	\langle \# \rangle_{k=0}^{(3)} &= \frac{3}{4}(n+1) + \bigO\left(n^{-1}\right), \\
	\langle \# \rangle_{0}^{(3)}  - \langle \# \rangle_{k} &\sim \Theta\left(\frac{k^2}{n}\right),
\end{align}
i.e. that the average number of pairs for $d=3$ color PF model words of length $2n$ is approximately $3(n+1)/4$ in the fully reducible subspace, while it is on the order of $k^2/n$ lower in the subspaces with $k$ irreducible letters. The first claim is proven in \eqref{fn32}, while we show the second for constant $k$ in Appendix~\ref{sec:GFasymptopairs} \eqref{AVGfinalConst}.

\subsection{Neighboring identical letters in the $d=2$ (qubit) PF model words}
\label{sec:countPF2}

Let us start with the PF model with $d=2$, whose words are made from the alphabet $\{1,2\}$. First, we will look at the fully reducible words, and count their pairs, i.e. how many times the letter combinations $11$ and $22$ appear in them.
Viewing the words as their mountain profiles, this is essentially counting the peaks and valleys of the words, as in Figure~\ref{fig:peakcount}c).

We will prove by induction that the total number $\varphi_n$ of pairs ($11$ or $22$) in all fully reducible $d=2$ PF words made from $2n$ letters is\footnote{The integer sequence 2, 12, 60, 280, 1260, 5544, 24024, \dots, is known as OEIS A005430 (the Ap\'{e}ry numbers).} 
\begin{align}
	\varphi_n = nW_n. \label{phinguess}
\end{align}
and that the average number of pairs in such words is $f_n = \varphi_n/W_n = n$,
with 
	$W_n 
	= \binom{2n}{n}$,
the number of fully reducible 2-color PF model words
from \eqref{reducible2}.

In Section~\ref{sec:PF3start}, 
we have shown that for $n\geq 2$ one can rewrite the words into blocks as $AuAw$, made from a letter pair $A\in \{1,2\}$, a Dyck path $u$, and another $d=2$ PF model word $w$.
There are three cases depending on how the word is formed: $AAw$, $AuAw$ and $AuA$, with $u$ a Dyck path (whose base layer can not contain the letter $A$) and $w$ a fully reducible word. 
These give us 
\begin{align}
	\frac{1}{2}\varphi_{n+1} &= \underbrace{ W_{n}+\frac{1}{2}W_{n}
			+\varphi_{n} }_{\textrm{from $AAw$}}
	+  \underbrace{\sum_{i=1}^{n-1} \left(
				\tau_i W_{n-i} + \frac{1}{2}C_i W_{n-i} + C_i \varphi_{n-i}
			\right)
			}_{\textrm{from $AuAw$}}
	+ \underbrace{\tau_{n}
		}_{\textrm{from $AuA$}},
\end{align}
as the paths $AA w$ have an initial pair, a possible pair if the first letter of $w$ is $A$, and the pairs coming from the block $w$. Next, the $AuAw$'s have pairs in the block $u$,
a possible pair at the beginning of $w$, and pairs in $w$. Finally, we add the possible pairs of $u$ from the $AuA$'s. 
The initial $\frac{1}{2}$ comes from the two choices for $A\in \{1,2\}$.
We also realize that the number of pairs inside the block $u$ is equal to the number of peaks and valleys $\tau_i$ \eqref{tn} of the corresponding Dyck path.
Thus, we get a recursive expression
\begin{align}
 \varphi_{n+1} &= 3W_n + 2 \varphi_n + 2 \tau_n
	- 2\tau_n - W_n - C_n - 2\varphi_n
	+ 2 \sum_{i=0}^{n} \left(
				\tau_i W_{n-i} + \frac{1}{2}C_i W_{n-i} + C_i \varphi_{n-i}
			\right) \nonumber\\
 &= 2W_n - C_n	+ 2 \sum_{i=0}^{n} \left(
				\tau_i W_{n-i} + \frac{1}{2}C_i W_{n-i} + C_i \varphi_{n-i}
			\right)\nonumber\\
 &= 2W_n - C_n	+ 2 H_n + G_n + 2\sum_{i=0}^{n} \varphi_{i} C_{n-i}, 
	\label{phink}
\end{align}
where we label\footnote{We can actually use induction to prove $F_n = \sum_{i=0}^{n} \binom{2i}{i}\binom{2(n-i)}{n-i} = 4^n$, but this is not required for our proof of \eqref{phinguess}.}
\begin{align}
 G_n &= \sum_{i=0}^{n} C_i W_{n-i} 
		= \frac{1}{2}W_{n+1}, \label{WC} \\
 H_n &= \sum_{i=0}^{n} i C_i W_{n-i} 
		= \sum_{i=0}^{n} (i+1-1) C_i W_{n-i} 
		= \sum_{i=0}^{n} \left(W_i -C_i\right)W_{n-i}
		= F_n - G_n, \label{iWC} \\
	F_n &= \sum_{i=0}^{n} W_i W_{n-i}.
\end{align}
The reason for \eqref{WC} is composition: the $d=2$ fully reducible PF model words are built as $AuAw$, with $u$ a Dyck path and $w$ a PF model word, and there are 2 choices for the letter $A$.

We are now ready to prove \eqref{phinguess} by induction.
The base cases $n=0,1$ work by inspection, with $\varphi_0=0$, and $\varphi_1 = 2$ (there are $W_1=2$ such words, $11$ and $22$, and each has a pair).
For the induction step, we continue with \eqref{phink},
utilizing
$2W_n - C_n = 
		2\binom{2n}{n}-\frac{1}{n+1}\binom{2n}{n}
		= \frac{2n+1}{n+1} \binom{2n}{n} = \frac{1}{2}\binom{2n+2}{n+1} = \frac{1}{2}W_{n+1}=G_n$. Thus,
\begin{align}
	\varphi_{n+1} 
		&= G_n	+ 2(F_n-G_n) + G_n 
		+ 2\sum_{i=0}^{n} \left( n -(n-i)\right)W_i C_{n-i}.
\end{align}
Using \eqref{WC} and \eqref{iWC}, we obtain
\begin{align}
	\varphi_{n+1}
		&= 2F_n + 2n G_n
		- 2\left(F_n-G_n\right) 
		=  (2n+2)G_n = (n+1)W_{n+1},
\end{align}
which we wanted to prove.

\subsubsection{Neighboring letter pairs in the $d=2$ PF model words that reduce to nonempty strings}
Next, we will count the pairs in PF words that reduce to something nonempty.
There are $W_{n,k}$ such words \eqref{reducible20k}.
The words that reduce to $X_1 \dots X_k$ are built as $w X_1 u_1 \dots X_k u_k$, from an initial fully reducible word $w$, and then from $k$ blocks $u_i$ that correspond to uncolored Dyck paths (see Section~\ref{sec:correspond} for the mapping).
To count the number of pairs in these words, we count the pairs in the block $w$, the possible pair at the end of the block $w$, the pairs in the blocks $u_1,\dots,u_{k}$ as well as the pairs at the ends of the blocks $u_1,\dots,u_{k-1}$:
\begin{align}
	\varphi_{n,k} &= 
	\sum_{i=1}^{n} \left(
		\varphi_i C_{n-i,k-1}
		+ \frac{1}{2}W_i C_{n-i,k-1}
	  +	k  \tau_i W_{n-i,k-1}
		+ (k-1)  C_i W_{n-i,k-1} \right) \nonumber\\
	&= -\frac{1}{2}C_{n,k-1} - (k-1)W_{n,k-1} \nonumber\\
	&+
		\sum_{i=0}^{n} \left(
			\left( \varphi_i + \frac{1}{2}W_i \right) C_{n-i,k-1}
			+ \left(k  \tau_i +  (k-1)C_i \right) W_{n-i,k-1} \right).
			\label{phinkPF}
\end{align}

First, we will calculate or reexpress a few sums. 
When a PF model word $wX_1 u_1 X_2 \dots X_k u_k$ reduces to $X_1\cdots X_k$, we can rewrite it as $w' X_k u_k$ as well as $w X_1 u'$ and thus prove
\begin{align}
	\sum_{i=0}^{n} C_i W_{n-i,k-1} 
	 = \sum_{i=0}^{n} W_i C_{n-i,k-1} 
			= W_{n,k}. \label{CWk} 
\end{align}
The sum made from peak/valley numbers for Dyck paths and the number of PF words is
\begin{align}
	\sum_{i=0}^{n} \tau_i W_{n-i,k-1} 
			&= \underbrace{\sum_{i=0}^{n} W_i W_{n-i,k-1}}_{F_{n,k-1}} - \sum_{i=0}^{n} C_i W_{n-i,k-1} 
			= F_{n,k-1} - W_{n,k},
	\label{tauW}
\end{align}
because $\tau_i = i C_i = \left(i+1-1\right) C_i = W_i - C_i$.
Third, the combination of the pair numbers for fully reducible PF words and numbers of Dyck paths gives us:
\begin{align}
	\sum_{i=0}^{n} \varphi_i C_{n-i,k-1} 
			&= \sum_{i=0}^{n} i W_i C_{n-i,k-1} 
			= \sum_{i=0}^{n} \frac{ik}{n-i+k} W_i W_{n-i,k-1} 
			= \sum_{i=0}^{n} \left(\frac{k(n+k)}{n-i+k}-k\right) W_i W_{n-i,k-1}
			\nonumber\\
			&= (n+k)\sum_{i=0}^{n} W_i C_{n-i,k-1} - kF_{n,k-1}
			= (n+k) W_{n,k}  - kF_{n,k-1}.
	\label{phiP}
\end{align}
We also recall that
\begin{align}
	W_{n,k-1} &= \binom{2n+k-1}{n} = \frac{n+k}{2n+k}\binom{2n+k}{n} = 
	\frac{n+k}{2n+k}W_{n,k}, \\
	C_{n,k-1} &= 
		\frac{k}{n+k}\binom{2n+k-1}{n}
		=\frac{k}{2n+k}\binom{2n+k}{n} = \frac{k}{2n+k}W_{n,k}, \\
	C_{n,k} &= 
		\frac{k+1}{n+k+1}\binom{2n+k}{n}
		= \frac{k+1}{n+k+1}W_{n,k}.
\end{align}
Let us now plug \eqref{tauW} and \eqref{phiP} into \eqref{phinkPF}.
\begin{align}
	\varphi_{n,k} &= 
			-\frac{1}{2}C_{n,k-1} - (k-1)W_{n,k-1}
			+ (n+k) W_{n,k} - kF_{n,k-1}
			+ \frac{1}{2}W_{n,k} \nonumber\\
			&+ k \left(F_{n,k-1} - W_{n,k}\right)
			+ (k-1)W_{n,k} \nonumber\\
	&= W_{n,k} \left(
			-\frac{k}{2(2n+k)} -\frac{(k-1)(n+k)}{(2n+k)}
				+ (n+k) + \frac{1}{2} - k + k-1
	\right)\nonumber\\
	&=W_{n,k}
	\frac{-k-2(k-1)(n+k)+(2n+2k-1)(2n+k)}{2(2n+k)}
	= W_{n,k} \left(n + \frac{nk}{2n+k}\right).
\end{align}
This gives us the average number of subsequent letter pairs in PF model words of fixed length $N=2n+k$ that reduce to strings of length $k$:
\begin{align}
	f_{n,k} &= \frac{\varphi_{n,k}}{W_{n,k}} 
		\left(n + \frac{nk}{2n+k}\right) = \frac{N}{2}-\frac{k^2}{2N}. \label{fnk2}
\end{align}

\subsubsection{Breaking ground state degeneracy by counting pairs in the $d=2$ PF model}
\label{sec:PF2break}

The result \eqref{fnk2} immediately implies that the number of pairs (on even-length chains) is the largest for $k=0$ words, and monotonously decreases with growing $k$. The next highest number of pairs (for $k=2$) is at least $\oo{N^{-1}}$ smaller. Therefore, for the 2-color pair-flip (PF) model, we can now use a pair-counting term
\begin{align}
	- \delta \sum_{i=1}^{N-1} \left( \ket{11}\bra{11} + \ket{22}\bra{22}\right)_{i,i+1},
\end{align}
with a small $\delta=1/\textrm{poly}(N)$ as a perturbation to energetically prefer a unique ground state -- a state very close to the uniform superposition of all fully reducible words.

The new ground state comes from the fully reducible subspace. However, we still need to analyze how exactly its entanglement entropy responds to such a perturbation. 
We can show that the differences in amplitudes of transition-connected words can not be large, as this would manifest itself in the energy of this state, which is close to $-\delta \frac{N}{2}$, i.e. as close to zero as we wish.
Moreover, the distribution of the words with different pair numbers within a subspace is closely centered about the average. Therefore, even if the states with more pairs are slightly ``preferred'', the resulting perturbed ground state does not differ much from the original, its Schmidt decomposition structure remains largely intact, and the entropy scaling with chain length remains what it was. However, we leave a formal statement and proof of this as an open question. 


\subsection{The PF model with $d\geq 3$ colors}
\label{sec:countPF3}

Let us now look at the PF model with at least three different letters, which complicates the combinatorics quite a bit. Counting the number of words that reduce to a particular string is by itself is no longer an easy task. We do not have a simple formula for this, only iterative relations, and an understanding of its asymptotic scaling.

Let us derive the iterative relations for counting the fully-reducible words of length $N=2n$ (with $n$ letter pairs) in the $d$-colored PF model by looking at how they can be built. The very first letter pairs with another letter in such a word, so we can write it as $AuAv$, where $u$ is an $d-1$ color Dyck path whose base level can not start with the letter $A$ (such a letter would be paired with the initial $A$) and $v$ is a fully reducible word.
\begin{align}
	W_{n}^{(d)} &= \sum_{i=0}^{n-1} d C^{(d-1)}_{n-i-1} W_{i}^{(d)}
							= \sum_{i=0}^{n-1} d(d-1)^{n-i-1} C_{n-i-1} W_{i}^{(d)}.
\end{align} 
The number $W_{n}^{(d)}$ asymptotically scales like the number of $d-1$ colored Dyck paths, as we have shown in \eqref{Vexact}.
Meanwhile, words that reduce to a specific $k$-letter irreducible string can be built from as $vAu'$, with a fully reducible word $v$, and an $d-1$ colored Dyck path $u'$ with $k-1$ extra brackets, giving us the recursive relation\footnote{For example, it allows us to find $W_{1,1}^{(d)} = 2d-1$ and $W_{2,1}^{(d)} = 5d^2-6d+2$, as well as $W_{1,2}^{(d)} = 3d-2$ and $W_{2,2}^{(d)}=9d^2-13d+5$. For $d=3$, the sequence of fully reducible word numbers is the sequence OEIS A089022, for words that reduce to one letter, $W_{n,1}^{(3)}$ is OEIS A194723. We list the first few terms of these in Figure~\ref{fig:PFpairnumerics}.}
 \eqref{Wk}, 
$W_{n,k}^{(d)} = \sum_{i=0}^{n} W_{i}^{(d)} C^{(d-1)}_{n-i,k-1}$.

Let us now count the total number of subsequent letter pairs in such words. We start with the fully reducible words, just like in the previous Section. 
For $n=1$, the total number of subsequent letter pairs in all these words is simply $d$ times the number of pairs in the word $AA$, i.e. $d$. $\varphi_1^{(d)} = d$, and it is also easy to calculate $\varphi_2^{(d)} = 3d^2$. For $n\geq 2$, we can rewrite each fully reducible word in terms of blocks as
$AAv$, $AuAv$ or $AuA$, where $A$ is one of the $d$ letters (that is why we need to multiply the result by $d$), $u$ is a $d-1$ colored Dyck path, and $v$ is a fully reducible word.
\begin{align}
	\varphi_{n+1}^{(d)} &= \underbrace{ d W_{n}^{(d)}+\frac{d W_{n}^{(d)}}{d}
			+d \varphi_{n}^{(d)} }_{\textrm{from $AAv$}}
	+  \underbrace{\sum_{i=1}^{n-1} \left(
				d\tau_i^{(d-1)} W_{n-i}^{(d)} + \frac{dC_i^{(d-1)} W_{n-i}^{(d)}}{d} + dC_i^{(d-1)} \varphi_{n-i}^{(d)}
			\right)
			}_{\textrm{from $AuAv$}}
	+ \underbrace{d\tau_{n}^{(d-1)}
		}_{\textrm{from $AuA$}} \nonumber \\
	&= d W_{n}^{(d)} + \frac{W_{n+1}^{(d)}-dC_n^{(d-1)}}{d} + d \varphi_{n}^{(d)} + d\tau_{n}^{(d-1)}
	+  \sum_{i=1}^{n-1} \left(
				d\tau_i^{(d-1)} W_{n-i}^{(d)}  + dC_i^{(d-1)} \varphi_{n-i}^{(d)}
			\right) 
			\nonumber \\
	&= d W_{n}^{(d)} + \frac{W_{n+1}^{(d)} - d C_n^{(d-1)}}{d}
	+  \sum_{i=0}^{n} \left(
				d\tau_i^{(d-1)} W_{n-i}^{(d)}  + d C_i^{(d-1)} \varphi_{n-i}^{(d)}
			\right)
			, \label{pairsPFrecursive}
\end{align}
as $\sum_{i=1}^{n-1} dC_i^{(d-1)} W_{n-i}^{(d)} = W_{n+1}-dW_{n} - dC_n$,
and recalling $\tau_0^{(d-1)}=\varphi_0^{(d)}=0$ so that we can sum from $i=0$ to $i=n$.
We use this recursive formula for numerics, with the starting point $\varphi_1 = d$.

Similarly, for words that reduce to $k$ extra letters, we modify \eqref{phinkPF} to accomodate for $d$ colors, viewing the words as $v A_1 u_1 \dots A_k u_k$. We obtain the recursive formula
\begin{align}
	\varphi_{n,k}^{(d)} &= 
	\sum_{i=1}^{n} \left(
		\varphi_i^{(d)} C_{n-i,k-1}^{(d-1)}
		+ \frac{1}{d}W_i^{(d)} C_{n-i,k-1}^{(d-1)}
	  +	k  \tau_i^{(d-1)} W_{n-i,k-1}^{(d)}
		+ \frac{(k-1)}{(d-1)}  C_i^{(d-1)} W_{n-i,k-1}^{(d)} \right) 
			\label{pairsPFKrecursive}
\end{align}
This formula, together with the recursive formula for $W_{n,k}^{(d)}$ \eqref{recursionPCcolored} is well suited for numerical experiments\footnote{For the reader interested in checking their own numerical tools, we list the first few terms for the counting in Figure~\ref{fig:PFpairnumerics}.
}.
Our numerics up to $n=1000$ and $k=1000$ (see Figure~\ref{fig:pairnumerics}) indicate that 
\begin{align}
	f_{n}^{(3)}-f_{n-k,2k}^{(3)} 
	= \frac{\varphi_{n}^{(3)}}{W^{(3)}_{n}}-\frac{\varphi_{n-k,2k}^{(3)}}{W^{(3)}_{n-k}}
	= \oo{\frac{k^2}{N}}, \label{PFpairgap}
\end{align}
i.e. that the fully reducible words with length $N=2n$ have on the order of $\frac{k^2}{N}$
more subsequent letter pairs than words of the same length that have irreducible strings of length $2k$.
Later, in Appendix~\ref{sec:GFasymptopairs} we prove this behavior for constant $k$, 
and numerically probe that it persists for $k$ scaling with $n$ as well.

We thus claim we can use pair-counting
\begin{align}
	-\delta \sum_{i=1}^{N-1} \sum_{t=1}^{d} \ket{tt}\bra{tt}_{i,i+1}
\end{align}
with a small, inverse-polynomial $\delta$ as a perturbation to the PF model Hamiltonian to energetically prefer the ground state from the fully reducible subspace over the ground states from the other subspaces, with an inverse-polynomial resulting gap. 
Because we know that there is an inverse polynomial gap in each of those subspaces, such a perturbation means the ground state becomes unique.
Moreover, the new ground state is very close to the original one, similarly to what we claim at the end of Section~\ref{sec:PF2break}, and its entropy scaling remains proportional to $\sqrt{N}$.
We leave the exact statement and proof of its entanglement robustness under this perturbation as an open question.

In the next Section, we exhibit an analytical approach
to counting the pairs. However, it does not go all the way to analytically proving the gap \eqref{PFpairgap}, and more work is required in the future.


\subsection{A different approach: counting the PF model words using blocks}

For completeness, let us present another approach to counting the average number of subsequent letter pairs. It allows us to calculate the leading terms in $f_n^{(d)}$ and $f_{n-1,2}^{(d)}$. However, to find the gap \eqref{PFpairgap} between those two expressions analytically, we would have to calculate the next order term, which requires much more effort.

Nevertheless, the {\em block} approach is interesting on its own. There is yet another way to write down the PF model words -- dividing them into blocks: $A_1 a_1 A_1 \, A_2 a_2 A_2 \dots A_b a_b A_b$, with base-level letters $A_i$ chosen from $d$ letters, encapsulating blocks $a_i$ that correspond to $d-1$ colored Dyck paths. 
There are $d^b$ ways to choose the $A_i$'s, and we can count the ways to choose the $a_i$'s as the number of $(d-1)$-color Dyck paths $a_1 ( a_2 ( \cdots ( a_b$ with $b-1$ extra brackets (with fixed colors). Therefore, we have
\begin{align}
	W_{n}^{(d)} &= \sum_{b=1}^{n} d^b (d-1)^{n-b} C_{n-b,b-1} 
	= \underbrace{(d-1)^n}_{Y} \sum_{b=1}^{n} \underbrace{\left(\frac{d}{d-1}\right)^b}_{X^b}  C_{n-b,b-1} \nonumber\\
	&= Y \sum_{b=1}^{n} X^b C_{n-b,k-1}
	= Y \sum_{b=1}^{n} \frac{b}{n} X^b  \binom{2n-b-1}{n-b},
	\label{Wdifferent}
\end{align}
where we labeled $X= \frac{d}{d-1}$ and $Y = (d-1)^n$.
The sum thus goes over the different number of blocks in PF model words. 
It turns out that the average number of these blocks 
\begin{align}
	\bar{b}_{n}^{(d)} 
	= \frac{1}{W_{n}^{(d)}}\sum_{b=1}^{n} b\, d^b (d-1)^{n-b} C_{n-b,k-1}
	= \frac{\sum_{b=1}^{n} b X^b  C_{n-b,b-1}}{\sum_{b=1}^{n} X^b C_{n-b,b-1}}.
\end{align}
grows slowly towards a constant which we find below in \eqref{averageBLOCKSwn}.

\subsubsection{Counting blocks in Dyck paths}

Let us do a few calculations about Dyck paths.
We can view a Dyck path $A_1 a_1 A_1 \dots A_b a_b A_b$ as made from $b$ blocks, giving us
\begin{align}
	C_n = \sum_{b=1}^{n} C_n^{[b\textrm{\,blocks}]} = \sum_{b=1}^{n} C_{n-b,b-1},
\end{align}
because the number of Dyck paths with $n$ pairs made from exactly $b$ blocks
is the same as the number of Dyck paths of the form $a_1 A_1 a_2 A_2 \dots a_{b-1} A_{b-1} a_b$, with $n-b$ pairs and $b-1$ extra letters:
\begin{align}
	C_n^{[b\textrm{\,blocks}]} = C_{n-b,b-1}. \label{PCblocks}
\end{align}

On the other hand, we can view Dyck paths with $n$ pairs that end at a level $k$ above the horizon as
$A_1 a_1 A_1 \dots A_{b-k} a_{b-k} A_{b-k}
\, a_{b-k+1} A_{b-k+1}\,\dots a_{b-1} A_{b-1}\, a_b$,
with an irreducible string $A_{b-k}\dots A_{b-1}$ of length $k$.
The initial $b-k$ blocks might not be present.
Counting these words, we find their number is the same as the number of (well-bracketed) Dyck paths with $n+k$ pairs, which can be built as 
$A_1 a_1 A_1 \dots A_{b}a_bA_{b}$, with at least $k$ blocks. Therefore, we have
\begin{align}
	C_{n,k} = C_{n+k}^{[\geq k]} = \sum_{b=k}^{n+k} C_{n+k-b,b-1}
	= C_{n+k} - \sum_{b=1}^{k-1} C_{n+k-b,b-1}. \label{Pfromblocks}
\end{align}
Our goal is to use these expressions to find the scaling of $C_{n-b,b-1}$ that appears in the counting for $W_n^{(d)}$ \eqref{Wdifferent}. 
This implies
\begin{align}
	C_{n}^{[b\textrm{\,blocks}]} = C_{n-b,b-1} 
	&= C_{n-1} - \sum_{i=1}^{b-2} C_{n-1}^{[i\textrm{\,blocks}]}
	= C_{n-1} - \sum_{i=1}^{b-2} C_{n-1-i,i-1} \nonumber\\
	&= C_{n-(b-1),(b-1)-1} - C_{n-1}^{[b-2\textrm{\,blocks}]} \nonumber\\
	&= C_{n-b+1,b-2} - C_{n+1-b,b-3} 
	= C_{n}^{[b-1\textrm{\,blocks}]} - C_{n-1}^{[b-2\textrm{\,blocks}]}, \label{CBLOCKS}
\end{align}
valid for $b\geq 2$, with $C_{n}^{[0\textrm{\, blocks}]}=0$, $C_{n}^{[1\textrm{\,block}]} = C_{n-1}$, and $C_{n}^{[2\textrm{\,blocks}]} = C_{n-1}$.

\subsubsection{The scaling of $C_{n,k}$}

Using the block counting approach, we can find the asymptotic scaling of $C_{n,k}$, starting with \eqref{CBLOCKS}.
The function $C_{n}^{[k]}$ decreases with growing $k$, because
\begin{align}
	C_{n}^{[k]} - C_{n}^{[k+1]} = C_{n}^{[k]} - C_{n}^{[k]} + C_{n-1}^{[k-1]}
		= C_{n-1}^{[k-1]} > 0,
\end{align}
if $C_{n-1}^{[k-1]} > 0$, i.e. for $n\geq k$, and for $n=k$ we get $C_{n}^{[n]} = 1$ and $C_{n}^{[\geq n]} = 0$.

The recursive equation \eqref{CBLOCKS} has a solution $2^{-k}(c_1 k + c_2)$.
Thus, when we plug in the first two values to $C_{n}^{[1]}=C_n^{[2]}=C_{n-1}$, we obtain an upper bound
\begin{align}
		C_{n}^{[k]} \leq (2k)\, 2^{-k}\, C_{n-1}, \label{CNKapprox}
\end{align}
as the ratio
\begin{align}
		\frac{C_{n}^{[k]}}{C_{n}^{[k-1]}} 
		= \frac{C_{n-k,k-1}}{C_{n-k+1,k-2}}
		= \left(\frac{k}{k-1}\right) \left(\frac{n-k+1}{2n-k}\right)
		= \frac{1}{2}\left(\frac{k}{k-1}\right) \left(1-\frac{k-2}{2n-k}\right)
		\leq \frac{1}{2}\left(\frac{k}{k-1}\right),
\end{align}
for $k\geq 2$. 
In fact, \eqref{CNKapprox} is not only an upper bound but also a good approximation when $k\ll \sqrt{n}$.

\subsubsection{Scaling of fully reducible PF model words}

Now, we can find an upper bound on $W_{n,k}$ using \eqref{CNKapprox}.
\begin{align}
	W_{n}^{(d)} &= Y\sum_{b=1}^{n} X^b C_{n}^{[b\textrm{\, blocks}]} 
	\leq 2Y C_{n-1} \sum_{b=1}^{n} b \left(\frac{X}{2}\right)^b 
	\sim 2Y C_{n-1} \frac{x}{\left(1-x\right)^2} \nonumber\\
	&
	= 2 (d-1)^n C_{n-1} \frac{2d(d-1)}{(d-2)^2}
	= \frac{d(d-1)}{(d-2)^2} C_n^{(d-1)} + \bigO\left(n^{-1} C_n^{(d-1)}\right),
	\label{Wblocknumber}
\end{align}
with $x=\frac{X}{2}=\frac{d}{2(d-1)}$.
Note that we have have previously derived this result \eqref{Vexact} using generating functions.

We can now estimate the average number of blocks in the PF model words.
\begin{align}
	\bar{b}^{(d)}_n = 
	\frac{\sum_{b=1}^{n}bW_{n}^{(d)[b]}}{W_n^{(d)}}
	=\frac{\sum_{b=1}^{n} b X^b C_n^{[b]}
		}{\sum_{b=1}^{n}X^b C_n^{[b]}
		} 
	\sim \frac{\sum_{b=1}^{n} b^2 x^b}{\sum_{b=1}^{n} b x^b}  
	\sim \frac{1+x}{1-x} 
	= \frac{3d-2}{d-2}. \label{averageBLOCKSwn}
\end{align}
where 
$W_{n}^{[b]}$ stands for PF model words with $n$ pairs and $b$ blocks,
$x=\frac{X}{2}=\frac{d}{2(d-1)}$, and recalling
$\sum_{z=0}^{\infty} z x^z = \frac{x}{(1-x)^2}$ and $\sum_{z=0}^{\infty} z^2 x^z = \frac{x(1+x)}{(1-x)^3}$. This means the average number of blocks in $d=3$ color PF model words is $\bar{b}^{(3)}= 7$.

\subsubsection{Scaling of PF model words with $k$ extra letters}

Now for the more interesting part, let us find the scaling of $W_{n,k}^{(d)}$:
\begin{align}
	W_{n,k}^{(d)} &= \frac{1}{d^k} W_{n+k}^{[\geq k]}
	  = (d-1)^{n} \sum_{b=0}^{n} X^b C_{n,k}^{[b]} \nonumber\\
		&	= \frac{(d-1)^{n+k}}{d^k} \sum_{z=0}^{n} X^{k+z} C_{n+k}^{[k+z]} 
		= (d-1)^{n} \sum_{z=0}^{n} X^{z} C_{n+k}^{[k+z]} \\
		&\leq
			(d-1)^n \sum_{z=0}^{n} X^z 2(k+z) 2^{-(k+z)} C_{n+k-1} 
		= 2^{-k}\, 2(d-1)^{n}C_{n+k-1}
			\sum_{z=0}^{n} (k+z) x^z, \nonumber
\end{align}
with $x=\frac{X}{2}=\frac{d}{2(d-1)}$. Calculating the sum approximately, assuming large $n$, we obtain
\begin{align}
	W_{n,k}^{(d)} &\sim
	2^{-k}\, 2(d-1)^{n} C_{n+k-1}
	\left(\frac{k}{1-x} + \frac{x}{(1-x)^2}\right)
			\nonumber\\
	&=2^{-k}\,2 (d-1)^{n}C_{n+k-1} 
			\frac{4(d-1)^2}{(d-2)^2}
			\left(\frac{k(d-2)+d}{2(d-1)}\right) \nonumber\\
	&=2^{-k} 4 C_{n+k-1} (d-1)^{n} \frac{(d-1)\left(k(d-2) + d\right)}{(d-2)^2}  
	\sim \frac{(d-1)^n}{2^k} C_{n+k} \frac{(d-1)\left(k(d-2) + d\right)}{(d-2)^2},
	\label{WNKapprox}
\end{align}
valid for $k \ll \sqrt{n}$.
It might be possible to use a better approximation for $C_{n+k}^{[k+z]}$,
that would give us a result valid for higher $k$'s as well.

Next, we can calculate the average number of blocks that appear before the first extra letter for words that reduce to $k$ letters. Recalling \eqref{CNKapprox}, we have (here $W_{n,k}^{[b']}$ stands for words with $n$ pairs, $k$ extra letters and $b'$ leading blocks):
\begin{align}
	\bar{b}_k &= 
	\frac{\sum_{b'=0}^{n} b' W_{n,k}^{[b']}}{\sum_{b'=0}^{n} W_{n,k}^{[b']}}
		=
		\frac{(d-1)^{n+k} d^{-k} \sum_{b=k}^{n+k} (b-k) X^{b+k} C_{n+k}^{[b]}}{
			(d-1)^{n+k} d^{-k} \sum_{b=k}^{n+k} X^{b+k} C_{n+k}^{[b]}} 
		=
		\frac{\sum_{b=k}^{n+k} (b-k) X^b C_{n+k}^{[b]}}{
			\sum_{b=k}^{n+k} X^b C_{n+k}^{[b]}} 
			\label{averageBLOCKSwnk}\\
	&\sim \frac{\sum_{b=k}^{n+k} (b-k) 2b \left(\frac{X}{2}\right)^b }{ 
			\sum_{b=k}^{n+k} 2b\left(\frac{X}{2}\right)^b }
			\sim \frac{\sum_{z=0}^{n} z(z+k) x^z }{ 
			\sum_{z=0}^{n} (z+k)x^z }. \nonumber
\end{align}
Using approximate results for the sums, we obtain
\begin{align}
	\bar{b}_k &\sim \frac{\frac{x(1+x)}{(1-x)^3} + k \frac{x}{(1-x)^2}}{\frac{x}{(1-x)^2} + \frac{k}{1-x}}
	= \frac{x}{1-x} \left(
	\frac{1+x+k(1-x)}{x+k(1-x)}
	\right).
\end{align}
For $k=0$, this result is consistent with \eqref{averageBLOCKSwn}.
For $d=3$, we have $x=\frac{3}{4}$, so this translates to
\begin{align}
	\bar{b}_k^{(3)} \sim \frac{3(k+7)}{(k+3)} = 3 + \frac{12}{k+3}.
\end{align}

\subsubsection{Counting pairs in fully reducible PF model words}

For numerics, we can use the recursive formulas \eqref{pairsPFrecursive} and \eqref{pairsPFKrecursive} to count the pairs. However, we want to prove analytical bounds, so we have to do more work. 
Using the block-counting approach, we will now be able to derive the leading orders, which will not be enough to resolve the gap. For that, more work is needed. 

The number of pairs in PF model words that fully reduce is
\begin{align}
	\varphi_n = \sum_{b=1}^{n} \bigg(
	\underbrace{bd\,W_{n-1}^{[b-1]}}_{\textrm{peaks from 1-pair blocks}}
	+ \underbrace{b \sum_{i=1}^{n-b} d\tau_{i}^{(d-1)} W_{n-i-1}^{[b-1]}}_{\textrm{peaks within a nonempty block of size $i$}}
	+ \underbrace{\frac{b-1}{d}\,W_{n}^{[b]}}_{\textrm{valleys between blocks}}
	\bigg).
\end{align}
The PF words are made from a different number of blocks, i.e. $\sum_{b} W_n^{[b]} = W_n^{(d)}$. Recall that $\tau_i^{(d-1)} = (g i + h) C_i^{(d-1)}$, with $g=\frac{d}{2(d-1)}$ and $h=\frac{d-2}{2(d-1)}$.
Label $\bar{b}$ the average number of blocks in PF model words. It is a constant plus $\bigO\left(n^{-1}\right)$, as we have shown. We obtain
\begin{align}
	\varphi_n &= d \sum_{b=1}^{n} b\,W_{n-1}^{[b-1]} 
	- dh \sum_{b=1}^{n} b\,W_{n-1}^{[b-1]} 
	+ \sum_{b=1}^{n} 
	b \sum_{i=0}^{n-b+1} (gi+h) d C_{i}^{(d-1)} W_{n-i-1}^{[b-1]}
		+ \frac{\bar{b}-1}{d}\,W_{n}^{(d)} \nonumber\\
	&= d(1-h) \sum_{b=1}^{n} (b-1+1)W_{n-1}^{[b-1]}
	   + \sum_{b=1}^{n} b \left(g \frac{n-b}{b}+h \right) W_n^{[b]}
		+ \frac{\bar{b}-1}{d}\,W_{n}^{(d)}
	\\
	&= d(1-h) \left(\bar{b}+1\right) W_{n-1}^{(d)}
	+
	\left(gn+(h-g)\bar{b}+ \frac{\bar{b}-1}{d}\right) W_n^{(d)}  \nonumber
\end{align}
The average number of pairs is thus
\begin{align}
	f_n^{(d)} = \frac{\varphi_n^{(d)}}{W_n^{(d)}} 
	&= gn+(h-g)\bar{b}+ \frac{\bar{b}-1}{d}
	+ d(1-h)\left(\bar{b}+1\right) \frac{W_{n-1}^{(d)}}{W_n^{(d)}}
	\nonumber\\
	&\sim gn+(h-g)\bar{b}+ \frac{\bar{b}-1}{d}
	+  \frac{d(1-h)\left(\bar{b}+1\right)}{4(d-1)}
	 + \bigO\left(\frac{1}{n}\right),
\end{align}
with the definition of the average number of blocks $\bar{b}$ from \eqref{averageBLOCKSwn}.
This scales linearly with $n$, has a constant, and an error term of order $\oo{n^{-1}}$.
For $d=3$, we have $g=\frac{3}{4}$ and $h=\frac{1}{4}$ and $\bar{b}\approx 7$. This translates to
\begin{align}
	f_n \sim \frac{3n}{4} - \frac{7}{2} + 2 + \frac{9}{4} = \frac{3n}{4} + \frac{3}{4}
	= \frac{3N}{8} + \frac{3}{4}.
	\label{fn32}
\end{align}

\subsubsection{Counting pairs in PF model words with $k$ extra letters}

We can map the words $AaA\,\dots CcC\, X_1 x_1\dots X_k x_k$ of length $2n+k$ with $k$ extra letters $X_1,\dots,X_k$ to fully reducible PF model words of the form 
$AaA\,\dots CcC\, X_1 x_1 X_1\,\dots X_k x_k X_k$, with length $2(n+k)$, except the letters $X_1,\dots, X_k$ are fixed.
The number of subsequent identical letter pairs in the original words is the number of pairs within the $W_{n+k}$ words, minus the possible valleys between the extra letters ($X_i X_{i+1}$), plus the possible valleys on the interfaces of the type $x_i X_{i+1}$. 

Recall also that we can count these words as
\begin{align}
	W_{n,k}^{(d)} = \sum_{b=0}^{n} \frac{1}{d^k} W_{n+k}^{[k+b]},
\end{align}
where $b$ is the number of blocks before the first irreducible letter $X_1$.

We label $\tau'_0 = 1$ and $\tau'_{i\geq 1}=\tau_i$, while $\tau_0 = 0$, and observe that 
\begin{align}
			W_{n,k}^{[b]} = \frac{1}{d} W_{n+1,k-1}^{[b+1]} = \dots = \frac{1}{d^k} W_{n+k}^{[b+k]}.
\end{align}
which means
\begin{align}
			W_{n-(i+1),k}^{[b-1]} = \frac{1}{d} W_{n-i,k-1}^{[b]} = \dots = \frac{1}{d^k} W_{n-(i+1)+k}^{[b+k]}.
\end{align}
when taking $n\rightarrow n-(i+1)$ and $b\rightarrow b-1$.
Note that we can't use it as 
$W_{n-i,k-1}^{[0]} \neq sW_{n-(i+1),k}^{[-1]}$, because $W_{n,k}^{[0]}$ makes sense, while $W_{n-1,k}^{[-1]}$ does not.
Thus, counting things slowly and carefully, we get
\begin{align}
	\varphi_{n,k} &= \sum_{b=1}^{n} 
	\underbrace{
		b \sum_{i=0}^{n-b} d\tau'_i W_{n-(i+1),k}^{[b-1]}}_{\textrm{peaks within the first $b$ blocks}}
	+ 
	\sum_{b=0}^{n}
	\underbrace{
		k \sum_{i=0}^{n-b} \tau_i W_{n-i,k-1}^{[b]}}_{\textrm{peaks within the $k$ extra blocks}}
	\nonumber\\
	&+
	\sum_{b=1}^{n}
	\underbrace{\frac{b}{d}\,W_{n,k}^{[b]}}_{\textrm{v's between blocks}}
	+ 
	\sum_{b=0}^{n}
	\underbrace{\frac{k-1}{d-1} \left(W_{n,k}^{[b]}-W_{n,k-1}^{[b]}\right)}_{x_i X_{i+1}\textrm{ valleys, nonempty }x_i}
	.
\end{align}
Taking into account the values we should get for $i=0$, and $W_{n,k-1}^{[b]}=d W_{n-1,k}^{[b-1]}$ for $b\geq 1$, we get
\begin{align}
	\varphi_{n,k} &= \sum_{b=1}^{n} 
	\bigg(
			bd W_{n-1,k}^{[b-1]} - bdh W_{n-1,k}^{[b-1]}
		+
		b \sum_{i=0}^{n-b} 
				(gi+h)\, dC_i^{(d-1)} W_{n-(i+1),k}^{[b-1]}
	\bigg)\nonumber\\
	&+ \sum_{b=0}^{n} 
	\bigg(
			-kh W_{n,k-1}^{[b]}
			+	k \sum_{i=0}^{n-b} 
				(gi+h)\, C_i^{(h-1)} W_{n-i,k-1}^{[b]}
	\bigg)\nonumber\\
	&+ \frac{\bar{b}_k}{d}W_{n,k}^{(d)} 
			+ \frac{k-1}{d-1}\left(W_{n,k}^{(d)}-W_{n,k-1}^{(d)}\right).
\end{align}
The average length $i$ of an inside of a block in words with $k+b$ blocks with $n-b$ available pairs is $\frac{n-b}{b+k}$, so 
\begin{align}
	\varphi_{n,k} &= 
	 d(1-h) \sum_{b=1}^{n} (b-1+1) W_{n-1,k}^{[b-1]}
	 	+ \sum_{b=1}^{n} b \left(\frac{g(n-b)}{b+k}+h\right)
			  W_{n,k}^{[b]}\nonumber\\
		& - kh W_{n,k-1}^{(d)}
		+ k\sum_{b=0}^{n} \left(\frac{g(n-b)}{b+k}+h\right)
			  W_{n,k}^{[b]}
		+ \frac{\bar{b}_k}{d}W_{n,k}^{(d)} 
			+ \frac{k-1}{d-1}\left(W_{n,k}^{(d)}-W_{n,k-1}^{(d)}\right) \nonumber\\
&= 
	 \left(d(1-h)\left(\bar{b}_{k}+1\right)\right) W_{n-1,k}^{(d)}
	- kh W_{n,k-1}^{(d)}
	 	+ \sum_{b=0}^{n} \left(g(n-b)+(b+k)h\right)
			  W_{n,k}^{[b]} \nonumber\\
		&+ \frac{\bar{b}_k}{d}W_{n,k}^{(d)} 
			+ \frac{k-1}{d-1}\left(W_{n,k}^{(d)}-W_{n,k-1}^{(d)}\right) \nonumber\\
&= d(1-h)\left(\bar{b}_{k}+1\right) W_{n-1,k}^{(d)}
	 	+ \left(gn+hk + (h-g)\bar{b}_k + \frac{\bar{b}_k}{d} + \frac{k-1}{d-1}\right) W_{n,k}^{(d)} 
		\\
		&- \left(\frac{k-1}{d-1}+kh\right)W_{n,k-1}^{(d)}\nonumber
\end{align}
We could stop here and write the exact formula
\begin{align}
	f_{n,k}^{(d)} = \frac{\varphi_{n,k}^{(d)}}{W_{n,k}^{(d)}}
	&= gn + hk + \left(h-g + \frac{1}{d}\right)\bar{b}^{(d)}_k  + \frac{k-1}{d-1} \nonumber\\
	&+ d(1-h)\left(\bar{b}^{(d)}_{k}+1\right) \frac{W_{n-1,k}^{(d)}}{W_{n,k}^{(d)}} 
	- \left(\frac{k-1}{d-1}+kh\right)\frac{W_{n,k-1}^{(d)}}{W_{n,k}^{(d)}},
\end{align}
recalling the definition of $\bar{b}_k^{(d)}$ from \eqref{averageBLOCKSwnk}.

On the other hand, we can continue expressing $W_{n,k-1}^{(d)}$ in terms of $W_{n,k}^{(d)}$ as follows:
\begin{align}
	W_{n,k-1}^{(d)} = \sum_{b=0}^{n} W_{n,k-1}^{[b]} 
	= W_{n,k-1}^{[0]} + \sum_{b=1}^{n} W_{n,k-1}^{[b]}
	= W_{n,k-1}^{[0]} + \sum_{b=1}^{n} d W_{n-1,k}^{[b-1]}
	= C^{(d-1)}_{n,k-1} + d W_{n-1,k}^{(d)}
\end{align}
Because we know $W_{n,1}^{(d)} = d^{-1} W_{n+1}^{(d)}$ and
$W_{n,2}^{(d)} = d^{-2} \left(W_{n+2}^{(d)}- d(d-1)^{n+1}C_{n+1}\right)$,
this implies 
\begin{align}
	W_{n,1}^{(d)}= \frac{d^{-1} W_{n+1}^{(d)}}{d^{-2} \left(W_{n+2}^{(d)}- d(d-1)^{n+1}C_{n+1}\right)}
							\,W_{n,2}^{(d)}
				=  \frac{d}{4(d-1) - \frac{(d-2)^2}{d-1}}\,W_{n,2}^{(d)} \sim \frac{2}{5} W_{n,2}^{(d)}.
\end{align}
where made the approximation $W_{n+1}^{(d)} \sim \frac{d(d-1)}{(d-2)^2} C^{(d-1)}_{n+1}$ \eqref{Rnfirst} in the last step for $d=3$.

We can see that the leading term in $f_{n,2}$, scaling with $n$, has a prefactor exactly $g$, which agrees with the numerics.
How about the constant term? We need the ratio of $W_{n-1,k}^{(d)}$ to $W_{n,k}^{(d)}$, which is $4(d-1)=8$ for $d=3$ with error of order $\oo{n^{-1}}$. Thus, we have for $d=3$ and $k=2$:
\begin{align}
	f_{n,2} = \frac{\varphi_{n,k}}{W_{n,k}^{(3)}}
	&\sim gn + hk + \left(h-g + \frac{1}{d}\right)\bar{b}_k + \frac{k-1}{d-1}
	+ \frac{d(1-h)\left(\bar{b}_{k}+1\right)}{4(d-1)} 
	 - \frac{2}{5}\left(\frac{k-1}{d-1}+ kh\right)
	 \nonumber\\
	&\sim
		\frac{3n}{4} + \frac{1}{2} - \frac{\bar{b}_k}{6} + \frac{1}{2}
		+ \frac{3\times \frac{3}{4} (\bar{b}_k+1)}{8} - \frac{2}{5}
 \sim
		\frac{3n}{4} + 1 - \frac{9}{10}  + \frac{9}{5} - \frac{2}{5}
		= \frac{3n}{4} + \frac{3}{2}. \label{fn32with2extra}
\end{align}
using $\bar{b}_2\approx 5.4 = \frac{27}{5}$.
Recall that this is for a chain with $n$ pairs and 2 irreducible letters.
Thus, for a chain with length $N=2n+2$, we get $\frac{3N}{8}+\frac{3}{4}$, just what we got in \eqref{fn32} for the average number of pairs in irreducible words with length $N = 2n$
This means the leading two orders are the same. 

Thus, we calculated the leading two orders for the average number of letter pairs for fully reducible words and words with 2 irreducible letters, on a chain of the same length. However, unlike the $d=2$ case in Section~\ref{sec:countPF2}, we did not get a proof of an $\Theta\left(N^{-1}\right)$ gap between them. We only have a numerical indication of this. For an actual proof, one would need to expand this analysis to another order, which seems a tough ordeal.

\section{Counting peaks, valleys and pairs via generation function asymptotics}
\label{sec:GFasymptopairs}

Finally, there is one more analytic combinatorics approach that we can use, allowing us to perform asymptotical counting of peaks, valleys in Dyck paths, and subsequent letter pairs in PF model words. This analytical approach works for subspaces with a constant $k$ extra brackets or irreducible letters. At the end of Section~\ref{subsec:GFAverageNoPairs}, we will also discuss the case of general $k$.

The procedure starts with building bivariate ($u,z$) ordinary generating functions (BOGF) 
\begin{align}
	\sum_{t\in\textrm{words}} u^{\textrm{cost}(t)}z^{|t|}.
\end{align}
for counting these objects.
In each summand, the power of the parameter $z$ denotes the length $|t|$ of a word,
while the power of the parameter $u$ is the ``cost'' of the word $t$ -- the number of times the object (peak, valley, pair) appears in it. 
To find the BOGFs, we rely on standard tools form analytic combinatorics -- products and sequences of simpler generating functions. 
We then differentiate the BOGF with respect to $u$. For each term $u^k$ in the series expansion of the BOGF, this gives us $k u^{k-1}$. When we evaluate this at $u=1$, we obtain an ordinary generating function for counting the objects (peaks, valleys, pairs) marked by the $u$'s in the original BOGF. For this, we need to work out the asymptotic scaling of the $z^n$ term. Finally, to get the average object count, we need to divide the result by the asymptotic result for the total word count, and work out the relevant terms up to our desired precision.

\subsection{Counting peaks via generating functions}

Let us derive the BOGF (bivariate ordinary generating function) for peaks and valleys in Dyck words. The power of $z$ in the expansion of the BOGF will indicate the the length of a word (the number of bracket pairs) and there will be an extra factor of $u$ for each peak. 
	
	\vskip6pt
	\noindent\textbf{The uncolored (monochromatic) case.} A Dyck word can be an empty string or $L\alpha R$ followed by a sequence (possibly empty) of similar blocks. We want to mark each peak $LR$, adding a factor of $u$. 
	In the symbolic notation of \cite{FlajoletSedgewick}, labeling the empty word by $\varnothing$, the recursion reads:
	\begin{align}
		\mathcal{P'}=u\varnothing+\mathcal{Z}\times \mathcal{P'}\times\SEQ(\mathcal{Z}\times \mathcal{P'}),
	\end{align}
	where $\mathcal{Z}$ is called an {\em atom} representing a matched pair ($L\ldots R$), and adds a factor of $z$, while $u$ is a {\em marker} of a peak.
	This gives us the desired behavior, up to the empty word, which does not have a peak. Thus, when we later solve the recursion and find the generating function, we will require a $1-u$ correction. 
	Let us translate this to a generating function equation:
	\begin{align}
		P'(u,z)=u+\frac{z\,P'(u,z)}{1-z\,P'(u,z)}
	\end{align}
	We solve this quadratic equation and receive the allowed solution:
	\begin{align}
		P'(u,z)=\frac{1-z+u\,z-\sqrt{(z-uz-1)^2-4 u z }}{2\,z},
	\end{align}
	which means after the correction for the empty word, we get 
	\begin{align}
		P(u,z) = P'(u,z) - u + 1.
	\end{align}
	With its help, we can count the peaks in all Dyck paths with length $2n$ (with $n$ bracket pairs) as
	\begin{align}
		[z^n]  \left. \frac{\partial P(u,z)} {\partial u} \right|_{u=1}, \label{BOGFpeaks}
	\end{align}
	the coefficient of the $z^n$ term in this formula's power series expansion.
	
	\vskip6pt
	\noindent\textbf{Dyck paths with $d$ colors.} To take into account the possibility of coloring the bracket (up/down step) pairs with $d$ colors, we only need to adding a factor of $d$ to each $z$.
	This gives us the generating function:
	\begin{align}
		P'^{(d)}(u,z)=\frac{1-dz+duz-\sqrt{(dz-duz-1)^2-4duz}}{2\,dz}.
	\end{align}
	We again have to correct the behavior of the generating function for an empty word, adding $1-u$, which translates to 
	the final BOGF function counting peaks in $d\geq1$ colored Dyck words:
	\begin{align}
		P^{(d)}(u,z)=\frac{1+dz-duz-\sqrt{(dz-duz-1)^2-4duz}}{2\,dz}.
	\end{align}
	To get the number of peaks, we proceed as in \eqref{BOGFpeaks}.
	

	\vskip6pt
	\noindent\textbf{Dyck paths with $k$ extra steps and their peaks.}
	Counting the number of peaks in Dyck words with $k$ extra steps is quite easy. Each such word can be decomposed as $v_1 X_1v_2\ldots X_kv_{k+1}$, with possibly empty Dyck paths $v_i$. The extra steps can't be involved in peaks, so the BOGF for counting peaks in Dyck words with $k$ extra steps is just a Cartesian product of $k+1$ BOGFs counting the number of peaks in Dyck words $P^{(d)}(u,z)$.
	\begin{align}
		P^{(d),k}(u,z) = \left(P^{(d)}(u,z)\right)^{k+1}. \label{BOGFpeaksK}
	\end{align}
	There are $n-k/2$ pairs in words with length $2n$ and $k$ extra steps.
	Thus, to get the number of peaks in such words, we need to read off the coefficient of $z^{n-k/2}$ 
	in the series expansion of $\partial P^{(d),k}(u,z)/\partial u|_{u=1}$.

\subsection{Counting valleys via generating functions}

	\textbf{The uncolored (monochromatic) case.}\footnote{One could also simply count the number of valleys in uncolored Dyck paths simply as $\#$peaks$-1$, which agrees with what we derive here: $V(u,z) = P(u,z)/u$.}
	A Dyck path can be empty, or composed in block form as $L\alpha R$, possibly concatenated with other such blocks. We add a factor of $u$ for each such block, concatenating them necessarily forms a valley ($L\ldots \underline{RL}\ldots R$). The string $\alpha$ is a Dyck path, satisfying the same relations. In symbolic notation, the recurrence equation reads
	\begin{align}
		V=\varnothing+\mathcal{Z}\times V\times\SEQ(u\mathcal{Z}\times V)
	\end{align}
	and translates to the generating function equation
	\begin{align}
		V(u,z)=1+\frac{z\,V(u,z)}{1-z\,u\,V(u,z)}.
	\end{align}
	We can solve this quadratic equation and obtain the BOGF
	\begin{align}
		V(u,z)=\frac{1-z+uz-\sqrt{(z-uz-1)^2-4uz}}{2uz}.
		\label{BOGFvalley}
	\end{align}
	
	\vskip6pt
	\noindent
	\textbf{The $d$-color case.} Now, the steps (open/closed brackets) have one of $d$ colors. A valley between blocks is formed only when the two enclosing pairs have the same color $L_i\ldots \underline{R_i L_i}\ldots R_i$. This happens in only one of the $d$ possible cases for the starting color of the next block. This allows us to write the symbolic relation
	\begin{align}
		V^{(d)}=\varnothing+(\mathcal{Z}_1+\ldots+\mathcal{Z}_d)\times V^{(d)}
			\times
				\SEQ\left((\mathcal{Z}_1+\ldots+u\mathcal{Z}_i+\ldots+\mathcal{Z}_d)\times V^{(d)}\right),\label{DyckValleysSymbolic}
	\end{align}
	where we labeled w.l.o.g. $\mathcal{Z}_i$ with $u$, as one of the atoms has to form a valley.
	Therefore, we get
	\begin{align}
		V^{(d)}(u,z) &= 1 + \frac{d z\,V^{(d)}(u,z)}{1-z(u+d-1)\,V^{(d)}(u,z)}.\label{RecValleysD}
	\end{align}
	The solution for the generating function is thus
	\begin{align}
		V^{(d)}(u,z)=\frac{1-z+uz-\sqrt{(z-uz-1)^2-4(uz+(d-1)z)}}{2(uz+(d-1)z)}.
	\end{align}
	
	\vskip6pt
	\noindent
	\textbf{Counting valleys in Dyck paths with $k$-extra steps}
	Each such path can be written as $v_1 X_1v_2 X_2\ldots X_kv_{k+1}$.
	In the {\em uncolored} case, whenever there is a nonempty block $v_i$ preceding an extra letter $X_i$, an additional valley is formed. This allows us to write the equation:
	\begin{align}
		V^{k}(u,z)= \left( u(V(u,z)-1)+1 \right)^k V(u,z),
	\end{align}
where the $V(u,z)-1$ makes sure that empty blocks (the term 1 in the expansion of $V(u,z)$) don't contribute a valley (a factor of $u$).
		
	In the {\em $d$-color} case, only a fraction of nonempty colored Dyck walks form additional valleys before the extra steps. The colors match in only $1/d$ of the words and don't match in $(d-1)/d$ of them. Thus,
	\begin{align}
		V^{(d),k}(u,z)
		= \left(\frac{u+(d-1)}{d}\left(V^{(d)}(u,z)-1\right)+1\right)^kV^{(d)}(u,z). 
		\label{BOGFvalleyKD}
	\end{align}
	To get the number of valleys in such words, we need to read off the coefficient of $z^{n-k/2}$ 
	in the series expansion of $\partial V^{(d),k}(u,z)/\partial u|_{u=1}$.

\subsection{Counting pairs in the PF model via generating functions}
Thanks to the properties of the pair-flip (PF) model words and their relationship to $d-1$ colored Dyck paths, 
the BOGFs we derived above will be helpful in finding those for counting neighboring letter pairs in PF words.

Each PF model word has a mountain profile as in Figure~\ref{fig:peakcount}c. 
The neighboring letter pairs correspond to peaks in this profile, 
as well as same-color valleys. We will thus create a BOGF counting these peaks as well as valleys.

First, we will do this for colored Dyck paths, starting with the symbolic relation for valleys  \eqref{DyckValleysSymbolic}.
Adding peak-counting is easy, adding a $u$ factor for each empty word in the recursive relation:
\begin{align}
	PV'=u \varnothing+(\mathcal{Z}_1+\ldots+\mathcal{Z}_d)\times PV'^{(d)}
	\times		\SEQ\left((\mathcal{Z}_1+\ldots+u\mathcal{Z}_i+\ldots+\mathcal{Z}_d)\times PV'^{(d)}\right).
\end{align}
This works except for the length-0 empty word $\varnothing$, which does not contain a pair, and we'll correct that below in \eqref{PVcorr}.
Let us rewrite the symbolic representation into a generating function equation:
\begin{align}
	PV'^{(d)}(u,z) &= u + \frac{d z\,PV'^{(d)}(u,z)}{1-z(u+d-1)\,PV'^{(d)}(u,z)}.\label{RecPeaksValleysD}
\end{align}
We can solve this equation and correct the peak-and-valley counting for the empty word as well with
\begin{align}
	PV'^{(d)}(u,z)&=\frac{1+dz(u-1)-uz+u^2z-\sqrt{(1+z(u-1)(d+u))^2-4uz(d+u-1)}}{2z(d+u-1)},\\
	PV^{(d)}(u,z)&=PV'^{(d)}(u,z)-u+1. \label{PVcorr}
\end{align}

With this in hand, we can address the PF model and its neighboring letter pair counting. Any PF word can be written as a sequence of blocks $A_i v_i A_i$, with letters $A_i\in\{1,\ldots ,d\}$ and $v_i$'s corresponding to $d-1$ colored Dyck paths \eqref{Dcorrespond}. 
Thanks to this mapping we can write the symbolic relation for counting pairs in PF words
\begin{align}
	P\!F\!P^{(d)} = \varnothing + (\mathcal{Z}_1+\ldots+\mathcal{Z}_d)\times PV'^{(d-1)}
	\times \SEQ\left( (\mathcal{Z}_1+\ldots+u\mathcal{Z}_i+\ldots+\mathcal{Z}_d)\times PV'^{(d-1)} \right),
\end{align}
Here, the term $\varnothing$ stands for the empty word.
The factor $(\mathcal{Z}_1+\ldots+\mathcal{Z}_d)\times PV'^{(d-1)}$ counts the peaks and valleys within the block $v_1$, as well as a possible pair $A_1A_1$, thanks to the $PV'^{(d-1)}$ giving one $u$ for the empty word. Finally, the $u$ in front of only one of the $d$ terms in the final factor accounts for a possible pair between subsequent blocks: $A_i v_i \underline{A_i A_{i+1}} v_{i+1} A_{i+1}$, when the letters $A_i A_{i+1}$ match.
This symbolic relation translates to the generating function
\begin{align}
	P\!F\!P^{(d)}(u,z)=1+\frac{dzPV'^{(d-1)}(u,z)}{1-z(u+d-1)PV'^{(d-1)}(u,z)}.
	\label{PFpaircountBOGF}
\end{align}

\vskip6pt
\noindent
\textbf{Pairs in PF words with $k$-extra letters.}
As above, our way to count neighboring letter pairs in PF words will be to count the peaks and valleys in their mountain profiles. 

First, we write down the generating function counting both peaks and valleys in Dyck paths with $k$-extra steps. Each such path can be written as $v_1 X_1 v_2 X_2\ldots X_k v_{k+1}$. An extra letter $X_i$ can possibly contribute a valley if the preceding substring $v_i$ is nonempty and its last letter matches $X_i$. Similarly as for the valleys in colored Dyck paths \eqref{BOGFvalleyKD} we can write the BOGF for peaks and valleys together
\begin{align}
PV^{(d),k}(u,z)
=\left(\frac{u+(d-1)}{d}\left(PV^{(d)}(u,z)-1\right)+1\right)^kPV^{(d)}(u,z).
\label{PFpairsKimplicit}
\end{align}

With this in hand, we turn to the PF words with $k$ irreducible letters $X_1,\ldots X_k$.
We can rewrite them as $wX_1v$ with a fully reducible PF word $w$ and 
a word $v$ corresponding to a Dyck walk with $k-1$ extra steps (corresponding to the irreducible letters $X_2,\ldots,X_k$) and $d-1$ colors. Similarly to peaks and valleys in Dyck paths with extra terms, there can be an additional pair between a nonempty $w$ and the irreducible letter $X_1$ -- this happens in $1/d$ of such words. This lets us use the generating function for counting peaks and valleys in Dyck walks with extra terms to build the BOGF
\begin{align}
P\!F\!P^{(d),k}(u,z) = 
\left(\frac{u+(d-1)}{d}\left(P\!F\!P^{(d)}(u,z)-1\right)+1\right)PV^{(d-1),k-1}.\label{GFPFP}
\end{align}
Note that with some effort, one can verify that this agrees with \eqref{PFpaircountBOGF} for $k=0$ as well, when we formally extend the definition of $V^{(d),k}$ \eqref{BOGFvalleyKD} also to $k=-1$.

	\subsection{The average number of pairs in PF model words}
	\label{subsec:GFAverageNoPairs}

	We can now utilize the BOGF to derive the average number of pairs in PF words of total length $2n$ including the $k$ irreducible letters.
	First, we shift the BOGF by $z^{k/2}$ so the extra terms are added to the total length. Next, we form the cumulative cost function counting neighboring letter pairs $\partial z^{k/2}P\!F\!P^{(d),k}(u,z)) / \partial u |_{u=1}$
	to get the total count of the pairs, and read off the coefficient of the term 
	$z^{n}$. We then obtain the average cost by dividing it by the total number of words obtained from the BOGF (setting $u=1$) or using our previously derived generating function \eqref{WKgenerate}.
	
\begin{figure}
\begin{center}
		\begin{tabular}{| c | r | r r r r r r r r r r |}
			\hline
			\multicolumn{2}{|r|}{$n$} & 0 & 1 & 2 & 3 & 4 & 5 & 6 & 7 & 8 & 9\\
			\hline
			\multirow{2}{*}{$k=0$}&\#words & 1 & 3 & 15 & 87 & 543 & 3543 & 23823 & 163719 & 1143999 & 8099511\\
			& \#pairs & 0 & 3 & 27 & 225 & 1827& 14661& 116919 & 929097 & 7367355 & 58343949 \\
			\hline
			\multirow{2}{*}{$k=2$}&\#words & 0 & 1 & 7 & 47 & 319 & 2199 & 15375 & 108807 & 777919 & 5610167 \\
			& \#pairs & 0 & 0 & 9 & 105 & 987 & 8613 & 72567 & 599625 & 4896315 & 39673869\\
			\hline
			\multirow{2}{*}{$k=4$}&\#words & 0 & 0 & 1 & 11 & 95 & 759 & 5871 & 44743 & 338623 & 2555063\\
			& \#pairs & 0 & 0 & 0 & 15 & 231 & 2565 & 25047 & 228969 & 2013435 & 17269773\\
			\hline
			\end{tabular}

	\caption{The first 10 numerical values of the total number of $d=3$ fully reducible PF model words with fixed word length $N=2n$, and the number of subsequent letter pairs these words have.
	We also list these values for words with $k=2$ and $k=4$ irreducible letters.}
	\label{fig:PFpairnumerics}
	\end{center}
\end{figure}
	
	We will now focus on words of length $2n$ in the $d=3$ color PF model with a constant number $k$ of irreducible letters (including fully reducible words with $k=0$),
	and prove that the average number of neighboring letter pairs decreases with growing $k$. 
	In particular, we will show that for the uniform superposition in the subspace with $k$ irreducible letters, the average number of pairs is $\Theta\left(k^2/n\right)$ smaller than in the fully reducible subspace.
	Thanks to our numerical investigation (see Figure~\ref{fig:pairnumerics}), we believe this behavior persists for all $k$.

\vskip6pt
	\noindent
	\textbf{Counting the words with $k\in O(1)$ irreducible letters.} For constant $k$ (including $k=0$), we can obtain an asymptotic approximation by singularity analysis \cite{FlajoletOdlyzkoSingularityAnalysis, FlajoletSedgewick}.
	The essential singularity of the OGF sits at $z=\frac{1}{8}$.
	We are interested in low half-integer powers of $1-8z$ that dominate the asymptotic scaling. For $d=3$ and labeling $a=\left(1-\sqrt{1-8z}\right)^k$, the GF counting PF words with $k$ irreducible letters \eqref{WKgenerate} reads 
	\begin{align}
		\frac{a-3a\sqrt{1-8z}}{4^{k-1}z^{k/2}8(9z-1)},
	\end{align}
	when we include a $\sqrt{z}$ factor for each extra letter, so that the coefficient of the $z^n$ term counts the number of words of length $2n$, including the $k$ irreducible letters.
	Note that for chains with even length, only even $k$ are possible.
	
	For a constant $k$, this function is amenable to singularity analysis via the closure properties of $\Delta$-analytic functions \cite{FlajoletSedgewick}.
	We use the binomial theorem to find the low power terms for both odd and even powers of $\sqrt{1-8z}$ in $a$:
	\begin{align}
		a = (1-\sqrt{1-8z})^k
		=&\sum_{i=0}^{k}\binom{k}{i}\left(-\sqrt{1-8z}\right)^{i}\nonumber\\
		=& 1-k\sqrt{1-8z}+\binom{k}{2}(1-8z)-\binom{k}{3}(1-8z)^{3/2} \\
		&+\binom{k}{4}(1-8z)^{2}-\binom{k}{5}(1-8z)^{5/2}+\ldots \nonumber
	\end{align}
	We plug the expansion of $a$ to the generating function and consider only the half integer powers of $1-8z$.
	\begin{align}
		\frac{1}{4^{k-1}z^{k/2}8(9z-1)}\sum_{i=0}^{k/2}-\left(\binom{k}{2i+1}-3\binom{k}{2i}\right)
			(1-8z)^{(2i+1)/2}.
	\end{align}

	Moreover, we consider them only up to error $O\left((1-8z)^{7/2}\right)$.
	We expand this function around $z=1/8$ and using computer algebraic manipulation\footnote{A tedious calculation of the Taylor series, easily handled by e.g. Mathematica.}, receive
	the expression 
	\begin{align}
		 &-2(k+3)\,(1-8z)^{1/2}
		-\frac{k^3+9k^2+56k+162}{3}\,(1-8z)^{3/2} \\
		&-\frac{k^5+15k^4+200 k^3+1740k^2+10104k+29160}{60}\,(1-8z)^{5/2}
		+ O\left((1-8z)^{7/2}\right). \nonumber
	\end{align}
	with a $2^{-k/2+1}$ prefactor.

	We then use the standard function scale \cite{FlajoletSedgewick,FlajoletOdlyzkoSingularityAnalysis} and find the asymptotic approximation for counting the $d=3$ color PF words of length $2n$ with $k\in O(1)$ irreducible letters:
	\begin{align}
		\frac{8^n(k+3)}{2^{k/2-1}\sqrt{\pi n^3}}
			\bigg( 1 &-\frac{2k^3+18k^2+109k+315}{8(k+3)n}
			\label{WNKasymptotic}\\
							& + \frac{4k^5+60k^4+740k^3+6420k^2+37081k+106995 }{128(k+3)n^2}
							+O((k^6+1) n^{-3}) \bigg).
						\nonumber
	\end{align}
Note that we put in $(k^6+1)$ instead of $k^6$ in the final scaling of the error term so that it works also for $k=0$.
	

\vskip6pt
	\noindent
	\textbf{The total number of neighboring letter pairs for words with $k\in O(1)$ irreducible letters.} 
	We build the cumulative cost function for counting neighboring letter pairs in such words using \eqref{GFPFP}:
\begin{align}
	\left.\frac{\partial}{\partial u}z^{k/2}P\!F\!P^{(d),k}(u,z)\right|_{u=1}.
\end{align}
Recalling the generating functions for PF \eqref{W0generate} and Dyck words \eqref{ColorCGF},
this results in
\begin{align}
	z^{k/2} C^{(d-1),k-2}(z)\bigg(&W^{(d)}(z)\left(k\frac{\partial}{\partial u}PV^{(d-1)}(u,z)+\frac{k-1}{d-1}\left(C^{(d-1)}(z)-1\right) \right) \label{PairsPFcount}\\
	&+ z^{k/2}C^{(d-1)}(z)\left(\frac{\partial}{\partial u}P\!F\!P^{(d)}(u,z)+\frac{1}{d}\left(W^{(d)}(z)-1\right)\right)
	\bigg)\bigg|_{u=1},\nonumber 
\end{align}
For the $d=3$ color PF model, we label $b=(1-\sqrt{1-8z})^{k-1}$ and continue as
\begin{align}
	\frac{3}{2^{2k-3}}\,\frac{\left(1-k+4z(3k+1)\right)b+(k-1)b\sqrt{1-8z}}{z^{k/2-1}(1+3\sqrt{1-8z})^2\sqrt{1-8z}}.\label{PairCumulativeCostGF}
\end{align}
For a constant $k$, this function is $\Delta$-analytic with an algebraic singularity and therefore amenable to singularity analysis, thanks to the closure properties of $\Delta$-analytic functions.
Singularity analysis tells us the dominating contributions come from the essential singularity at $z=\frac{1}{8}$. 
We now need to expand the GF in the vicinity of this point, focusing on half-integer, low powers of $1-8z$. To find the expansion up to $(1-8z)^{5/2}$, it is enough to consider the nominator up to $(1-8z)^{3}$:
\begin{align}
	\left(1-k+4z(3k+1)\right)+\sum_{i=1}^6\left(\left(1-k+4z(3k+1)\right)\binom{k-1}{i}+(k-1)\binom{k-1}{i-1}\right)(-\sqrt{1-8z})^i.\label{PairCumulativeCostGFNominatorCut}
\end{align}
Thus, expanding \eqref{PairCumulativeCostGF} with the trimmed nominator \eqref{PairCumulativeCostGFNominatorCut} around the essential singularity, we receive
\begin{align}
	&(k+3)(1-8z)^{-1/2}
	+
	\half\,(k^3+9k^2+50k+144)(1-8z)^{1/2}\nonumber\\
	-&\frac{1}{24}\,(k^5+15k^4+172k^3+1488k^2+8584k+24768)(1-8z)^{3/2}\\
	+&\frac{1}{720}\,(k^7+21k^6+382k^5+5520k^4+63664k^3+551064k^2+3180048k+9175680)(1-8z)^{5/2},
	\nonumber\\
	+&O((1-8z)^{7/2})\nonumber
\end{align}
with a $3/2^{k/2+1}$ prefactor.
Finally, we use the standard function scale to receive the asymptotic expression counting the neighboring letter pairs in the $d=3$ PF model words of length $2n$ with $k$ irreducible letters:
\begin{align}
	3\,\frac{8^n(k+3)}{2^{k/2+1}\sqrt{\pi n}}&\bigg(1
	-\frac{2k^3+18k^2+101k+291}{8(k+3)n} \label{PFpairsKfinal}\\
	&+\frac{4k^5+60k^4+676k^3+5844k^2+33737k+97347}{128(k+3)n^2}\nonumber\\
	&-\frac{8k^7+168k^6+2876k^5+41460k^4+478502k^3+4142022k^2+23902749k+68968755}{3072(k+3)n^3}\nonumber\\
	&+O((k^8+1)n^{-4})\bigg).\nonumber
\end{align}

\vskip6pt
	\noindent
	\textbf{The average number of pairs for words with $k\in O(1)$ irreducible letters.} 
	Finally, we can divide \eqref{PFpairsKfinal} and \eqref{WNKasymptotic}  
	to get the average number of subsequent letter pairs in $d=3$ PF model words with length $N=2n$. For large $n$, after some calculation we receive
\begin{align}
	\frac{3n}{4}+\frac{3}{4}-\frac{3(k^3+9k^2+50k+144)}{16(k+3)n}+O((k^4+1)n^{-2}).
	\label{AVGfinalConst}
\end{align}

A few observations are in place.
First, this result agrees with the first two terms calculated in \eqref{fn32} and \eqref{fn32with2extra} by a direct method not involving analytic combinatorics. 
Second, this function is monotonously decreasing with $k$ from its maximum at $k=0$, for $k\in \mathbb{N}_0$. Therefore, the uniform superposition of fully reducible PF words has the maximum expected number of pairs out of all the uniform superpositions of PF words with $k=O(1)$ extra letters. Moreover, the average number of pairs is approximately $\frac{3k^2}{16n}$ smaller in the subspaces with $k$ irreducible letters. In Figure~\ref{fig:pairnumerics}, we showcase our numerics indicating that this behavior persists beyond the region of constant $k$, where expression \eqref{AVGfinalConst} is valid.
Third, this means when used as a perturbation, pair-counting can be used as a degeneracy-breaking term for the PF model, selecting a nearly-uniform superposition of states from the fully-reducible subspace as the unique ground state.

\vskip6pt
	\noindent
	\textbf{Discussion of asymptotics for general $k$.} 
	It was relatively straightforward to find the asymptotic scaling of these expression for a fixed $k$ and $n\rightarrow\infty$.
	However, we would like to know the uniform asymptotics up to $n^{-2}$ of the full spectrum for pair-counting calculations, keeping careful track of error terms.
	
	An important component of our generating functions are powers of another function \eqref{GFPFP} with an algebratic singularity. There are various approaches to finding their bivariate asymptotics
	\begin{align}
	f_{k,n}=[z^n] f(z)^k.
	\end{align}
	One must be very careful about the error terms, understanding the regimes of $k/n$ that they apply to. Drmota studied this \cite{DRMOTA1994139} and provided a uniform asymptotic scaling with full asymptotic expansion for $k/n\leq 1-\epsilon$; this is a complicated calculation even for the first few error terms. In our case, we know how to apply singularity analysis for bounded $k$. Semi-large powers $k=\Theta(\sqrt{n})$ were studied in \cite{BanderierAiry}, with limit law scaling as a Rayleigh probability density function (for a square root type singularity). They were also studied in a ``Lagrangian framework'', using the saddle point method $k\in [\epsilon n,\, (1-\epsilon)n ]$ (coalescence of saddles, double saddle point). Finally, for the large powers $k=\Theta(n)$, where both $k,n$ goes to infinity, we refer the reader to \cite{GARDY1995189} and also applications of saddle point method in \cite{FlajoletSedgewick}. We leave the analysis of these generating functions for our further studies.

\subsection{The variance of pairs distribution in PF model words}
The BOGF $P\!F\!P^{(d),k}$ holds the complete information about the number of words with a particular pair of neighboring pairs. We can thus use it to calculate the variance of this distribution. Working out what the second derivative of the BOGF does, we can write
\begin{align}
	\langle \#^2 \rangle_k^{(d)} - \left(\langle \#\rangle_k^{(d)}\right)^2
	=	
	\frac{[z^n]\left.\partial^2/\partial u^2 z^{k/2}P\!F\!P^{(d),k}(u,z)\right|_{u=1}}
	{[z^n]P\!F\!P^{(d),k}(1,z)}+\langle \# \rangle_k^{(d)}-\left(\langle \#\rangle_k^{(d)}\right)^2.
\end{align}
A tedious calculation results in 
\begin{align}
	\langle \#^2 \rangle_k^{(d)} - \left(\langle \#\rangle_k^{(d)}\right)^2 = \frac{13n}{32}+\frac{1}{64}+O((k^2+1)n^{-1}). \label{pairvariance}
\end{align}
The distribution (pair-count histogram) is thus $\bigO\left(\sqrt{n}\right)$ wide around the average
value \eqref{AVGfinalConst}.


\section{Variance of the Schmidt coefficients for the qubit PF model}
\label{sec:d2entropyapp}
In this Section, we calculate the variance for the distribution coming from the Schmidt coefficients for the $d=2$ PF model, 
\begin{align}
	p_k = 
	\frac{\binom{2n}{n-k}^2}{\binom{4n}{2n}},
\end{align}
where the $k=0$ term appears once, and all $k\geq 1$ terms appear in the probability distribution twice.
It is required in the proof of 
\eqref{eq:QubitEntropyVar}.

\begin{align}
   \Var(\{p_k\}) 
	&= \sum_{k=-n}^n k^2 p_k=\sum_{k=-n}^n k^2 \frac{{2n\choose n-k}^2}{{4n \choose 2n}} 
	=\frac{1}{\binom{4n}{2n}}\sum_{k=0}^{2n}(k-n)^2\binom{2n}{k}^2 \nonumber\\
&=\frac{1}{\binom{4n}{2n}}\left(\sum_{k=0}^{2n}k^2\binom{2n}{k}^2-2n\sum_{k=0}^{2n}k\binom{2n}{k}^2+n^2\sum_{k=0}^{2n}\binom{2n}{k}^2\right).
\label{d2var1}
\end{align}
We claim that
\begin{align}
    \Var(\{p_k\})=\frac{n^2}{4n-1}. \label{d2var}
\end{align}
To show this, let us express each term in \eqref{d2var1} separately. 
Vandermonde's identity $\sum_{k=0}^n\binom{r}{k}\binom{s}{n-k}=\binom{r+s}{n}$ implies $\sum_{k=0}^{m}\binom{m}{k}=\binom{2m}{m}$ and
\begin{align} 
\sum_{k=0}^{2n}\binom{2n}{k}^2=\binom{4n}{2n}.
\end{align}
Since $\binom{n}{k}=\frac{n}{k}\binom{n-1}{k-1}$ we can remove the $k$ prefactor in the next term in \eqref{d2var1} as follows:
\begin{align}
 \sum_{k=0}^{2n}k\binom{2n}{k}^2
    &= 2n\sum_{k=0}^{2n}\binom{2n-1}{k-1}\binom{2n}{k}
		= 2n\sum_{k=0}^{2n-1}\binom{2n-1}{k}\binom{2n}{k+1} \nonumber\\
    &= 2n\sum_{k=0}^{2n-1}\binom{2n-1}{k}\binom{2n}{2n-1-k}
    = 2n\binom{4n-1}{2n-1}=n\binom{4n}{2n}.
\end{align}
using Vandermonde's identity in the second-to-last step.
Finally, we can remove the $k^2$ as 
\begin{align}
\sum_{k=0}^{2n}k^2\binom{2n}{k}^2&=4n^2 \sum_{k=0}^{2n} \binom{2n-1}{k-1}^2=4n^2 \sum_{k=0}^{2n-1} \binom{2n-1}{k}^2=4n^2\binom{4n-2}{2n-1}=\frac{4n^3}{4n-1}\binom{4n}{2n}.
\end{align}
Plugging these expressions into \eqref{d2var1} gives us \eqref{d2var}.

\end{document}